\documentclass[jmp,graphicx,oneside,a4paper,superscriptaddress,showpacs,10pt]{revtex4-1}

\usepackage{graphicx}
\usepackage{color}
\usepackage[T1]{fontenc}
\usepackage[utf8]{inputenc}
\usepackage{amssymb}
\usepackage{amsthm}
\usepackage{bbm}
\usepackage{mathtools}
\usepackage{float}
\usepackage[hidelinks]{hyperref}

\usepackage[a4paper,margin=2cm,footskip=2cm]{geometry}
\geometry{bindingoffset=0cm}
\pdfminorversion=4 

\newtheorem{thm}{Theorem}

\newtheorem{lem}[thm]{Lemma}
\newtheorem{prop}[thm]{Proposition}

\def\tr{\operatorname{tr}}
\def\idty{{\leavevmode\rm 1\mkern -5.4mu I}} 

\def\Rl{{\mathbb R}}
\def\Cx{{\mathbb C}}

\def\Nl{{\mathbb N}}

\def\norm #1{\Vert #1\Vert}

\def\brAket#1#2{\langle #1\vert#2\rangle}
\def\brAAket#1#2#3{\langle#1\vert#2\vert#3\rangle}
\def\bra#1{{\langle#1\vert}}
\def\ket #1{\vert#1\rangle}
\def\ketbra #1#2{{\vert#1\rangle\langle#2\vert}}
\def\kettbra#1{\ketbra{#1}{#1}}

\def\tr{\mathop{\rm tr}\nolimits}
\def\abs#1{\vert#1\vert}

\def\Cases#1{\left\lbrace\begin{array}{cl}#1\end{array}\right.}
\def\Order{{\mathcal O}}

\let\veps\varepsilon

\def\inv{^{-1}}
\def\conv{\mathop{\mathrm{conv}}}
\def\re{\Re e}
\def\im{\Im m}


\def\CC{{\mathcal C}}
\def\HH{{\mathcal H}}

\def\wigner{{\mathcal W}}

\def\var#1#2{\Delta^2_{#1}(#2)} 
\def\varpow#1#2#3{\Delta^#3_{#1}(#2)} 

\def\ve{{\mathbf e}}
\def\vL{{\mathbf L}}
\def\eL{{\ve{\cdot}\vL}}
\def\vlambda{{\boldsymbol\lambda}}
\def\veta{{\boldsymbol\eta}}
\def\vx{{\mathbf x}}
\def\vk{{\mathbf k}}
\def\vkap{{\mathbf\kappa}}
\def\north{{\boldsymbol\nu}}
\def\Dmax{D_{\rm max}}
\def\Delmax{\Delta_{\rm max}}
\def\minix{_{\rm min}} 

\def\essup{\mathop{\rm ess\,sup}}
\def\Eessup{E\!{\rm -}\!\essup}
\def\SU#1{{\rm SU(#1)}}

\bibliographystyle{abbrv}


\begin{document}
	\title{Uncertainty Relations for Angular Momentum}
	\author{Lars Dammeier}
	\email{lars.dammeier@itp.uni-hannover.de}
	\author{Ren\'e Schwonnek}
	\email{rene.schwonnek@itp.uni-hannover.de}
	\author{Reinhard F. Werner}
	\email{reinhard.werner@itp.uni-hannover.de}
	\affiliation{Institut f\"ur Theoretische Physik, Leibniz Universit\"at, Hannover, Germany}
	\date{\today}
\begin{abstract}
In this work we study various notions of uncertainty for angular momentum in the spin-$s$ representation of $SU(2)$.
		We characterize the ``uncertainty regions'' given by all vectors, whose components are specified by the variances of the
		three angular momentum components. A basic feature of this set is a lower bound for the sum of the three
		variances. We give a  method for obtaining optimal lower bounds for uncertainty regions for general operator
		triples, and evaluate these for small $s$.  Further lower bounds are derived by generalizing the technique by
		which Robertson obtained his state-dependent lower bound. These are optimal for large $s$, since they are
		saturated by states taken from the Holstein-Primakoff approximation. We show that, for all $s$, all variances are
		consistent with the so-called vector model, i.e., they can also be realized by a classical probability measure
		on a sphere of radius $\sqrt{s(s+1)}$.
		Entropic uncertainty relations can be discussed similarly, but are minimized by different states than those minimizing the variances for small $s$. For large $s$ the Maassen-Uffink bound becomes sharp and we explicitly describe the extremalizing states.

		Measurement uncertainty, as recently discussed by Busch, Lahti and Werner for position and momentum, is
		introduced and a generalized observable (POVM) which minimizes the worst case measurement uncertainty of all
		angular momentum components is explicitly determined, along with the minimal uncertainty. The output vectors for the optimal measurement all have the same length $r(s)$, where $r(s)/s\to1$ as $s\to\infty$.
	\end{abstract}

	\pacs{03.65.Ta,02.40.Ft,03.65.Aa}
	\maketitle


\section{Introduction}
The textbook literature on quantum mechanics seems to agree that the uncertainty relations for angular momentum, and indeed for any pair of quantum observables $A,B$ should be given by Robertson's \cite{Robertson} inequality
\begin{equation}\label{Robson}
	\var\rho A \var\rho B\geq \frac14(\tr\rho\,i[A,B])^2,
\end{equation}
valid for any density operator $\rho$, with $\var\rho A$ denoting the variance of the outcomes of a measurement of $A$ on the state $\rho$. Perhaps the main reason for the ubiquity of this relation in textbooks is that it is such a convenient intermediate step to the proof of uncertainty relations for position and momentum. In that case the right hand side is $\hbar^2/4$, independently of the state $\rho$. For any pair $A,B$ other than a canonical pair, however, the relation \eqref{Robson} makes a much weaker statement, requiring some prior information about the state. This begs the question: When and with what bounds it is true that
\begin{quote}
	{\it [Preparation Uncertainty:]\\ One cannot choose a state $\rho$ so that $\var\rho A$ and $\var\rho B$ simultaneously become arbitrarily small.\/}
\end{quote}
Robertson's relation supports no such conclusion, but on the other hand such a statement does hold in many situations. In fact, in a finite dimensional context it is true whenever $A$ and $B$ do not have a common eigenvector. In this paper we will provide optimal bounds for angular momentum components, establishing the methods for deriving optimal bounds in the general case along the way.

The second reason that \eqref{Robson} is unsatisfactory is that it addresses only the preparation side of uncertainty, in the sense loosely described in the italicized sentence above. However, there is always also a measurement aspect to uncertainty, for which Heisenberg's $\gamma$-ray microscope \cite{Heisenberg1927} is a paradigm. The error disturbance tradeoff would be stated as
\begin{quote}
	{\it [Error-Disturbance Uncertainty:]\\  An approximate measurement of $A$ of accuracy $\Delta A$ disturbs the system in such a way that from the post-measurement state and the measurement result for $A$ the distribution for observable $B$ can only be inferred with accuracy $\Delta B$, where $\Delta A$ and $\Delta B$ cannot be simultaneously arbitrarily small.\/}
\end{quote}
It is often easier to think of the whole experiment as a {\it joint measurement} of $A$ and $B$, and state relations of the kind:
\begin{quote}
	{\it [Measurement Uncertainty:]\\  For any measurement device with both an $A$- type and a $B$-type output, the marginals will have worst case error $\Delta A$, $\Delta B$ with respect to ideal measurements of $A$ and $B$, satisfying a tradeoff relation.}
\end{quote}
Again, generic observables and angular momenta satisfy non-trivial relations of this kind. Errors $\Delta A=\Delta B=0$ can occur only if $A$ and $B$ commute, i.e., under an even more stringent condition than for preparation uncertainty. In this paper we will provide some sharp measurement uncertainty relations for angular momentum, establishing along the way some methods which may be of interest in more general cases.

There is a third reason that one should not be satisfied with \eqref{Robson} with $A=L_1,\ B=L_2$: It involves only two of the three components of angular momentum. But
there is no reason tradeoff-relations as described above should not be stated for more than two observables. For angular momentum this seems especially natural. Moreover, it seems natural to state relations for all components simultaneously, i.e., not only for the three components along the axes of an arbitrarily chosen Cartesian reference frame, but for the angular momenta along arbitrary rotation axes, restoring the rotational symmetry of the problem.

Indeed the idea that uncertainty should involve just pairs of observables can be traced to Bohr's habit of expressing complementarity as a relation between ``opposite'' aspects, like `in vitro' and `in vivo' biology. This dualistic preference had more to do with his philosophy than with the actual structure of quantum mechanics.  Other founding fathers of quantum mechanics did not share this preference. As Wigner said in an interview \cite{Wigner63} in 1963:
\begin{quote}
	I always felt also that this duality is not solved and in this I may have been under Johnny’s [John von Neumann] influence, who said, ``Well, there are many things which do not commute and you can easily find three operators which do not commute.'' I also was very much under the influence of the spin where you have three variables which are entirely, so to speak, symmetric within themselves and clearly show that it isn’t two that are complementary; and I still don’t feel that this duality is a terribly significant and striking property of the concepts.
\end{quote}
In this spirit, an uncertainty relation for triples of canonical operators was recently proposed and proved \cite{Weigert}, and further generalizations are clearly possible. However, we will stick to angular momentum in this paper, and particularly seek to establish relations which do not break rotation invariance.

Our paper responds to an increasing interest in quantitative uncertainty relations. This interest is connected to an increasing number of experiments reaching the uncertainty dominated regime, so that that rather than qualitative or order-of-magnitude statements one is more interested in the precise location of the quantum limits. Measurement uncertainty was made rigorous in \cite{BLW1,BLW2,SchoRe}. There is also a controversial \cite{BLW2013a,BLW2014} state-dependent version \cite{Ozawa2004b}. That adequate uncertainty relations are sometimes better stated in terms of the sum of variances rather than their product has been noted repeatedly \cite{hof,huang,Maccone}. There has been some renewed interest also in the uncertainty between angular momentum and angular position \cite{Franke2004}, angular momentum of certain states \cite{rivas} and other non-standard complementary pairs \cite{ET}.

\subsection{Setting and notation}
In physics angular momentum appears as orbital or as spin angular momentum. Our theory applies to both, but it must be noted that the bounds obtained do depend on the quantum number for $\vL^2$. For example, there are states with vanishing orbital angular momentum uncertainties (precisely the rotation invariant ones, i.e., $s=0$) but none for a $s=\frac12$ degree of freedom. Therefore, one first has to decompose the given space into irreducible angular momentum components (integer or half integer), and then use the results for the appropriate $s$.
Hence we will consider throughout a system of spin $s$, with $s=1/2,1,3/2,...$ in its $(2s+1)$-dimensional Hilbert space $\HH=\Cx^{2s+1}$. The three angular momentum components will be denoted by $L_k$, $k=1,2,3$, and the component along a unit vector $\ve\in\Rl^3$ by $\eL$. We denote by $\ket m$ the eigenvectors of $L_3$, so that $L_3\ket m=m\ket m$ where $-s \leq m \leq s$ and, with $L_\pm=L_1\pm iL_2$,
\begin{equation}\label{Lpm}
L_\pm\ket m=\sqrt{s(s+1)-m(m\pm1)}\ket{m\pm1}.
\end{equation}
Rotation matrices, whether they are considered as elements of SO(3) or of SU(2), will typically be denoted by $R$, the corresponding matrix in the spin $s$ representation by $U_R$, and normalized Haar measure on SO(3) or SU(2) by $dR$. We will always set $\hbar=1$.
Observables are in general always allowed to be normalized positive operator valued measures (POVMs), with a typical letter $F$. For a self-adjoint operator $A$ the spectral measure is an observable in this sense, denoted by $E_A$. For a component of angular momentum, i.e., $A=\eL$ we write $E_\ve$ for short. For the unit vectors $\ve_k$ along the axes we further abbreviate this to $E_k$.

For the variance of the probability distribution obtained by measuring $F$ on a state (density operator) $\rho$ we write
\begin{equation}\label{variance}
\var\rho F=\min_\xi\int (x-\xi)^2\tr\rho F(dx).
\end{equation}
This minimum is taken over a quadratic expression in $\xi$, and it is attained when $\xi=\int x\tr\rho F(dx)$ is the mean value of the distribution. The most familiar case is that of the spectral measure for an operator $A$, in which case we abbreviate the variance by $\var\rho{A}$. Then the second moment $\int x^2\tr\rho F(dx)=A^2$ can also be expressed by $A$ and we get
\begin{equation}\label{vSpect}
\var\rho{A}=\var\rho{E_A}=\tr(\rho A^2)-\tr(\rho A)^2.
\end{equation}

We say that a unit vector $\ket\phi\in\HH$ is a {\it maximal weight vector}, if for some direction $\ve \in \Rl^3$ it satisfies $\eL \ket\phi=s\ket\phi$. This is the same as saying that, for some rotation $R\in$SU(2), $\ket\phi=U_R\ket{s}$ up to a phase. For such a vector we call $\rho=\kettbra\phi$ a {\it spin coherent state}. These states are candidates for states of minimal uncertainty.

\subsection{Summary of Main Results}
We now describe the structure of our paper and the main results.

\noindent{\it Sect.~\ref{sec:prep}: Preparation uncertainty}. ---\
The basic object of study is the variance $\var\rho{\eL}$ of the angular momentum in direction $\ve$ as a function of the unit vector $\ve$, especially properties which hold for an arbitrary state $\rho$.
After clarifying some general features and explicitly solving the two cases $s\leq1$ (Sect.~\ref{sec:lows}), we look at the traditional setting of just two components $L_1,L_2$. The set of uncertainty pairs $\bigl(\var\rho{L_1},\var\rho{L_2}\bigr)$ is studied, and  the fact that not both variances can be small is found to be well expressed by a lower bound not on the product but on the sum of the variances. We compute numerically (and exactly up to $s=3/2$) the best constants in
\begin{equation}\label{2compI}
\var\rho{L_1}+\var\rho{L_3}\geq c_2(s)
\end{equation}
and find that they asymptotically behave like $c_2(s)\sim s^{2/3}$ (Sect.~\ref{sec:twocomp}). For three components the uncertainty region is also studied in some detail. A prominent feature is again given by a linear bound \cite{hof}
\begin{equation}\label{3compI}
\var\rho{L_1}+\var\rho{L_2}+\var\rho{L_3}\geq s,
\end{equation}
which is very easy to prove (see Sect.~\ref{sec:Pbasics}, \eqref{triSum}).

Turning to features of the whole function $\ve\mapsto\var\rho\eL$, we show in Sect.~\ref{sec:powerMean} that for any $\rho$ there is at least one direction $\ve$ such that $\var\rho\eL\geq s/2$, i.e.,
\begin{equation}\label{icompI}
\max_\ve \var\rho\eL \geq \frac s2.
\end{equation}
This bound is optimal, since it is saturated by spin coherent states. We generalize from the maximum (seen as the $L^\infty$-norm) to all $L^p$-norms (Prop.~\ref{icomp}).

For large $s$ Eq.~\eqref{3compI} suggests the scaling $\sim 1/s$ (Sect.~\ref{sec:asymptotic}). Indeed, the triples $(\var\rho{L_1}/s,\var\rho{L_2}/s,\var\rho{L_3}/s)$ converge as $s\to\infty$ (Theorem~\ref{thm:largeS}). The lower bound on the limit set is obtained by a generalization of Robertson's method for proving \eqref{Robson} Sect.~\ref{sec:Robson}, which for finite $s$ is \eqref{RobsonNorm}
\begin{equation}\label{RobsonVars}
\var\rho{L_1}\Bigl(\var\rho{L_2}+\var\rho{L_3}\Bigr)\geq \frac14\Bigl(s(s+1)-\var\rho{L_1}-\var\rho{L_2}-\var\rho{L_3}\Bigr),
\end{equation}
where the components are ordered so that $\var\rho{L_1}\geq\var\rho{L_2}\geq\var\rho{L_3}$.
The upper bound in Theorem~\ref{thm:largeS} is provided by a family of states suggested by the Holstein-Primakoff approximation.

\noindent{\it Sect.~\ref{sec:vecMod}: Vector model and moment problems}. ---\
We revisit the so-called vector model of angular momentum, a classical model which is still found in some textbooks. We show that it can correctly portray the moments up to second order (i.e., means and variances) of the angular momentum observables, but fails on higher moments and, of course, on correlations.

\noindent{\it Sect.~\ref{sec:entropic}: Entropic uncertainty relations}. ---\
We discuss entropic uncertainty relations only very briefly. We point out that the criteria ``variance'' and ``entropy'' may disagree on which of two distributions is ``more sharply concentrated''. This effect is illustrated by the uncertainty diagrams for $s=1$. We show also that the general Maassen-Uffink bound \cite{muff} while suboptimal for $s=1$,  becomes sharp for $s\to\infty$, and determine a family of states saturating it.

\noindent{\it Sect.~\ref{sec:measure}: Measurement uncertainty}. ---\
We consider two measures for the deviation of an approximate observable from an ideal reference, called metric error and calibration error. We then discuss uncertainty relations for the joint measurement of {\it all} angular momentum components. The output of such an observable is an angular momentum  3-vector $\veta$, from which one can obtain a measurement of the $\ve$-component (for any unit vector $\ve$) simply by taking $\ve\cdot\veta$ as the output. Such a marginal observable can in turn be compared with the quantum observable $\eL$. The uncertainty relation in this case gives a lower bound on the  error in the worst case with respect to $\ve$. Our main result (Theorem~\ref{thm:mur}) is a determination of the optimal bound, and an observable saturating it. It turns out that the optimal observable is covariant with respect to rotations, and this implies that it simultaneously minimizes the maximal metric error and the maximal calibration error. All the output vectors have the same length $r\minix(s)$, which depends in a non-trivial way on $s$ but is close to $s$ for large $s$.


\section{Preparation Uncertainty}\label{sec:prep}
In this section we consider the preparation uncertainty, i.e., a property of a given state $\rho$. For every unit vector $\ve\in\Rl^3$ we can form the variance of the angular momentum component $\eL$, and hence study the function
\begin{equation}\label{vE}
	v(\ve)=\var\rho{\eL}=\tr\Bigr(\rho(\eL)^2\Bigl)-\Bigr(\tr\rho\,\eL\Bigl)^2
\end{equation}
on the unit sphere. For the purposes of this section, this function summarizes all the uncertainty properties of the state $\rho$, and all results in this section are statements about properties of this function, which are valid for all $\rho$. To visualize the function $v$, we can use a three-dimensional radial plot, i.e., the surface containing all vectors $v(\ve)\ve$, as $\ve$ runs over all unit vectors.  A typical radial plot is shown in FIG.~\ref{fig:radialPU}. Often we are also interested in the components with respect to some Cartesian reference frame. In this case the best visualization is an {\it uncertainty diagram}, which represents the possible pairs/triples etc.\ of variances in the same state. In our case this will be the set of pairs $(v(\ve_1),v(\ve_2))$, or triples $(v(\ve_1),v(\ve_2),v(\ve_3))$. The diagrams for $s=1$ are shown in FIG.~\ref{fig:spin1mon}. In this diagram it can be seen that the uncertainty region is not convex in general. Since we are only interested in lower bounds, we therefore always take the {\it monotone closure} of the uncertainty region, i.e., we also include with every point the whole quadrant/octant of points in which one or more of the coordinates increase. This is described in more detail in Sect.~\ref{sec:method}.
\begin{figure}[ht]
	\includegraphics[width=0.5\textwidth]{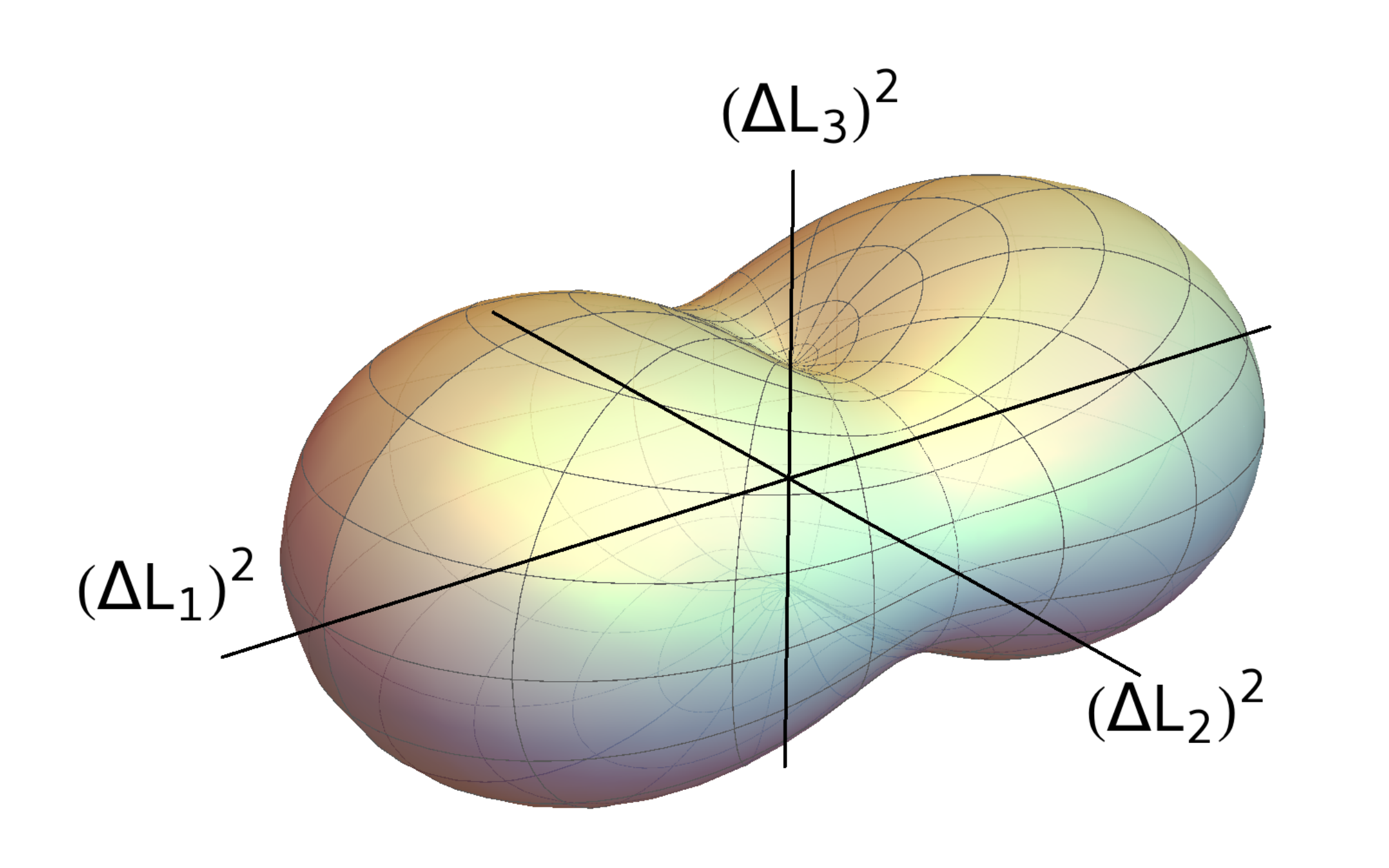}
	\caption{The function $v(\mathbf{e})$ from equation \eqref{vLambda}, where $\Lambda$ is diagonal. }
	\label{fig:radialPU}
\end{figure}

It turns out that after a rotation to suitable principal axes (which has already been carried out in FIG.~\ref{fig:radialPU}), the function $v$ depends only on three real parameters $\mu_1,\mu_2,\mu_3$. To see this we introduce the $3\times3$-matrix $\Lambda=\Lambda(\rho)$ by
\begin{eqnarray}\label{vLambda}
	v(\ve)&=&\sum_{jk}e_je_k\ \Lambda_{jk}(\rho) \qquad \mbox{with}\\
	\Lambda_{jk}(\rho)&=& \re\tr(\rho L_jL_k)-\lambda_j\lambda_k \label{Lambda}\\
	\lambda_{j}(\rho)&=&\tr(\rho L_j).      \label{lambda}
\end{eqnarray}
Since the $L_k$ transform as a vector operator (i.e., with respect to the spin-1 representation of SU(2)) we see that by choice of an appropriate coordinate basis in $\Rl^3$ we can diagonalize $\Lambda$, i.e., we can choose
$\Lambda_{jk}(\rho)=\mu_j\delta_{jk}$ with $\mu_1\geq\mu_2\geq\mu_3\geq0$.

The eigenvalues $(\mu_1,\mu_2,\mu_3)$ of any matrix $\Lambda(\rho)$ are also a possible triple of variances, namely for a suitably rotated state. In fact, we can find the uncertainty triples for {\it all} rotated versions of $\rho$ quite easily: When $R$ is the rotation matrix taking the eigenbasis of $\Lambda$ to the basis $\ve_1,\ve_2,\ve_3$ under consideration, then
\begin{equation}\label{Birkhoff}
	v(\ve_j)=\sum_kR_{jk}^2\mu_k.
\end{equation}
Now the squared rotation matrix is doubly stochastic, so by Birkhoff's theorem it is a convex combination of permutation matrices. We therefore find the variance triple in basis $\ve$ in the convex hull of the six points, arising from the triple of $\mu_k$ by permutation. These six points lie in a plane orthogonal to the vector $(1,1,1)$, so they form a hexagon (see FIG.~\ref{fig:hexagon}), which degenerates into a triangle if two of the $\mu_k$ are equal. One can easily check that the full hexagon is attained by squared rotation matrices.
\begin{figure}[ht]
	\includegraphics[width=0.5\textwidth]{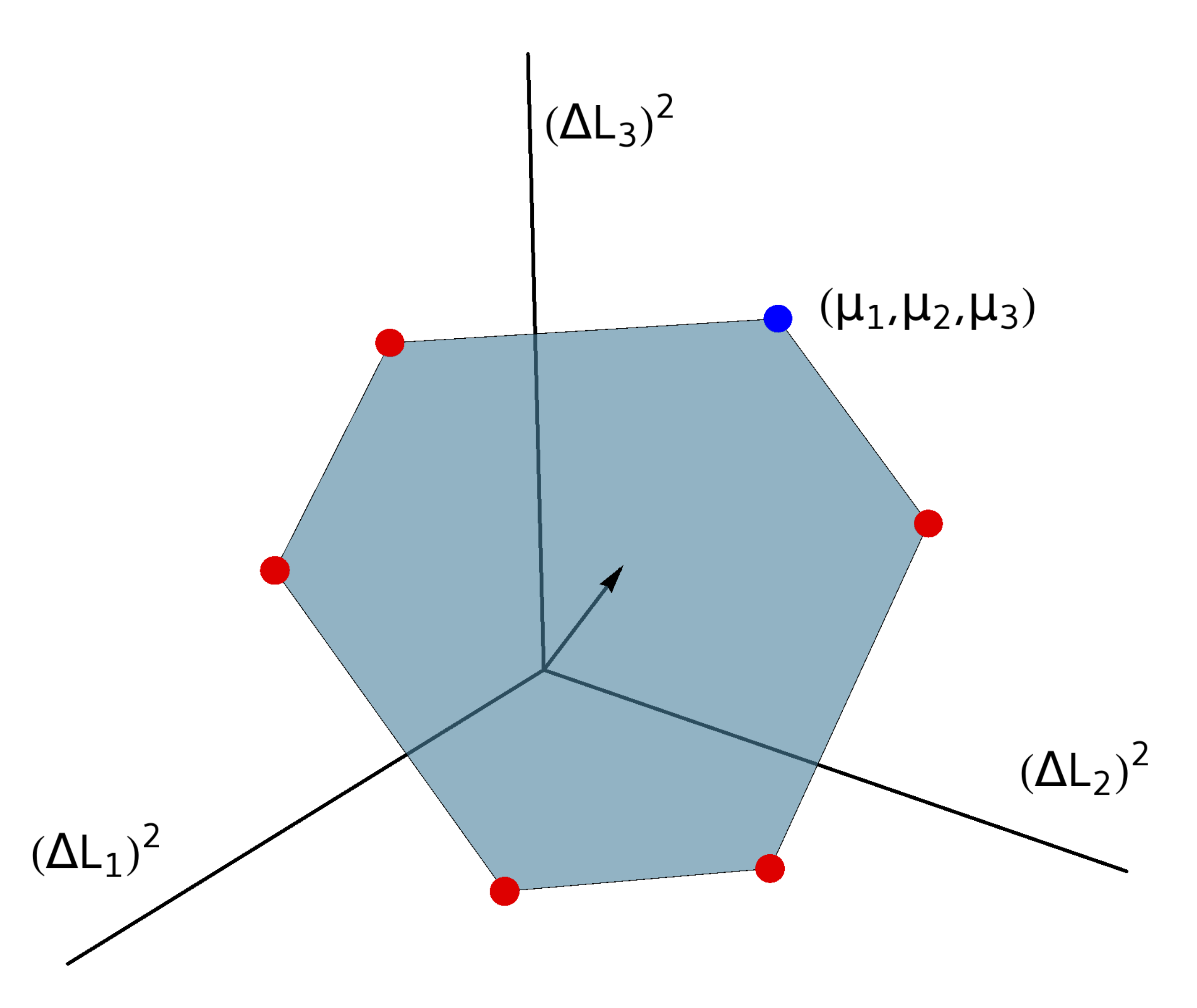}
	\caption{The orbit of a point under permutations of the coordinates, and its convex hull.  }
	\label{fig:hexagon}
\end{figure}

\subsection{Basic bounds}\label{sec:Pbasics}
For an $L_3$-eigenstate $\rho=\kettbra m$ we find $\lambda=(0,0,m)$ and
\begin{equation}\label{LambdaEigen}
\Lambda(\kettbra m)=\frac12(s(s+1)-m^2)\begin{pmatrix}1& 0 & 0 \\ 0 & 1 & 0 \\ 0 & 0&0\end{pmatrix}
\geq\frac s2\,\begin{pmatrix}1& 0 & 0 \\ 0 & 1 & 0 \\ 0 & 0&0\end{pmatrix},
\end{equation}
which re-inforces that, for eigenstates, the variances in all directions are smallest for the maximal weight $m=s$. Maximal variance is attained for an equal weight mixture $\rho=1/2 \bigl(\kettbra s+\kettbra{-s}\bigr)$,
which has $L_3$-variance $s^2$. Hence, for all $\ve$:
\begin{equation}\label{basicc}
	0\leq v(\ve)\leq s^2.
\end{equation}

The average $\overline{v(\ve)}$ over $\ve$ with respect to the surface measure on the sphere, or equivalently the average of $v(R\ve)$ over Haar-random rotations $R$, is readily computed from \eqref{vLambda},
since the average of $e_je_k$ over the unit sphere is just $\delta_{jk}/3$. Therefore, from \eqref{vLambda},
\begin{equation}\label{vaverage}
	\overline{v(\ve)}=\frac13\tr\Lambda(\rho)= \frac13\bigl(s(s+1)-\abs\vlambda^2\bigr)\geq \frac13\bigl(s(s+1)-s^2\bigr)\geq \frac s3.
\end{equation}
In the same simple way we can get an inequality for the variances along the three coordinate directions  of a Cartesian coordinate system:
\begin{eqnarray}\label{triSum}
	\sum_{k=1}^3v(\ve_k)=\tr\Lambda(\rho)\geq s.
\end{eqnarray}
In both cases equality holds precisely for $\abs\vlambda=s$, i.e., if $\rho$ is an eigenstate of one of the operators $\eL$ for the maximum eigenvalue $m=s$.

\subsection{Special features for $s=1/2$ and $s=1$}\label{sec:lows}
For $s=1/2$, it happens that $L_j$ and $L_k$ (i.e., up to a factor the Pauli matrices) anticommute for different $j,k$, so that
\begin{equation}\label{LambdaHalf}
	s=\frac12:\qquad \Lambda_{jk}(\rho)=\frac14\delta_{jk}-\lambda_j\lambda_k.
\end{equation}
The eigenvalues are $(\mu_1,\mu_2,\mu_3)=1/4 (1,1,1-4 \abs{\lambda}^2)$. Of course, pure states are characterized by $\abs\vlambda=\frac 1 2$. The uncertainties are $1/4-\lambda_j^2$, and so the uncertainty region is described by a triangle.

\begin{figure}[t]
	\includegraphics[width=0.4\textwidth]{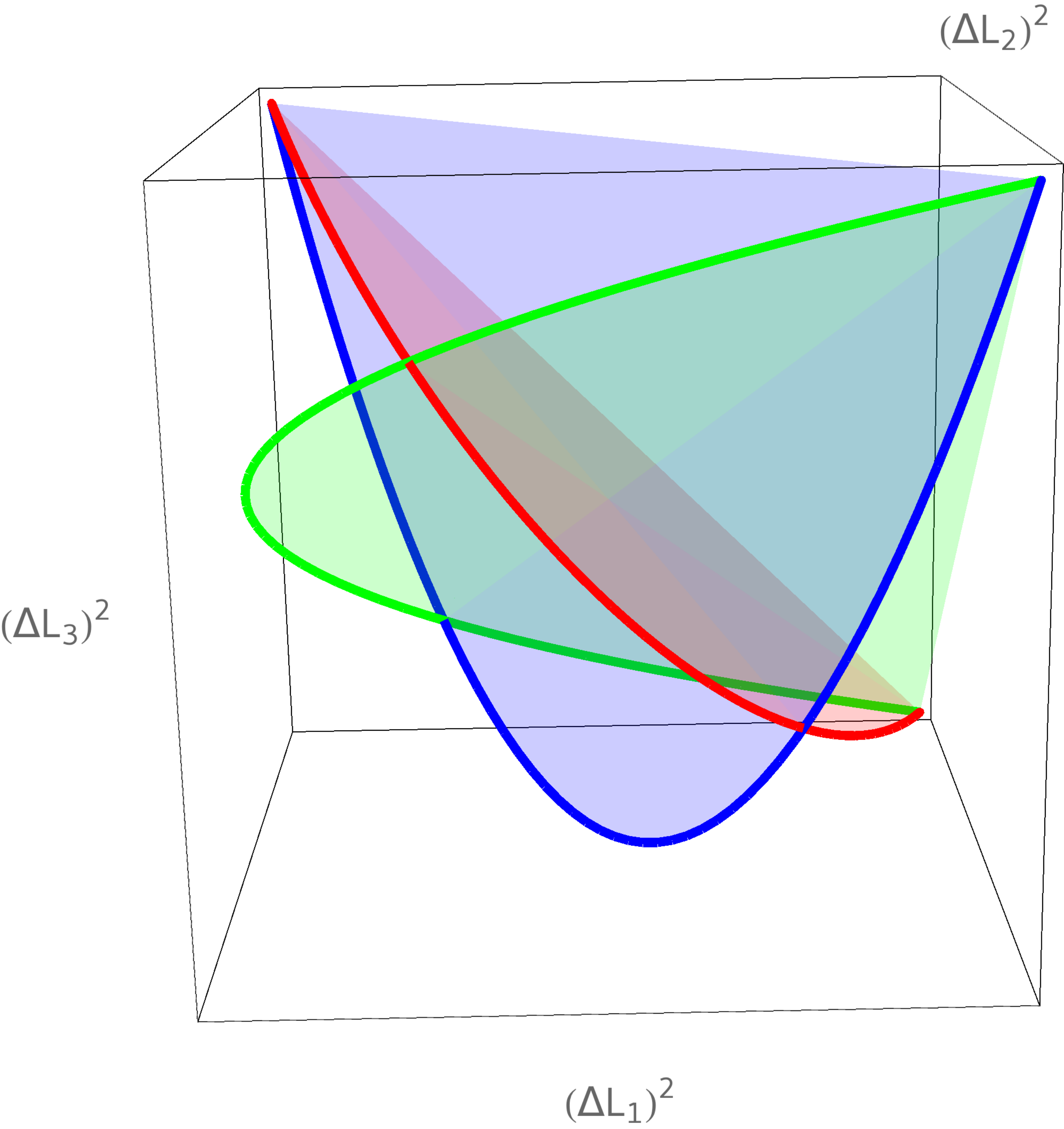}
	\includegraphics[width=0.4\textwidth]{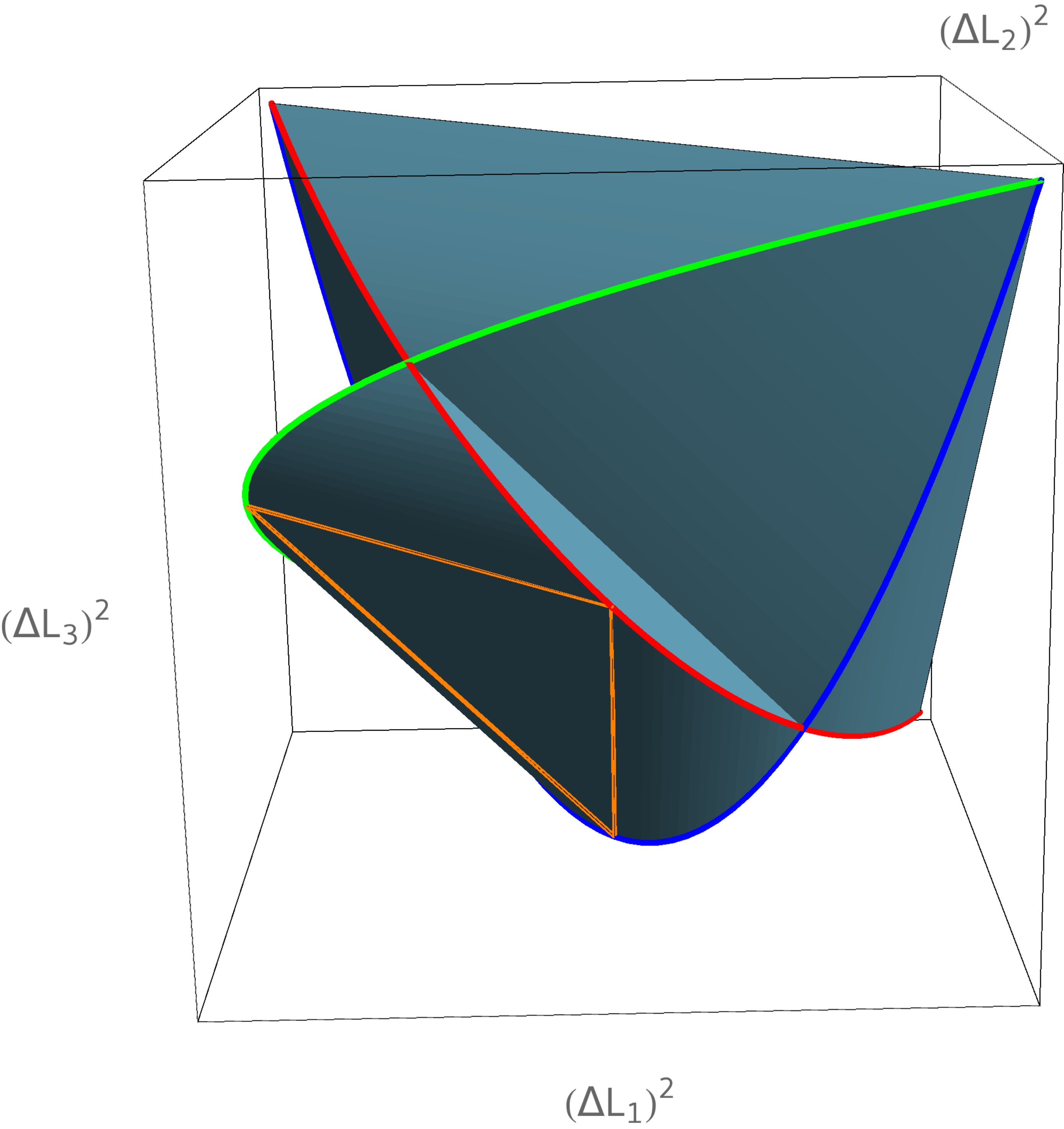}
	\caption{Pure state uncertainties for $s=1$: Left panel: Parabola after \eqref{spin1psi} with its permutations. Adding to every point the hexagon generated by its permutation orbit generates the solid shown in the right panel.
		This is precisely the set of all variance triples for pure states. Its monotone closure is shown in FIG.~\ref{fig:spin1mon}.  }
	\label{fig:parabolics}
\end{figure}

The case $s=1$ is still special because the $3+6$ operators $L_k$ and $(L_jL_k+L_kL_j)/2$ form a basis of the operators on $\Cx^3$. Therefore, $\rho$ can be reconstructed from $(\vlambda,\Lambda)$ and, in particular, the set of pure $\rho$ can be characterized in terms of conditions on the eigenvalues $\mu_k$ and $\lambda_k$. In order to analyze these conditions, let us  take the representation of the group SU(2) by real orthogonal matrices. Now consider a vector $\psi\in\Cx^3$, which we can split into $\psi=\re\psi+i\,\im\psi$ with $\re\psi,\im\psi\in\Rl^3$. Note that the real continuous function $t\mapsto\left\langle\re\bigl(e^{it}\psi\bigr),\im\bigl(e^{it}\psi\bigr)\right\rangle$ changes sign between $t=0$ and $t=\pi/2$, so we can choose a complex phase for $\psi$ to make $\re\psi$ and $\im\psi$ orthogonal. Moreover, we can apply a rotation, so that $\re\psi$ and $\im\psi$ are along the first two coordinate axes. Hence, up to a rotation, we have
\begin{equation}\label{spin1psi}
\psi=\begin{pmatrix}\cos t \\ i\sin t \\ 0\end{pmatrix} \qquad
\vlambda=\begin{pmatrix}0 \\ 0 \\ 2\sin t\cos t\end{pmatrix}\qquad
\Lambda=\begin{pmatrix}\tau & 0 & 0 \\ 0 & 1-\tau & 0  \\ 0 & 0 & 1-4\tau(1-\tau)\end{pmatrix},
\end{equation}
where $t\in[-\pi,\pi]$ or $\tau=(\sin t)^2\in[0,1]$. In the three-component diagram the curve parameterized by $\tau$ is a parabola lying in a diagonal plane. This parabola, and the two copies arising by coordinate permutation are shown in FIG.~\ref{fig:parabolics}, as well as the body of uncertainty triples of all pure states, which arises by adding to each point on the parabola the hexagon formed by its permutation orbit. A paper cutout model of this solid is provided as a supplement \cite{papermodel}.
\begin{figure}[t]
	\includegraphics[width=0.4\textwidth]{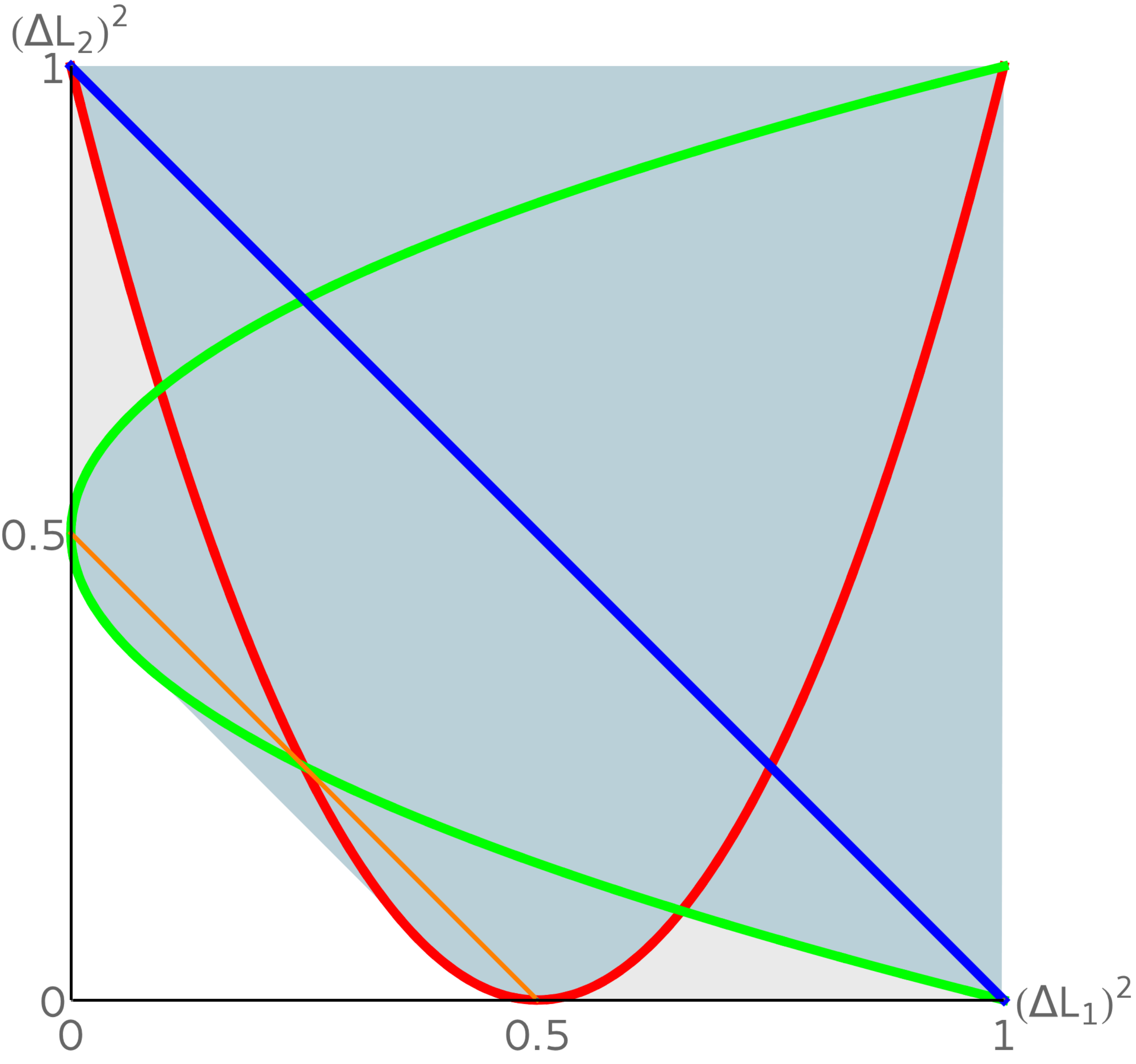}
	\includegraphics[width=0.4\textwidth]{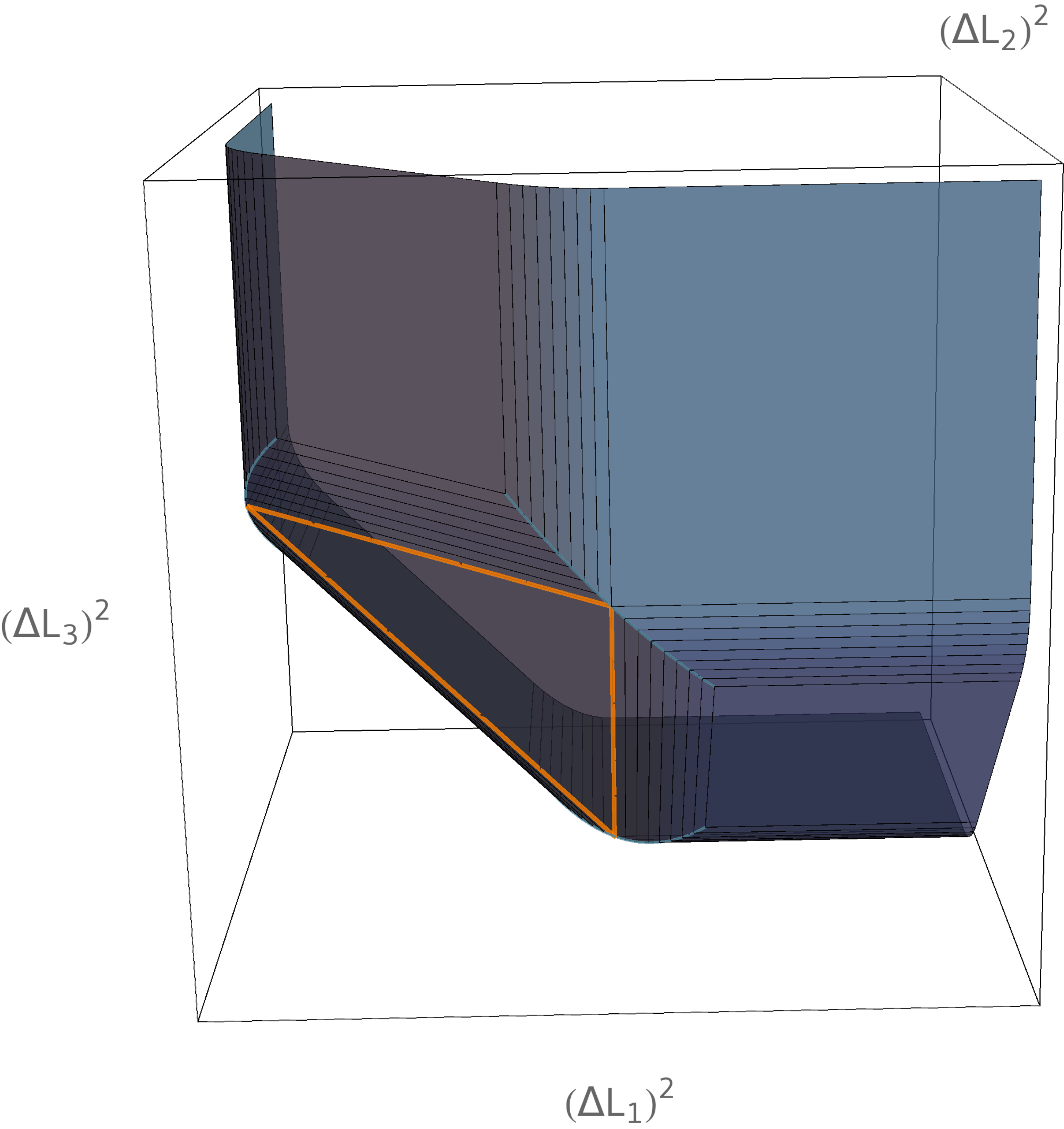}
	\caption{The monotone closure of the uncertainty region of spin 1. Since this turns out to be convex, it is equal also to the lower convex hull (see text).
Left panel: for two orthogonal spin components. The light gray area belongs to the monotone closure, but these points cannot be realized as uncertainty pairs.
The parabolas outline the shape (compare also FIG.~\ref{fig:parabolics}) the orange lines correspond to coherent states. Right panel: the analogue for three spin components. Projecting this body onto one coordinate plane gives the shape shown in the left panel.  }
	\label{fig:spin1mon}
\end{figure}

\subsection{General minimization method}\label{sec:method}
Consider now, a little more generally, any collection of hermitian operators $A_1,\ldots,A_n$. We can then form, for any state $\rho$, the variance $n$-tuple $(\var\rho{A_1},\ldots,\var\rho{A_n})$, and ask which region $\Omega$ in $\Rl^n$ is filled, when $\rho$ runs over the whole state space. We call $\Omega$ the {\it uncertainty region} of the operator tuple. Typically, this is not a convex set, because $\var\rho A$ contains a term quadratic in $\rho$, which consequently does not respect convex mixtures. $\Omega$ will be simply connected (as a continuous image of the state space), but beyond that there are few general facts. It can happen that starting from a point in the uncertainty region we can leave the region by increasing one of the coordinates, i.e., the region encodes upper bounds on variances as well as lower bounds. This is clearly not relevant to the theme of uncertainty relations, where we ask for universal lower bounds only. We can therefore consider the {\it monotone closure} of the uncertainty region, by including all points with larger uncertainties, i.e.,
\begin{equation}\label{monotoneClos}
	\Omega^+=\bigl\{(x_1,\ldots,x_n)\bigm|\exists \rho \ \forall \alpha\ x_\alpha\geq \var\rho {A_\alpha}\bigr\}.
\end{equation}
This is still not necessarily a convex set. We will denote the convex hull of $\Omega^+$ by $\Omega^\vee$ and call it the {\it lower convex hull} of $\Omega$ (see FIG.~\ref{fig:spin1mon}).
It is this set which has an efficient characterization. Indeed, as a closed convex set it is the intersection of all half spaces containing it, and the monotonicity condition restricts these half spaces to those whose normal vector $w$ has all components non-negative. In other words,
\begin{eqnarray}\label{OmVee}
	\Omega^\vee&=&\bigl\{(x_1,\ldots,x_n)\bigm|\forall w_\alpha\geq0:\  \sum_\alpha w_\alpha x_\alpha\geq m(w_1,\ldots,w_n)\bigr\},\\
	m(w_1,\ldots,w_n)&=&\inf_\rho\ \sum_\alpha w_\alpha\var\rho{A_\alpha} \nonumber\\
	&=&\inf_\rho\ \inf_{a_1,\ldots,a_n}\ \sum_\alpha w_\alpha \tr\rho\bigl(A_\alpha-a_\alpha\idty\bigr)^2
\end{eqnarray}
In this double infimum we can exchange the order, leading to two kinds of operations: With fixed $\rho$ the global minimum over the $a_\alpha$ is obviously at $a_\alpha=\tr\rho A_\alpha$. On the other hand, with fixed $a_\alpha$ the global minimum over $\rho$ is computed by finding the ground state of the positive semidefinite operator $H(a)=\sum_\alpha w_\alpha\bigl(A_\alpha-a_\alpha\idty\bigr)^2$. An efficient algorithm is therefore obtained by alternating between these two steps. The upper estimates on $m$ obtained in this way are non-increasing and in practice converge quite well, and independently of the starting value. However, we do not have a theorem to this effect. An analytic consequence of this algorithm (independent of convergence) is that we can restrict the infimum to pure states, since this is sufficient to get the ground state energies.

The algorithm is then run for a suitable set of tuples $(w_1,\ldots,w_n)$, so that for each run, one obtains a tangent plane to $\Omega^\vee$ but also the state $\rho$ and with it, the tuple of variances in $\Omega$.
We illustrate the results in FIG.~\ref{fig:spin1mon} for the case of spin 1, and the operator tuples $(L_1,L_2)$ and $(L_1,L_2,L_3)$, respectively. For low spin these diagrams can be determined analytically (see the next subsection). The most prominent feature of two-component diagram is the symmetric linear bound, which depends on $s$ and is determined in subsection~\ref{sec:twocomp}.

\subsection{The linear two-component bound}\label{sec:twocomp}
For every $s$, let $c_2(s)$ be the best constant in the inequality
\begin{equation}\label{2comp}
	\var\rho{L_1}+\var\rho{L_3}\geq c_2(s).
\end{equation}
For $s=1/2,1$ it is readily computed from the eigenvalues of $\Lambda$ given in Sect.~\ref{sec:lows}. For arbitrary $s$ we can use a slightly simplified version of the variational principle \eqref{OmVee}.
We have $w_1=w_3=1$, and can assume that $a_1=0$ by rotation invariance around the 2-axis. Thus
\begin{equation}\label{c2var}
	c_2(s)=\inf_\phi\inf_a \bigr\langle\phi|\ s(s+1)\idty-L_2^2-2a L_3+a^2\idty\bigm|\phi\bigr\rangle,
\end{equation}
where the first infimum runs over all pure states (for fixed $a$ a ground state problem) and $a$ over the reals (for fixed $\phi$ the expectation value of $L_3$). One notes that in this operator only matrix elements with even $m-m'$ are non-zero, so the problem can be further reduced. For up to $s=3/2$ it effectively leads to two-dimensional ground state problems. In this way (resp.\ by  using the results of Sect.~\ref{sec:lows}) we get
\begin{eqnarray}
	c_2(\textstyle\frac12)&=& \textstyle\frac14\\
	c_2(1)&=& \textstyle\frac{7}{16} \\
	c_2(3/2)&=&\frac94+\gamma^2-\sqrt{4 \gamma ^2+2 \gamma +1}\approx0.600933\\
	&&\mbox{where}\ \gamma=\cos(\pi/9)\nonumber.
\end{eqnarray}
Note that the bound $c_2(1)$ was already obtained in \cite{hof}.
It is readily seen numerically that $c_2(s)$ increases with $s$, but sub-linearly. This means that if we scale the diagram of $\Omega^\vee$ (see FIG.~\ref{fig:spin1mon}, right) by a factor $1/s$ so that the bottom triangle described by \eqref{triSum} stays fixed, the two-component inequality excludes an asymptotically small prism around the axes. FIG.~\ref{fig:twocomps} shows the asymptotic behavior of $c_2$ in a log-log plot, which suggests that
\begin{equation}\label{twocomploglog}
	c_2(s) \approx 0.569524 \ s^{2/3} \; \text{for large s}.
\end{equation}
\begin{figure}[h]
	\includegraphics[width=0.48\textwidth]{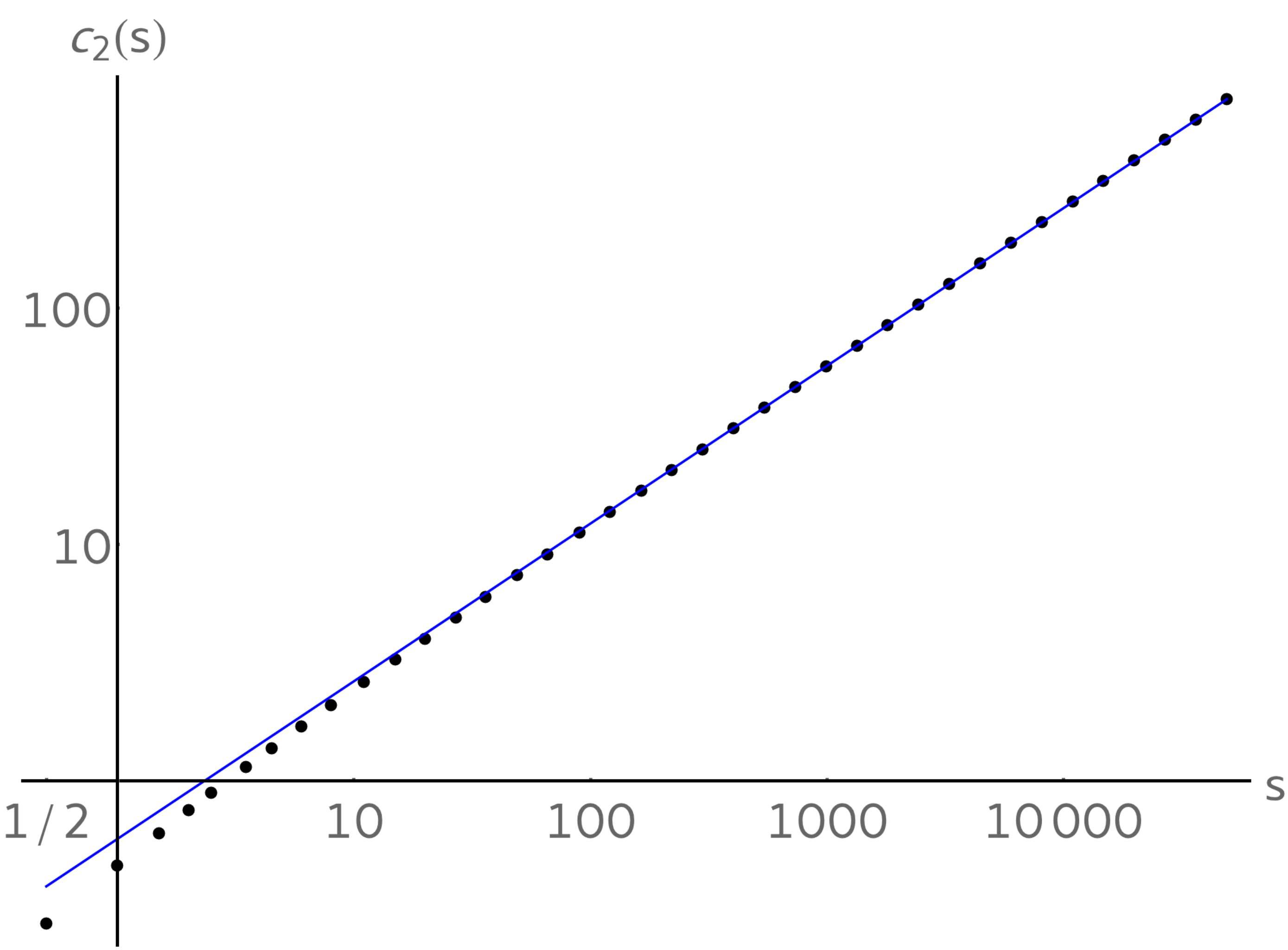}
	\includegraphics[width=0.48\textwidth]{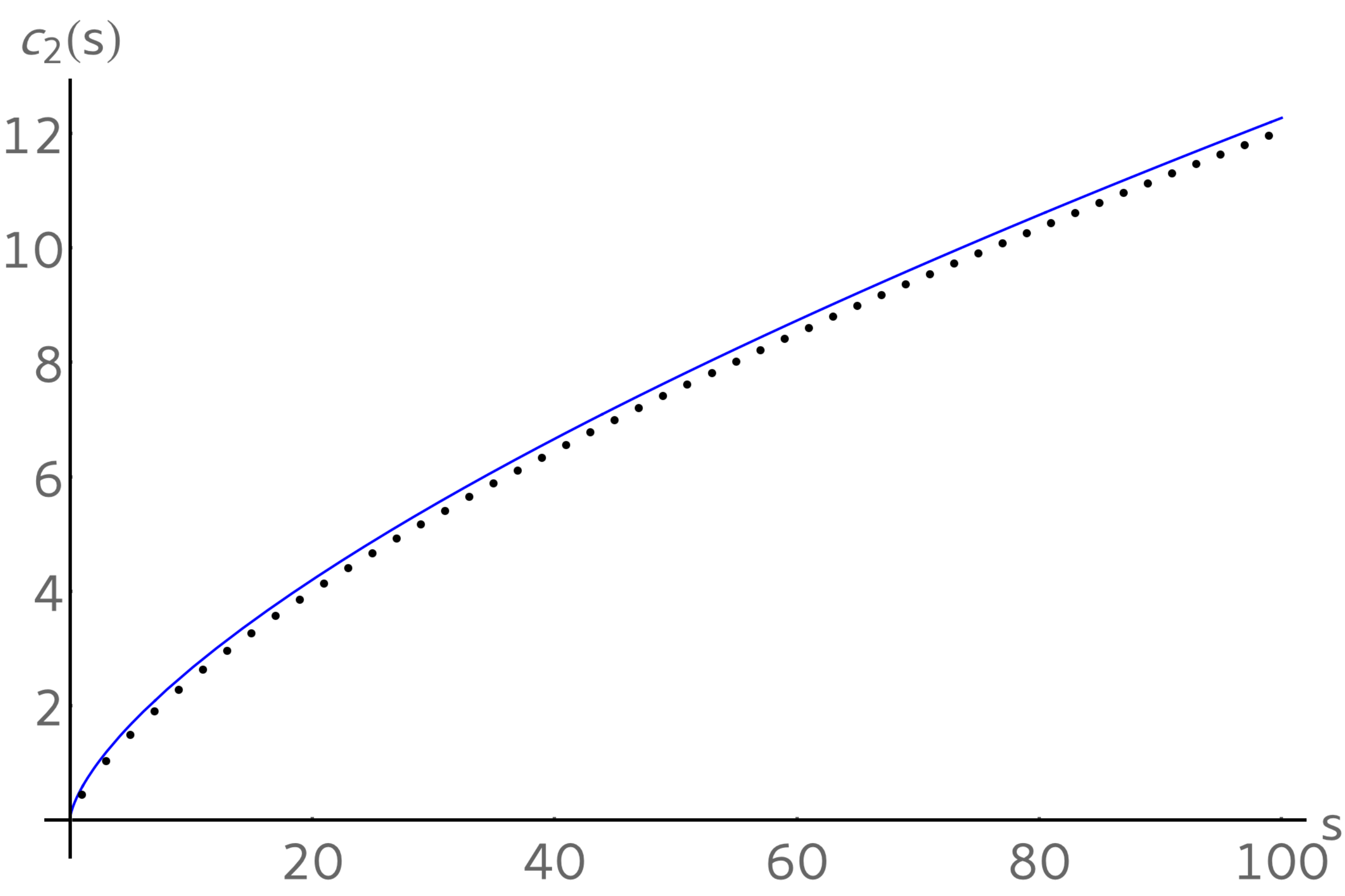}
	\caption{Left: Log-log plot of the numerical calculations of the two component bound $c_2(s)$ in black and the arising fit \eqref{twocomploglog} in blue. Right: Numerics and the fit for small s.}
	\label{fig:twocomps}
\end{figure}

\subsection{Power mean and maximal uncertainty}\label{sec:powerMean}
A natural way to characterize states with small variance is to look for the maximum of the variance function $v(\ve)$ defined in \eqref{vLambda}. An uncertainty relation would then put a lower bound $c(s)$ on this maximum. In other words we would like to prove the following statement: {\it For every state $\rho$ there is some direction $e$ such that $\var\rho{E_e}$ is larger than $c$.}\/ By considering coherent states we can immediately see that $c(s)\leq s/2$. The following proposition shows that coherent states in fact have minimal variance in terms of this criterion, and we even have equality.

Such a result can be seen as one end of a one-parameter family of criteria, of which \eqref{vaverage} is the other end: We can judge the ``size'' of the function $v$ by its ${\cal L}^p$ norm, of which the maximum is the special case $p=\infty$, and the mean the case $p=1$. We therefore formulate a proposition to cover all these cases.

\begin{prop}\label{icomp}
	For every $s\in\Nl/2$ and every $p\in[1,\infty]$ there is a constant $c(p,s)$ such that,
	for every density operator $\rho$ in the spin $s$ representation,
	\begin{equation}\label{cprep}
		\left\Vert v\right\Vert_p=\left(\!\int\frac{de}{4\pi}\ \varpow\rho{\eL}p\right)^{\textstyle\frac1p}\geq c(p,s)
	\end{equation}
	with equality whenever $\rho$ is a spin coherent state. For $p<\infty$ these are the only states with equality. For $p=\infty$ equality holds also for mixtures $\rho=p_+\kettbra{+s}+p_-\kettbra{-s}$, and rotations thereof, provided that $p_+p_-\geq1/(8s)$.\\
	The constant is
	\begin{equation}\label{constpnorm}
		c(p,s)=\frac s2\ \left(\frac{\sqrt{\pi}\ \Gamma(p+1)}{2\,\Gamma \left(p+\frac{3}{2}\right)}\right)^{\textstyle\frac1p}\ ,
	\end{equation}
	with special values $c(1,s)=s/3$, $c(2,s)=s \sqrt{8/15}\approx0.73s$, $c(\infty,s)=s/2$.
\end{prop}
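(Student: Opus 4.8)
The plan is to use the rotation invariance of the functional to reduce everything to a two-line minimization of an explicit symmetric convex function, and to isolate the single structural inequality that makes coherent states optimal. First I would note that $\|v\|_p$ depends on $\rho$ only through the eigenvalues $\mu_1\geq\mu_2\geq\mu_3\geq0$ of $\Lambda(\rho)$: a rotation of $\rho$ merely rotates the function $v$ on the sphere and leaves every $L^p$-norm unchanged, so writing $v(\ve)=\sum_k\mu_k e_k^2$ we have $\|v\|_p^p=f(\mu_1,\mu_2,\mu_3)$ with $f(\mu)=\int\frac{de}{4\pi}\bigl(\sum_k\mu_k e_k^2\bigr)^p$. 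This $f$ is symmetric, increasing in each $\mu_k$, and convex, since the integrand is a convex increasing function of a linear form in $\mu$.

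The constant itself is a routine computation. At a coherent state one has $\mu=(s/2,s/2,0)$ by \eqref{LambdaEigen}, so $f(s/2,s/2,0)=(s/2)^p\int\frac{de}{4\pi}(1-e_3^2)^p=(s/2)^p\,\tfrac12 B(\tfrac12,p+1)$, and evaluating the Beta integral reproduces exactly the stated \eqref{constpnorm}. This fixes the value and identifies the coherent states as the candidate extremizers.

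The crux is the lower bound $f(\mu)\geq f(s/2,s/2,0)$ over all achievable triples. I claim the only input needed about the achievable region is the two-component bound $\mu_1+\mu_2\geq s$, i.e.\ that the sum of the two largest variances is at least $s$. Granting this, the minimization is immediate: $f$ is increasing in $\mu_3$, so one may set $\mu_3=0$; it is increasing in $\mu_1$ and $\mu_2$, so the constraint may be taken tight, $\mu_1+\mu_2=s$; and on that segment the restriction $g(\mu_1)=f(\mu_1,s-\mu_1,0)$ is convex and symmetric about $\mu_1=s/2$, hence minimized there. This simultaneously gives the $p=\infty$ statement \eqref{icompI}, because $\max_\ve v(\ve)=\mu_1\geq\tfrac12(\mu_1+\mu_2)\geq s/2$. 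The main obstacle is therefore concentrated in proving $\mu_1+\mu_2\geq s$.

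This I would prove by working in the direction of the mean spin. Taking $\ve=\vlambda/\abs\vlambda$ (any direction if $\vlambda=0$), the operator inequality $(\eL)^2\leq s^2$ gives $\var\rho\eL=\tr\rho(\eL)^2-\abs\vlambda^2\leq s^2-\abs\vlambda^2$, and since the smallest eigenvalue is the minimal variance over all directions, $\mu_3\leq s^2-\abs\vlambda^2$. Combined with $\mu_1+\mu_2+\mu_3=\tr\Lambda=s(s+1)-\abs\vlambda^2$ this yields $\mu_1+\mu_2\geq s$ exactly, which is the required inequality (it also shows the achievable region sits inside $\{\mu\geq0,\ \mu_1+\mu_2\geq s\}$, legitimizing the relaxation). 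For the equality discussion, when $p<\infty$ the strict convexity of $t\mapsto t^p$ makes $(s/2,s/2,0)$ the unique minimizer, forcing $\tr\Lambda=s$, hence $\abs\vlambda=s$ and $\rho$ a coherent state; when $p=\infty$ only $\mu_1=s/2$ matters, so $\mu_1+\mu_2\geq s$ forces $\mu_1=\mu_2=s/2$ with $\mu_3\in[0,s/2]$ free, and a direct computation of $\var\rho{L_3}=4s^2p_+p_-$ (with $\var\rho{L_1}=\var\rho{L_2}=s/2$) for the two-point mixtures determines precisely when $\max_\ve v=s/2$, giving the stated threshold on $p_+p_-$.
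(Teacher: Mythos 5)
Your proposal is correct, and it takes a genuinely different route from the paper's proof. The paper symmetrizes: it averages $\rho$ over rotations about the mean-spin axis, uses the triangle inequality for the $L^p$-norm to show that this averaging can only decrease $\norm{v}_p$, and thereby reduces the problem to states diagonal in the $L_3$-basis, for which $\Lambda$ is explicitly $\mathrm{diag}(\Lambda_{11},\Lambda_{11},\Lambda_{33})$ with $\Lambda_{11}\geq s/2$ and $\Lambda_{33}\geq0$; pointwise monotonicity of $v$ then finishes the argument. You instead work directly with the eigenvalues $\mu_1\geq\mu_2\geq\mu_3\geq0$ of $\Lambda(\rho)$ and isolate the linear constraint $\mu_1+\mu_2\geq s$, proved by combining the operator bound $(\eL)^2\leq s^2\idty$ in the direction $\ve=\vlambda/\abs\vlambda$ (giving $\mu_3\leq s^2-\abs\vlambda^2$) with the trace identity $\tr\Lambda=s(s+1)-\abs\vlambda^2$; convex minimization of the symmetric, monotone functional $f$ over the relaxed region $\{\mu_1\geq\mu_2\geq\mu_3\geq0,\ \mu_1+\mu_2\geq s\}$ then yields the bound, which is attained at the achievable point $(s/2,s/2,0)$. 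Your constraint is a by-product of independent interest: it strengthens \eqref{triSum} (it bounds the sum of the two largest principal variances, not all three) and yields \eqref{icompI} in one line, whereas the paper's symmetrization buys a concrete structural description of the minimizing states that makes the equality discussion transparent.

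Two small points. First, your uniqueness argument for $p<\infty$ invokes strict convexity of $t\mapsto t^p$, which fails at $p=1$: there $f$ is linear, and the entire face $\mu_1+\mu_2=s$, $\mu_3=0$ of the relaxed region is minimizing. The fix is immediate within your own framework, since for $p=1$ equality reads $\frac13\tr\Lambda=s/3$ directly, forcing $\abs\vlambda=s$ and hence a coherent state; but as written the claim is wrong at that endpoint. Second, your $p=\infty$ computation gives $\var\rho{L_3}=4s^2p_+p_-$ and hence the equality condition $4s^2p_+p_-\leq s/2$, i.e.\ $p_+p_-\leq1/(8s)$. This agrees with the paper's proof (the requirement $\Lambda_{33}\leq\Lambda_{11}$) and shows that the inequality sign ``$\geq$'' in the statement of Prop.~\ref{icomp} is a typo, which your computation silently corrects.
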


\begin{proof}
	Let $\lambda_j=\tr\rho L_j$ be the vector of expectation values, and consider the set of density operators $\rho^\beta$ arising from $\rho$ by rotation $R_\beta$ around the vector $\lambda$ by the angle $\beta$. For each $\rho^\beta$, we call the variance function $v_\beta(\mathbf{e})=v(R_\beta \mathbf{e})$.
	By averaging over $\beta$ we find a state $\overline\rho$, with variance function
	\begin{equation}\label{avv}
		\overline{v}(\mathbf{e})=\frac1{2\pi}\int d\beta\ v_\beta(\mathbf{e}),
	\end{equation}
	where we used, crucially, that all $\rho^\beta$ and $\overline\rho$  have the same expectations $\lambda_j$. By the triangle inequality for the $p$-norm, we have $\norm{\overline{v}}_p\leq\norm v_p$. Hence we can restrict the search for the $\rho$ with minimal $\norm v_p$ to those which are rotation invariant around some axis, say the 3-axis.

	Such a state can be jointly diagonalized with $L_3$, and is hence of the form $\rho=\sum_m p_m\kettbra m$. Then $\lambda=(0,0,\sum_mp_m m)$, and $\Lambda(\rho)$ is diagonal with
	\begin{eqnarray}\label{Lambdadiagonal1}
		\Lambda_{11}&=& \Lambda_{22}=\frac12\bigl(s(s+1)-\sum_mp_mm^2\bigr)\geq \frac12(s^2+s-s^2)=\frac s2 \\
		\Lambda_{33}&=& \sum_mp_mm^2-(\sum_mp_mm)^2\geq0 \label{Lambdadiagonal2}\\
		v(\mathbf{e})&=&(e_1^2+e_2^2)\Lambda_{11}+e_3^2\Lambda_{33}. \label{Lambdadiagonal3}
	\end{eqnarray}
	The last equation shows that the function $v$ becomes pointwise smaller (and hence smaller in $p$-norm) if we decrease some $\Lambda_{ii}$. That is, we have to go to the minimum on both $\Lambda_{11}$ and $\Lambda_{33}$. The minimum in \eqref{Lambdadiagonal1} is attained precisely when $p_m\neq0$ only for $m=\pm s$.
	Then minimality in \eqref{Lambdadiagonal1} forces $\rho$ to be a spin coherent state. For $p=\infty$ the norm only sees the maximum, so the pointwise minimum need not be chosen, and we may allow $0\leq\Lambda_{33}\leq\Lambda_{11}$ without changing the maximum. The latter inequality translates to the one given in Prop.~\ref{icomp}.

	The concrete constants follow easily by integrating the $p^{\rm th}$ power of \eqref{Lambdadiagonal3} with $\Lambda_{11}=s/2$ and $\Lambda_{33}=0$ with respect to the normalized surface measure on the sphere, i.e.,
	\begin{equation}\label{pintconst}
	c(p,s):=\frac s2\ \left(\int_0^\pi\frac{\sin\theta\,d\theta}{2}\ (\sin{\theta})^{2p}\right)^{\textstyle\frac1p}.
	\end{equation}

\end{proof}

\subsection{Robertson's technique: a generalization}\label{sec:Robson}
We have criticized the Robertson inequality \eqref{Robson} for not giving a state independent bound. However, with only little effort it can be used to derive such a bound. Indeed, abbreviating $v_j=\var\rho{L_j}$, and $\lambda_j=\tr\rho L_j$ we can add the three inequalities of the form $v_1v_2\geq\lambda_3^2/4$ and use that $\sum_j(v_j+\lambda_j^2)=s(s+1)$ to obtain
\begin{equation}\label{Robson3}
	v_1v_2+v_2v_3+v_3v_1 \geq\frac14\bigl(s(s+1)-(v_1+v_2+v_3)\bigr).
\end{equation}
Clearly, this no longer allows $v_1=v_2=0$, since $v_3\leq s^2$. The set of variance triples satisfying this is shown in FIG.~\ref{fig:roblem}. Comparison with FIG.\ref{fig:spin1mon} readily shows that this bound is not optimal. However, we can generalize Robertson's technique from two to three components rather than extend his two component result in this trivial way. The basis of the technique is to utilize the observation that for any finite collection of operators $X_j$ (not necessarily hermitian or normal) the matrix $m_{jk}=\tr\rho X_j^*X_k$ is positive definite, which is the same as saying that for any complex linear combination $X=\sum_ja_jX_j$ the expectation of $X^*X$ must be positive. In order to get Robertson's inequality for $L_1,L_2$ this idea is applied to the three operators $\idty,L_1,L_2$. In fact, this leads to Schr\"odinger's improvement of the inequality \cite{SchroRob} which also contains the square of the covariance matrix element $\Lambda_{12}(\rho)^2$ on the right hand side.

We will apply the method to the four operators $\idty,L_1,L_2,L_3$. In order to simplify the expressions, however, we will not look for variances and the off-diagonal elements of $\Lambda(\rho)$, but for inequalities involving the eigenvalues $\mu_j$. As discussed at the beginning of this section, this will contain all the information needed. In other words, we will take the matrix $\Lambda(\rho)$ as the diagonal matrix with entries $\mu_1\geq\mu_2\geq\mu_2$. The matrix which then needs to be positive is
\begin{equation}\label{M4}
M=\left(\begin{array}{cccc}
1&\lambda_1&\lambda_2&\lambda_3\\[3pt]
\lambda_1&\mu_1+\lambda_1^2 &\lambda_1\lambda_2+i\lambda_3/2&\lambda_1\lambda_3-i\lambda_2/2\\[3pt]
\lambda_2&\lambda_1\lambda_2-i\lambda_3/2&\mu_2+\lambda_2^2&\lambda_2\lambda_3+i\lambda_1/2\\[3pt]
\lambda_3&\lambda_1\lambda_3+i\lambda_2/2&\lambda_2\lambda_3-i\lambda_1/2&\mu_3+\lambda_3^2
\end{array}\right).
\end{equation}
The positivity of this matrix is equivalent (see e.g. \cite[Thm.~7.2.5]{HorneJ}) to the positivity of the principal minors, i.e., the determinants of the submatrices of the first $k$ rows and columns for $k=1,2,3,4$. The first three of these are $1$, $\mu_1$, and $\mu_1\mu_2-\lambda_3^2/4$. The positivity of the third one is Robertson's inequality \eqref{Robson}. The only remaining condition for the positivity of $M$ is $\det M\geq0$, which evaluates to
\begin{equation}\label{Robson4}
	\mu_1\mu_2\mu_3-\frac14\bigl(\lambda_1^2\mu_1+\lambda_2^2\mu_2+\lambda_3^2\mu_3\bigr)\geq0.
\end{equation}
This will be combined with the normalization condition
\begin{equation}\label{RobsonNorm}
	\lambda_1^2+\lambda_2^2+\lambda_3^2=s(s+1)-(\mu_1+\mu_2+\mu_3).
\end{equation}
The condition on the triples $(\mu_1,\mu_2,\mu_3)$ we have to evaluate is the existence of $\lambda_i$ satisfying both these relations. Since only the squares enter, let us set $x_j=\lambda_j^2$. Then \eqref{RobsonNorm} describes a triangle in the positive quadrant with equal intercept $s(s+1)-(\mu_1+\mu_2+\mu_3)$ with the axes. The inequality \eqref{Robson4} describes a tetrahedron spanned by the origin and the axis intercepts $x_1^0=4\mu_2\mu_3$, and cyclic. Note that Robertson's inequality is automatically satisfied on this tetrahedron. Obviously the tetrahedron and the triangle intersect if an only if one of the axis intercepts of the tetrahedron reaches or lies above the triangle. Since we can take the eigenvalues ordered: $\mu_1\geq\mu_2\geq\mu_3$ this means
\begin{equation}\label{RobsoNew}
	4\mu_1\mu_2\geq s(s+1)-(\mu_1+\mu_2+\mu_3).
\end{equation}
This is a bound to the eigenvalue of the $\Lambda$-matrix. By Birkhoffs theorem, the variances arising from such $\Lambda$ also includes all convex combinations of permutations of the $\mu_i$ (see beginning of Sect.~\ref{sec:prep}).
In order to characterize the set of variance triples generated in this way we need the following Lemma. In its formulation the variables $\sigma \in S_3$ run over the permutation group on three elements, and are applied to the components of a 3-vector (see also FIG.~\ref{fig:hexagon}).

\begin{figure}[t]
	\begin{center}
		\includegraphics[width=0.4\textwidth]{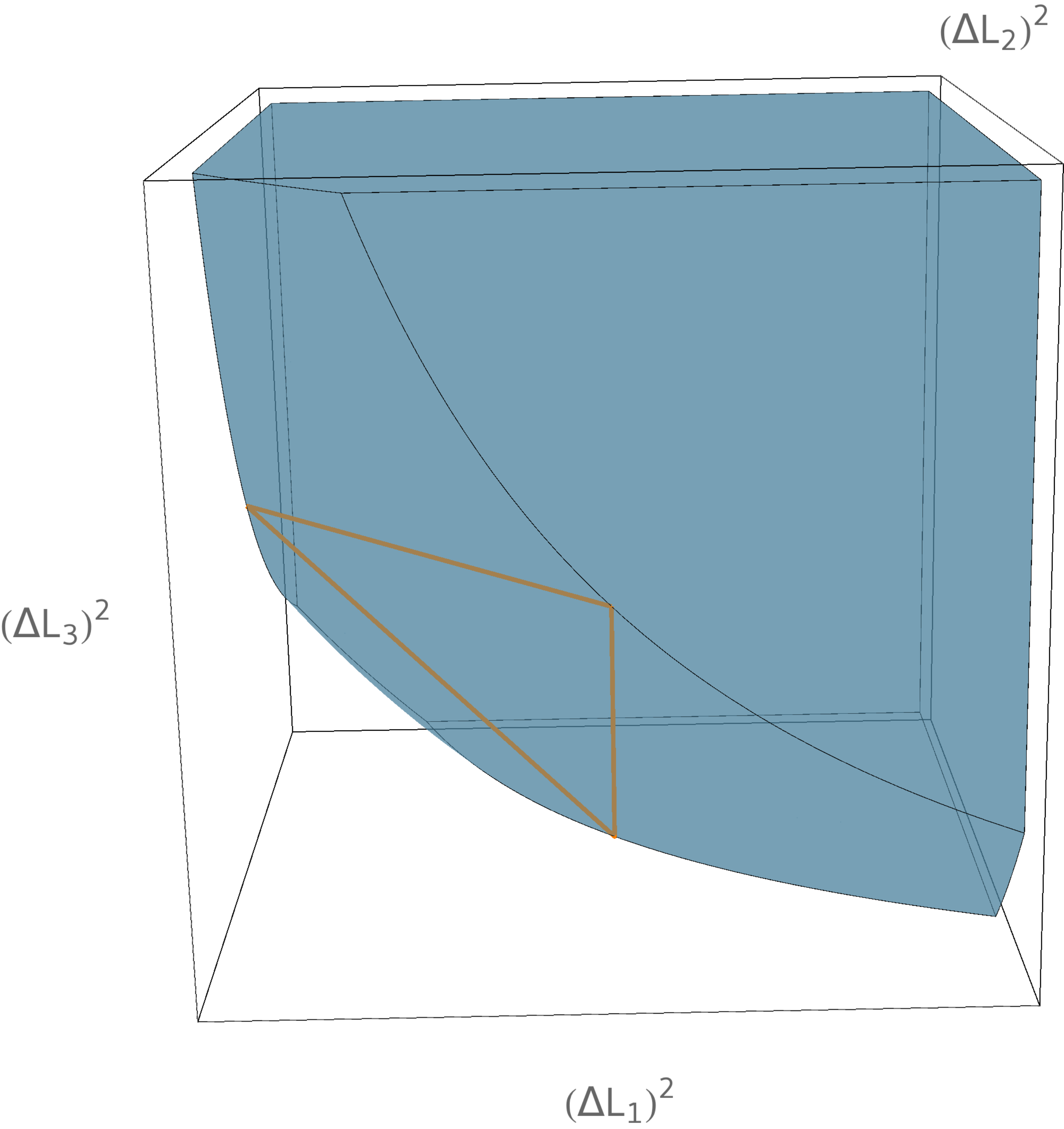}
		\includegraphics[width=0.4\textwidth]{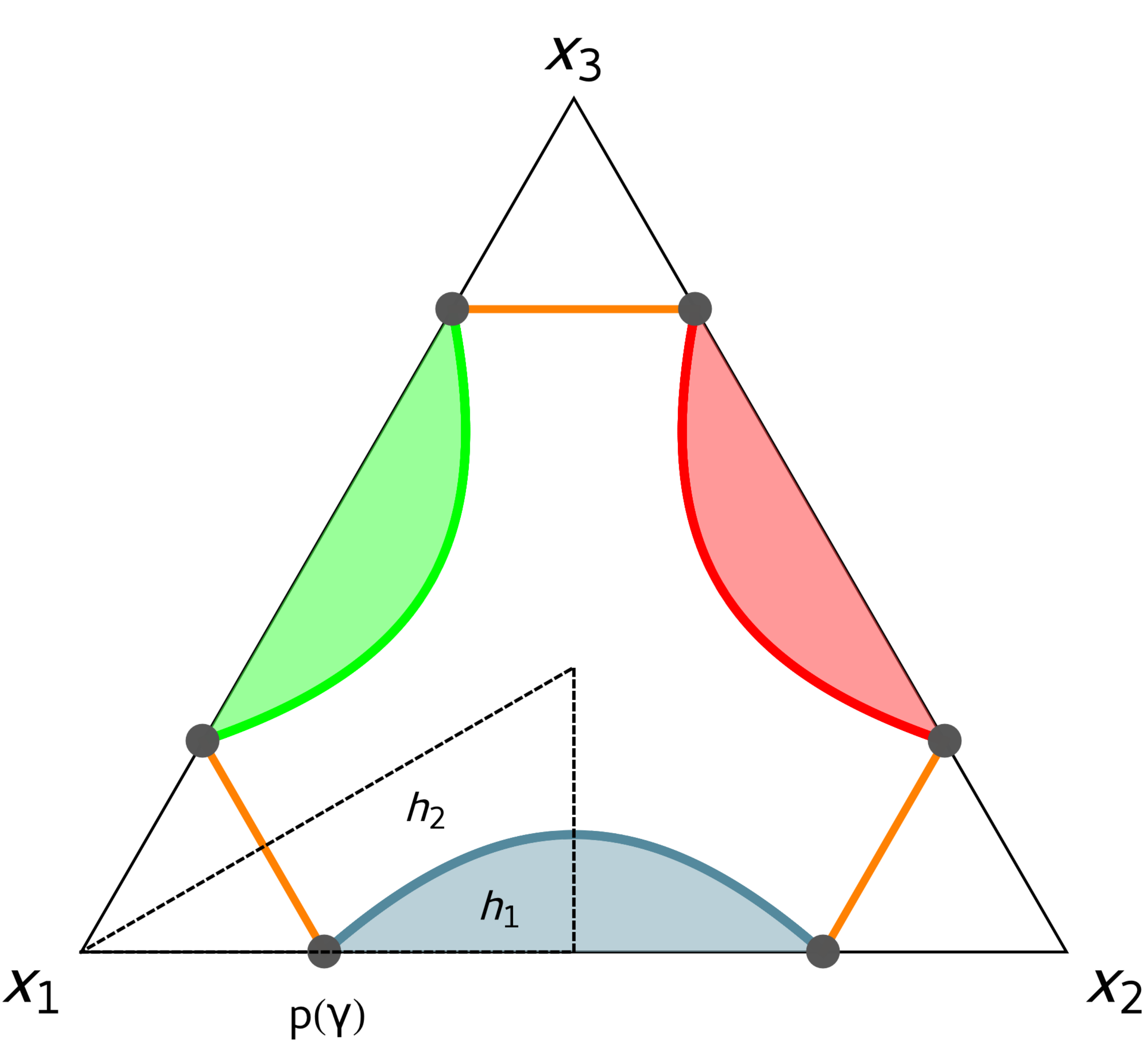}
	\end{center}
	\caption{Left: Region bounded by the inequality \eqref{Robson3}.
     Right: Plane of constant $\gamma$ orthogonal to the $(1,1,1)$ direction.}
	\label{fig:roblem}
\end{figure}

\begin{lem}\label{roblem}
	With the notation from above, the following sets $K_1$ and $K_2$ are  equal.
	\begin{eqnarray}
	& K_1 = \bigcup\limits_{\gamma \geq s} H_1(\gamma), \text{  where} \\
	& H_1(\gamma)= \conv\left(\bigcup\limits_{\sigma \in S_3} \sigma( h_1(\gamma) ) \right),  \textbf{  with } \nonumber \\
	& h_1(\gamma)= \{ (\mu_1,\mu_2, \mu_3) \ \vert \  \sum_i \mu_i = \gamma, \ \mu_1 \geq \mu_2 \geq \mu_3 \geq 0, \ 4 \mu_1 \mu_2 \geq s(s+1) - \gamma \}  \nonumber \\
	& \textbf{ and } \nonumber  \\
	& K_2 = \bigcup\limits_{\gamma \geq s} H_2(\gamma), \text{  where} \\
	& H_2(\gamma)=\bigcup\limits_{\sigma \in S_3} \sigma( h_2(\gamma) ),  \textbf{  with } \nonumber \\
	& h_2(\gamma) = \{ (v_1, v_2, v_3) \  \vert \  \sum_i v_i = \gamma, \ v_1 \geq  v_2 \geq v_3 \geq 0, \ 4 v_1 ( v_2 + v_3 ) \geq  s(s+1) - \gamma \} \nonumber.
	\end{eqnarray}
\end{lem}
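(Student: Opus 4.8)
The plan is to fix a level $\gamma$ and prove $H_1(\gamma)=H_2(\gamma)$ for every $\gamma\ge s$, then assemble the unions. This reduction is legitimate because each of $H_1(\gamma)$ and $H_2(\gamma)$ lies in the affine hyperplane $\{x_1+x_2+x_3=\gamma\}$ (the permutation orbit and the convex hull both preserve the coordinate sum), and these hyperplanes are disjoint for distinct $\gamma$. Hence a point $x$ with coordinate sum $\gamma_0\ge s$ lies in $K_1$ iff $x\in H_1(\gamma_0)$, and similarly for $K_2$, so slice-wise equality is equivalent to $K_1=K_2$. Throughout I abbreviate $c=s(s+1)-\gamma$ and record the key simplification: on the hyperplane $\{\sum x_i=\gamma\}$ the defining inequality of $h_2$ reads $4v_1(\gamma-v_1)\ge c$, so it constrains \emph{only} the largest coordinate.

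First I would establish $H_2(\gamma)\subseteq H_1(\gamma)$. The easy starting point is $h_1(\gamma)\subseteq h_2(\gamma)$, since $4\mu_1(\mu_2+\mu_3)=4\mu_1\mu_2+4\mu_1\mu_3\ge 4\mu_1\mu_2\ge c$. The real content is the reverse containment after orbiting and convexifying, and for this I would give an explicit two–term decomposition: for a sorted $(v_1,v_2,v_3)\in h_2(\gamma)$ with $v_2+v_3>0$,
\[
(v_1,v_2,v_3)=\frac{v_2}{v_2+v_3}\,(v_1,v_2+v_3,0)+\frac{v_3}{v_2+v_3}\,(v_1,0,v_2+v_3).
\]
Each summand has a vanishing coordinate, and its two nonzero entries $v_1$ and $v_2+v_3$ satisfy $4v_1(v_2+v_3)\ge c$ by hypothesis; so after sorting each summand lies in $h_1(\gamma)$, i.e.\ each is a point of $\bigcup_\sigma\sigma(h_1(\gamma))$. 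This exhibits every point of $h_2(\gamma)$ as a convex combination of two permuted $h_1$–points, whence $h_2(\gamma)\subseteq H_1(\gamma)$; the degenerate case $v_2+v_3=0$ forces $c\le0$ and puts $(\gamma,0,0)$ directly in $h_1(\gamma)$. Since $H_1(\gamma)$ is $S_3$–invariant and convex, $\sigma(h_2(\gamma))\subseteq\sigma(H_1(\gamma))=H_1(\gamma)$ for every $\sigma$, so $H_2(\gamma)=\bigcup_\sigma\sigma(h_2(\gamma))\subseteq H_1(\gamma)$.

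For the opposite inclusion $H_1(\gamma)\subseteq H_2(\gamma)$, the containment $h_1(\gamma)\subseteq h_2(\gamma)$ already gives $\bigcup_\sigma\sigma(h_1(\gamma))\subseteq H_2(\gamma)$, so it suffices to show that $H_2(\gamma)$ is itself convex: taking the convex hull of the left side then yields $H_1(\gamma)\subseteq H_2(\gamma)$. This convexity is the main obstacle, and the natural route exploits the remark that the $h_2$–inequality constrains only the largest coordinate. Writing $M=\tfrac12\bigl(\gamma+\sqrt{\gamma^2-c}\bigr)$, I would argue that the permutation orbit of $h_2(\gamma)$ equals $\{x\ge0,\ \sum_i x_i=\gamma,\ \max_i x_i\le M\}$, an intersection of half–spaces and hence convex, so that no extra convex-hull operation is needed on the $h_2$ side to match the $\conv$ appearing in $H_1$. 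The subtle point — the step I expect to require the most care — is that $4v_1(\gamma-v_1)\ge c$ also carries a lower root $\ell=\tfrac12\bigl(\gamma-\sqrt{\gamma^2-c}\bigr)$, and one must verify that the resulting lower bound $v_1\ge\ell$ is inactive on the sorted sector (equivalently $\ell\le\gamma/3$), so that the orbit is cut out by the single linear bound $\max_i x_i\le M$ rather than by an annular region around the diagonal $(1,1,1)$ that would destroy convexity. Reducing the nonlinear product inequality to this one half–space constraint, uniformly in $\gamma\ge s$, is the crux on which the equality $H_1(\gamma)=H_2(\gamma)$, and hence $K_1=K_2$, ultimately rests.
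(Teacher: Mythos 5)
Your slice-wise reduction is valid, and your proof of the inclusion $H_2(\gamma)\subseteq H_1(\gamma)$ is correct and complete: the two-term decomposition of a sorted $h_2$-point into points of the permuted $h_1$ (each with a vanishing coordinate and product $4v_1(v_2+v_3)\geq c$) is a clean argument, and it differs from the paper's route, which instead identifies $\bigcup_\sigma\sigma(h_2(\gamma))$ with a hexagon whose corners $p(\gamma)=(c(\gamma),\gamma-c(\gamma),0)$ lie in $h_1(\gamma)$. Your version is actually the more robust of the two, since it does not depend on knowing the global shape of $H_2(\gamma)$.

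The other direction, however, founders on exactly the step you flagged as the crux, and the gap cannot be filled. The condition that the lower root $\ell=\tfrac12\bigl(\gamma-\sqrt{\gamma^2-c}\bigr)$ be inactive on the sorted sector, i.e. $\ell\leq\gamma/3$, is equivalent to $c=s(s+1)-\gamma\leq\tfrac89\gamma^2$, and this \emph{fails} for all $\gamma$ in a whole interval $[s,\gamma^*)$ with $\gamma^*\approx\tfrac{3}{2\sqrt2}\,s$ for large $s$. The worst case is $\gamma=s$, where $c=s^2$ and the quadratic has the double root $s/2$: there $h_2(s)=\{v\ \text{sorted}:\ v_1=s/2\}$, so $H_2(s)$ is only the boundary of the medial triangle (a non-convex set with a hole at the centre), while $h_1(s)=\{(s/2,s/2,0)\}$ and $H_1(s)=\conv\bigl(\bigcup_\sigma\sigma(h_1(s))\bigr)$ is the full medial triangle. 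The centroid $(s/3,s/3,s/3)$ thus lies in $H_1(s)\setminus H_2(s)$, so $H_1(\gamma)\subseteq H_2(\gamma)$ fails and in fact $K_1\neq K_2$: the Lemma as stated is false on the slices $s\leq\gamma<\gamma^*$. You should know that the paper's own proof breaks at precisely this point: after solving the quadratic it keeps only the upper root (``because $v_1\geq0$ we have to choose the positive sign'') and asserts that $H_2(\gamma)$ is the intersection of the triangle with the three half-spaces $v_i\leq c(\gamma)$, silently discarding the lower-root constraint, which is what makes its claimed convexity of $H_2(\gamma)$ (needed for its $H_1\subseteq H_2$ direction) invalid. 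The defect is not a repairable technicality, since the ensuing Proposition is violated by actual quantum states: for any $s$, a spin coherent state measured along three orthogonal axes, each at angle $\arccos(1/\sqrt3)$ to its polarization axis, has variance triple $(s/3,s/3,s/3)$, and $4\cdot\tfrac s3\cdot\tfrac{2s}3=\tfrac89 s^2<s^2=s(s+1)-s$. What your argument does establish is the restricted statement that $H_1(\gamma)=H_2(\gamma)$ for all $\gamma$ with $s(s+1)-\gamma\leq\tfrac89\gamma^2$, i.e. $\gamma\geq\gamma^*$.
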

\begin{proof}
	For the equality of $K_1$ and $K_2$ it is sufficient to show that $H_1$ and $H_2$ coincide for every $\gamma$. The restriction $\sum_i v_i = \sum_i \mu_i = \gamma$, together with the 3-fold symmetry of the problem, tells us that $H_1(\gamma)$ and $H_2(\gamma)$ are subsets of the triangle, whose corners lie on the axes at a distance $\gamma$ from the origin.  In this triangle the ordering of the $v_i$ and $\mu_i$ reduces $h_1$ and $h_2$ to the dashed subset marked in FIG.~\ref{fig:roblem}. \\
	Now the first and last condition in the definition of $h_2$ can be combined to obtain
	\begin{eqnarray}
		4 v_1 (\gamma - v_1) &\geq& s(s+1) - \gamma \\
		0 &\geq& v_1^2 - \gamma v_1 - \frac{s(s+1) - \gamma}{4}
	\end{eqnarray}
	so we get
	\begin{equation}
		v_1 \leq - \frac \gamma 2 \pm \sqrt{\frac{\gamma^2} 4 + \frac{s(s+1) - \gamma}{4}  } =: c(\gamma).
	\end{equation}
	Because $v_1 \geq 0$ we have to choose the positive sign, which means that $H_2(\gamma)$ is the intersection of the triangle with the three halfspaces $v_i\leq c(\gamma)$, whose boundaries are marked as a orange lines in FIG.~\ref{fig:roblem}, i.e.,
	\begin{equation}
		H_2(\gamma)=\{ (v_1, v_2, v_3) \  \vert \  \sum_i v_i = \gamma, v_i\leq c(\gamma) \}.
	\end{equation}
	$H_2(\gamma)$ is clearly a convex polytope. The extremal points of $H_2(\gamma)$ have to saturate at two of the defining inequalities. In the ordered triangle ($v_1\geq v_2\geq v_3$) the only extreme point is given by  $p(\gamma):=(c(\gamma),\gamma - c(\gamma) ,0)$, and all others can be obtained by permutations.
	Hence $H_2$ can be described as the hexagon $ H_2(\gamma) = \conv\left(\bigcup_\sigma \sigma(p(\gamma))\right)$.
	On the one hand, by comparing the defining inequalities for $h_1$ and $h_2$, we can see that every triple $\mu_i \in h_1$ is also part of $h_2$. So including the permutations and by the fact that $H_2(\gamma)$ is convex, we get $H_1(\gamma) \subseteq H_2(\gamma)$.\\
	On the other hand, the 3-component of $p$ is zero, so it is also part of $h_1 \in H_1(\gamma)$. While the point $p$ and its permutations are the extremal points of $H_2(\gamma)$ and $H_1(\gamma)$ is convex, we have $H_2(\gamma) \subseteq H_1(\gamma)$.
\end{proof}
Therefore we get the following statement:
\begin{prop} \
	\\ Let $v_1 \geq v_2 \geq v_3 \geq 0$ be variances of the angular momentum components, then the following holds:
	\begin{equation}
	4 v_1 ( v_2 + v_3 ) \geq s(s+1) - (v_1 + v_2 + v_3 ).
	\end{equation}
\end{prop}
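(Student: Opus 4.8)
The plan is to obtain the stated inequality as an immediate corollary of the set equality $K_1 = K_2$ established in Lemma~\ref{roblem}, once the achievable variance triples have been located inside $K_1$. The real substance of the argument is already carried by the positivity of the matrix $M$ in \eqref{M4} (which produced the eigenvalue bound \eqref{RobsoNew}) and by the geometric identity of Lemma~\ref{roblem}; the proposition itself is then a translation of that identity back into the language of a single ordered variance triple.

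First I would fix an arbitrary state $\rho$ and let $(\mu_1,\mu_2,\mu_3)$ be the ordered eigenvalues of $\Lambda(\rho)$, with $\gamma=\mu_1+\mu_2+\mu_3=\tr\Lambda(\rho)$. From \eqref{RobsonNorm} and $\abs\vlambda\leq s$ one has $\gamma=s(s+1)-\abs\vlambda^2\geq s$, while the eigenvalue bound \eqref{RobsoNew} reads $4\mu_1\mu_2\geq s(s+1)-\gamma$. Together with $\mu_1\geq\mu_2\geq\mu_3\geq0$ this places $(\mu_1,\mu_2,\mu_3)$ in $h_1(\gamma)$, hence in $H_1(\gamma)\subseteq K_1$.

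Next I would recall the Birkhoff argument from the beginning of Sect.~\ref{sec:prep}: the variance triple $(v(\ve_1),v(\ve_2),v(\ve_3))$ in any Cartesian frame is the image of $(\mu_1,\mu_2,\mu_3)$ under the doubly stochastic matrix $R_{jk}^2$, and as the frame (equivalently, a rotation of $\rho$) varies these images sweep out the full convex hull of the permutation orbit, i.e.\ all of $H_1(\gamma)$. Hence every achievable variance triple lies in $K_1$, and by Lemma~\ref{roblem} in $K_2=\bigcup_{\gamma\geq s}H_2(\gamma)$ as well.

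Finally, taking the given variance triple and ordering it as $v_1\geq v_2\geq v_3\geq0$ with $\gamma=v_1+v_2+v_3$, membership in $H_2(\gamma)$ together with this ordering places the triple in the ordered chamber $h_2(\gamma)$, whose defining inequality is precisely $4v_1(v_2+v_3)\geq s(s+1)-\gamma$ — the claim. The only point needing care is matching the ordering convention of the proposition with the sorted representative inside $H_2(\gamma)$; since $H_2(\gamma)$ is permutation-symmetric and $h_2(\gamma)$ is its intersection with the ordered chamber, this matching is automatic and no further estimate is required. Thus there is no genuine obstacle left at this stage: the difficulty resided entirely upstream in deriving \eqref{RobsoNew} and in proving $K_1=K_2$, and the proposition is a direct readout of those results.
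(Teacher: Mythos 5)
Your proposal is correct and follows exactly the paper's own route: the paper states this proposition as an immediate consequence of the eigenvalue bound \eqref{RobsoNew} (from positivity of $M$ in \eqref{M4}), the Birkhoff argument placing variance triples in the convex hull of the permutation orbit of the eigenvalues, and the set identity $K_1=K_2$ of Lemma~\ref{roblem}, which is precisely the chain you spell out. Your write-up is in fact more explicit than the paper's (which compresses this readout into ``Therefore we get the following statement''), and your handling of the ordering convention and of $\gamma=\tr\Lambda(\rho)\geq s$ is sound; note only that for the inequality you need just the containment of the variance triple in $H_1(\gamma)$, not the stronger claim that the images sweep out all of it.
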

\begin{figure}[bt]
	\begin{center}
		\includegraphics[width=0.45\textwidth]{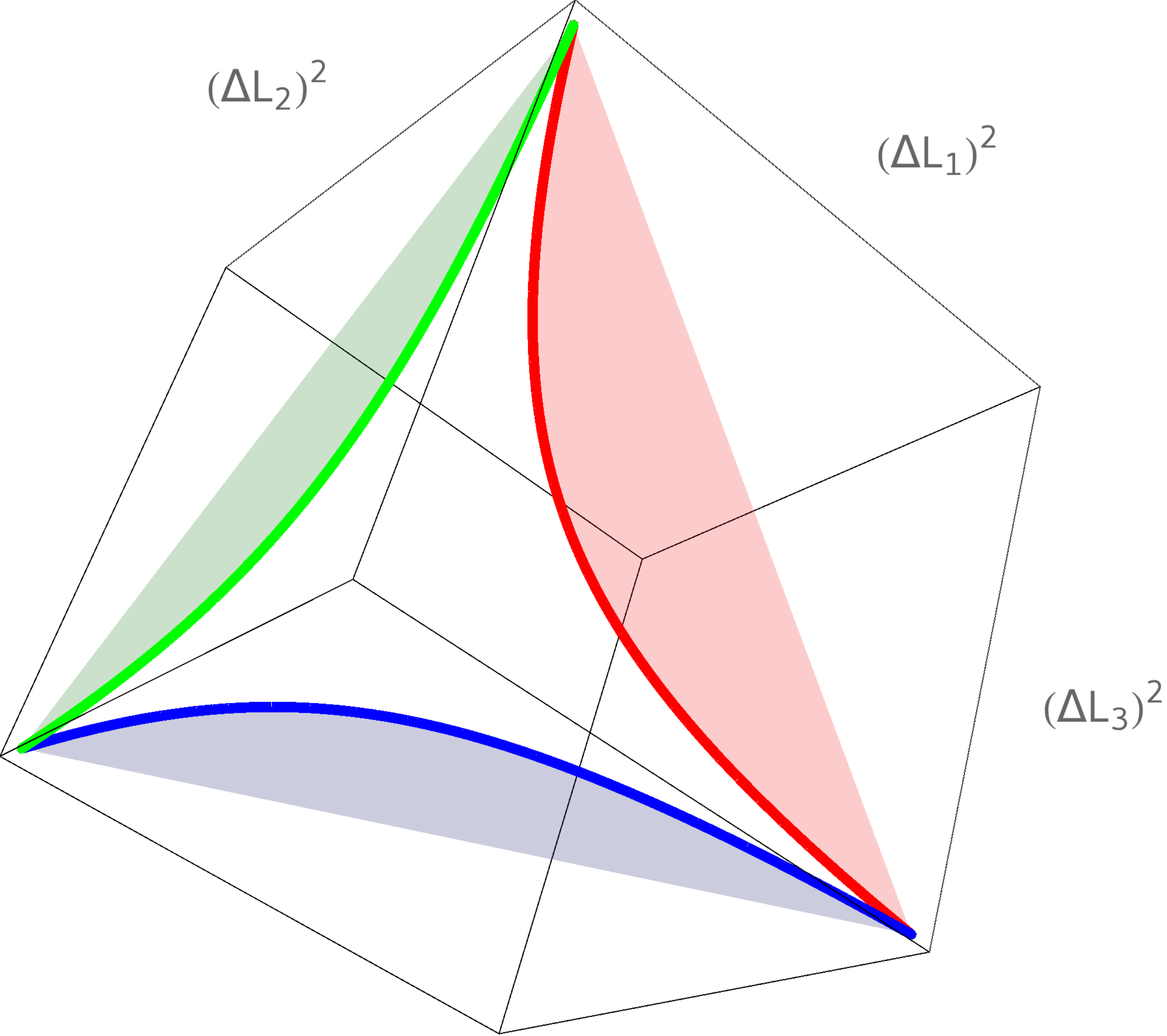}
		\includegraphics[width=0.45\textwidth]{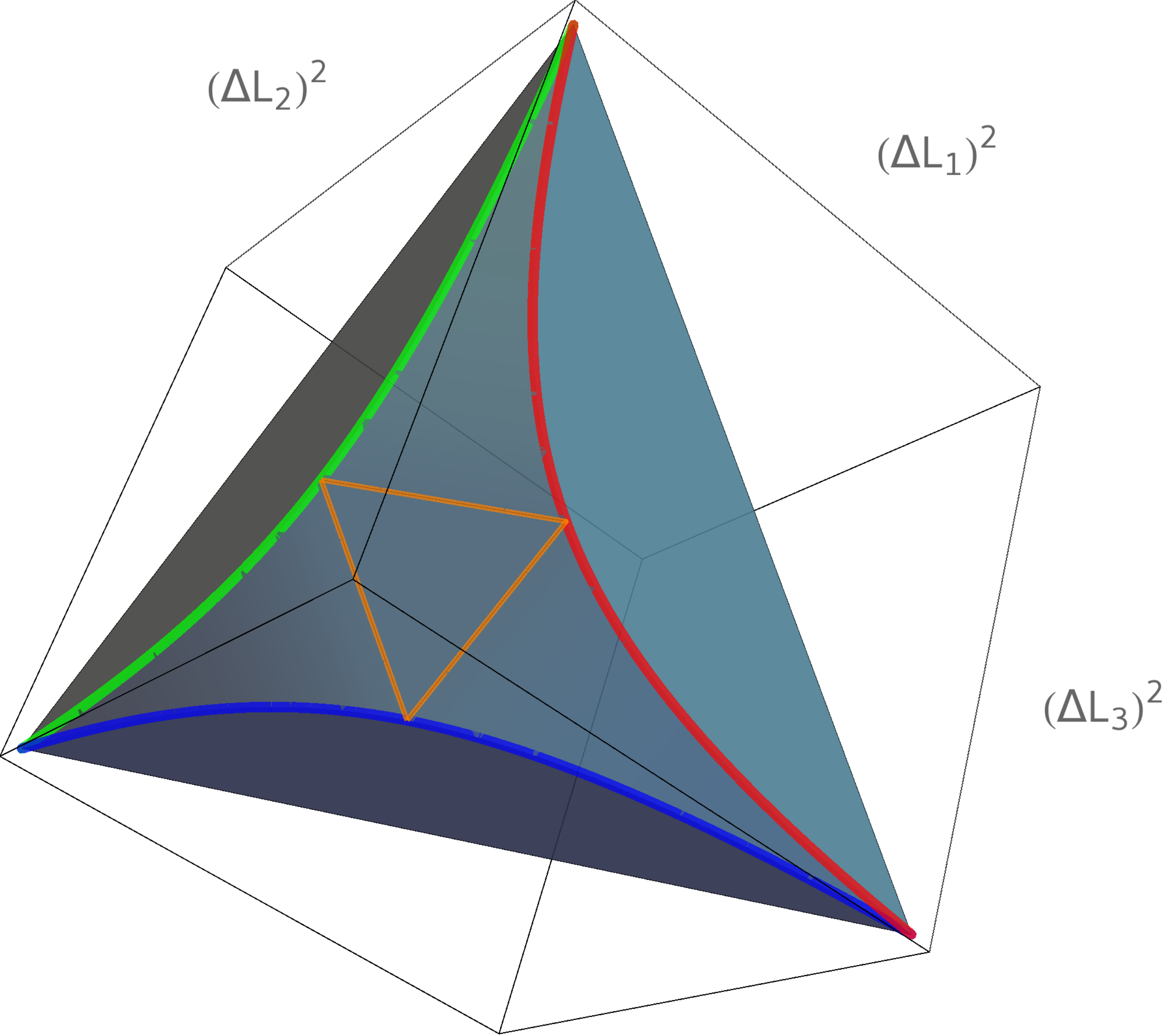}
	\end{center}
	\caption{Region bounded by the generalized Robertson inequality. Left: Hyperbolic curves on the faces for the eigenvalues. Right: Uncertainty region given by the variances and the base triangle formed by the spin coherent states.}
	\label{fig:roblemmapic}
\end{figure}
As one can see in FIG.~\ref{fig:roblem}, the boundaries of the corresponding uncertainty region on the coordinate planes are given by permutations of the hyperbolic curve $4v_1 v_2 = s(s+1)-v_1-v_2.$ This uncertainty region is monotonously closed and given by the convex hull of the above hyperbolic curves. This is shown in FIG.~\ref{fig:roblemmapic}.

\subsection{Asymptotic Case}\label{sec:asymptotic}
Now we take a look at the behavior of the asymptotic uncertainty region for $s\to\infty$. We already know that $\var{\rho}{L_1}+\var{\rho}{L_2}+\var{\rho}{L_3} \geq s$ and hence it is appropriate to scale the problem by 1/$s$, which will fix the sum of the variance in the lower base triangle to $1$.
We start with the asymptotic behavior of the generalized Robertson inequality derived in the previous section.
On the scale of $1/s$, i.e. ${v_i}/{s} = \nu_i$, and the ordering $\nu_1\geq \nu_2 \geq \nu_3$ this inequality reads
\begin{equation}\label{rob1s}
	4\nu_1(\nu_2+\nu_3)\geq 1+\frac{1}{s}(1-(\nu_1+\nu_2+\nu_3))
\end{equation}
and as $s$ goes to infinity the set of possible variances shrink to
\begin{equation}	\label{asymrob}
	4\nu_1(\nu_2+\nu_3)\geq 1,
\end{equation}
because $\sum_i \nu_i \geq 1$. Hence the inequality \eqref{rob1s} gets stronger for increasing $s$.

In this section we will show that this bound is attained by states, which will be constructed in the following way.
Using the technique described in part \ref{sec:method}, we look for the states $ \psi$, which minimize the expectation of the operator
\begin{equation}\label{exact}
	H(s,\mathbf{w})=\frac 1 s\Big( w_1 (L_1-\lambda_1)^2 + w_2 ( L_2 - \lambda_2 ) ^2 + w_3 (L_3-\lambda_3)^2 \Big),
\end{equation}
for a normal vector $\mathbf{w}$.
We do this using the Holstein Primakoff transformation \cite{HP}:
\begin{equation}
	L_+ = \sqrt{2s}\sqrt{1-\frac{a^*a}{2s}}a \quad L_- = \sqrt{2s} a^* \sqrt{1-\frac{a^*a}{2s}} \quad L_3 = s-a^*a.
\end{equation}
Here $a$ and $a^*$ are the creation and annihilation operators, so we have a representation of the angular momentum algebra in the oscillator basis. For large s and appropriate states, this transformation can be reduced to
\begin{equation}\label{approx}
	L_+ = \sqrt{2s}a + \Order(s^{-\frac 1 2})\quad L_- = \sqrt{2s} a^* + \Order(s^{-\frac 1 2}) \quad L_3 = s-n.
\end{equation}
Notice that in the Holstein Primakoff basis, the spin coherent state $\ket s$ is transformed to the ground state $\ket 0 _{HP}$, hence the state $\ket n_{HP}$ corresponds to $\ket{s-n}$ in the standard angular momentum $L_3$ eigenbasis. Now we rewrite $H$ using the above transformation and the relation for position $L_1 = \frac 1 2 ( L_+ +i L_-) = \frac{\sqrt{2 s}}{2} ( a + a^* ) = \sqrt s X$ and momentum $L_2 = \frac 1 2 ( L_+ - iL_-) = \frac{\sqrt{2 s}}{2i} ( a - a^* ) = \sqrt s P$.  We arrive at
\begin{equation}\label{approx2}
	H(s,\mathbf{w})_{HP}=\Big( w_1 (X-\xi)^2 + w_2 (P - \eta ) ^2 + \frac{w_3}{s} (s -a^*a -\zeta)^2 \Big) \; + \Order(s^{-\frac 1 2}).
\end{equation}
Here $\xi, \eta$ and $\zeta$ denote the transformed expectation values.
From \ref{sec:Pbasics} we know that $\ket s$ has minimal uncertainty for $w\sim (1,1,1)$ and arbitrary $s$. Based on this observation we make the assumption that we are close to the $L_3$ spin coherent state. We thus have $s\gg \langle a^*a\rangle$ and $\lambda_3 \approx s$, hence $\zeta$ is linear in $s$. Furthermore we can order the weights, such that $w_1 \leq w_2 \leq w_3$ to minimize the expectation value.  Now we take the limit and let $s$ becomes large, the operator converges to harmonic oscillator
\begin{equation}
	H(\mathbf{w})_{HP}= w_1 X^2 + w_2 P^2.
\end{equation}
Here we use that the expectation value of the harmonic oscillator is translation-invariant in phase space, so that we can choose $\xi$ and $\eta$ to be zero.
The state which minimizes the expectation of this operator is simply the harmonic oscillator ground state $\psi(m,\omega)$, with $m=\frac{1}{2 w_2}$ and $\omega=\sqrt{4 w_1 w_2}$. In the following these will be combined in the parameter $\alpha:=m \omega= \sqrt{\frac{w_1}{ w_2}}$.
For the comparison of this result with numerical calculations using the above described algorithm, we must express these ground states in a common basis $\ket n_{HP}$, i.e. decomposing $\psi(\alpha)$ in the basis of a harmonic oscillator with $\alpha=1$.  This transformation is given by
\begin{equation}
	\psi_n := \brAket{n}{\psi(\alpha)}= \frac{(\alpha )^{\frac 1 4}}{\sqrt{2^n \pi n!}} \int H_n(x) \exp\left(-\frac{(1+\alpha)}{2}x^2\right) dx,
\end{equation}
which is zero for odd $n$ and can be solved for even $n$ through
\begin{equation}
	\int H_n(x) \exp(-cx^2) dx = \frac{i^{3n} \sqrt{\pi} n! (c-1)^{\frac{n}{2}}}{\frac n 2 ! c^{\frac {n+1}{2}}}.
\end{equation}
The corresponding probability distribution is given by
\begin{equation}
	p_n:=|\psi_n|^2=\frac{\sqrt{\alpha}}{1+\alpha}\frac{(1-\alpha)^n}{(2+2\alpha)^n} \frac{n!}{(\frac n 2 !)^2} (1+(-1)^n).
\end{equation}
Because this is zero for odd $n$, we can set $n=2k$ and get
\begin{equation}\label{hpdis}
	p_{2k}=\frac{2\sqrt{\alpha}}{1+\alpha}\frac{(1-\alpha)^{2k}}{(2+2\alpha)^{2k}}\binom{2k}{k}.
\end{equation}
The above approximation does not necessarily yield the optimal states and it is not rigorously justified so far. As a first step, we compare the distribution $p_n$ with numerically determined ones for finite $s$. These tend to converge as shown in FIG.~\ref{pic:hpdingens}.
\begin{figure}[t] \centering
	\includegraphics[width=0.65\textwidth]{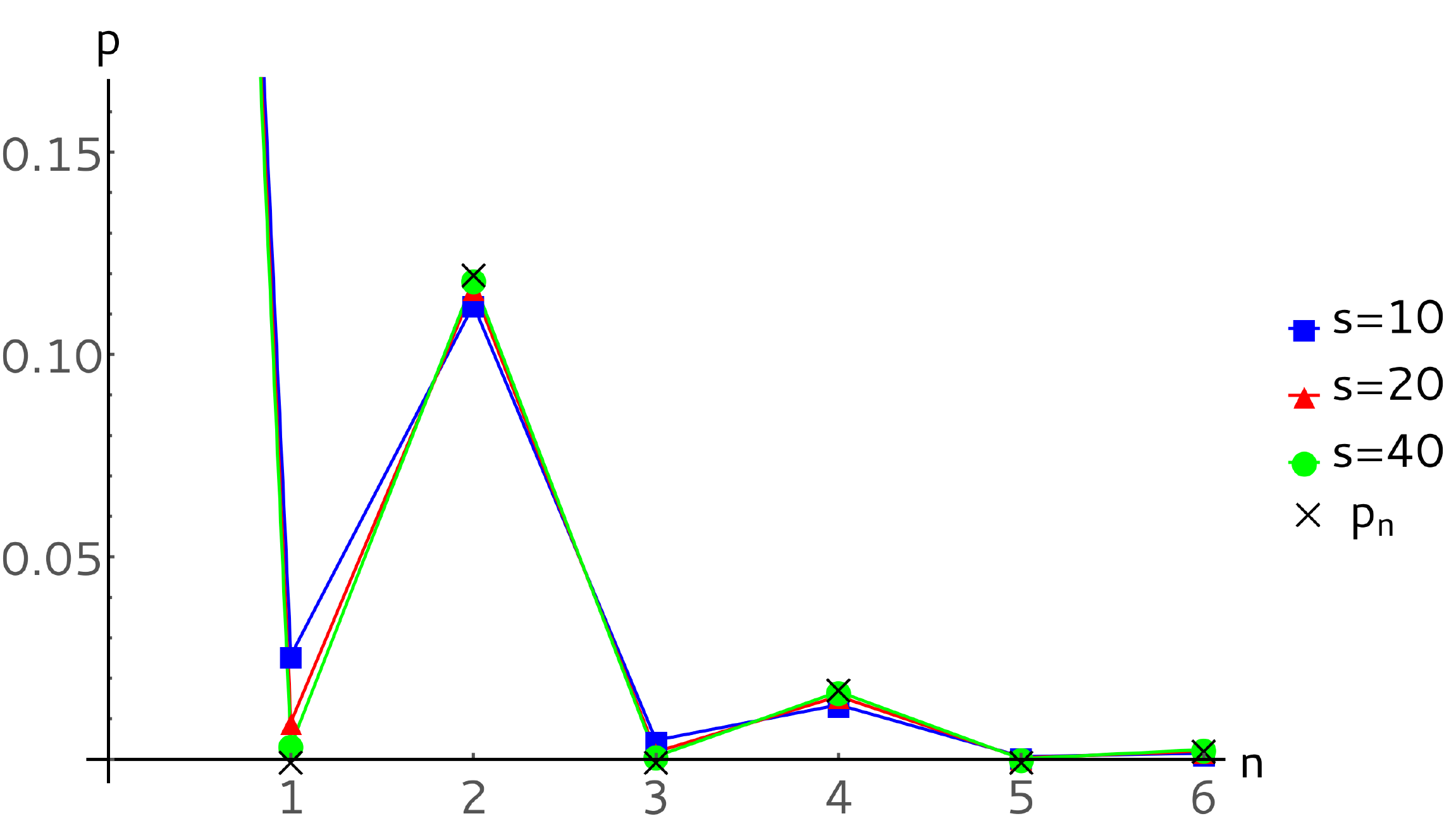}
	\caption{Comparison of the occupation number from $\psi(\alpha)$ with the numerical calculations.}
	\label{pic:hpdingens}
\end{figure}
\begin{thm}\label{thm:largeS}
	The lower bound of the asymptotic uncertainty region on a scale of 1/$s$ is fully described by the generalized Robertson inequality (\ref{asymrob}) and is saturated by the states $\psi(\alpha)$.
\end{thm}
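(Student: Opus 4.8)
The lower-bound half of the statement is already in hand: the scaled generalized Robertson inequality \eqref{rob1s} was shown to hold for every state and every $s$, and passing to $s\to\infty$ gives \eqref{asymrob}. The plan is therefore to establish the reverse inclusion, namely that the inequality is \emph{tight} in the limit, by exhibiting states whose scaled variance triple converges to the boundary surface $4\nu_1(\nu_2+\nu_3)=1$. The natural candidates are the oscillator ground states $\psi(\alpha)$ with occupation distribution \eqref{hpdis}: for each $s$ I would truncate $\psi(\alpha)$ to the $2s+1$ lowest Holstein-Primakoff levels and renormalize, obtaining a genuine spin-$s$ state $\psi_s(\alpha)$, and then compute the \emph{exact} spin variances $\var{\psi_s(\alpha)}{L_j}$ and let $s\to\infty$.

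The computation I would carry out rests on a few exact relations combined with one controlled approximation. Since $\psi(\alpha)$ is supported on even occupation numbers, parity forces $\tr(\psi_s L_1)=\tr(\psi_s L_2)=0$ and kills the off-diagonal entries of $\Lambda$, so the scaled variances are directly the ordered eigenvalues $\nu_j$. The sum $\var{}{L_1}+\var{}{L_2}=s(s+1)-\langle L_3^2\rangle$ is exact, while the difference $\var{}{L_1}-\var{}{L_2}=\tfrac12\langle L_+^2+L_-^2\rangle$ is the only genuinely anomalous quantity; using the Holstein-Primakoff form \eqref{approx} one finds $\langle L_+^2\rangle=2s\langle a^2\rangle+\Order(1)$ with $\langle a^2\rangle=\tfrac14(\alpha^{-1}-\alpha)$ for the ground state. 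Together with $\langle L_3\rangle=s-\langle N\rangle$ and the finite, $s$-independent limits of $\langle N\rangle$ and $\langle N^2\rangle$, solving the resulting sum-and-difference system yields $\nu_1\to\tfrac1{2\alpha}$, $\nu_2\to\tfrac\alpha2$ and $\nu_3\to0$ as $s\to\infty$. The triple $(\tfrac1{2\alpha},\tfrac\alpha2,0)$ manifestly satisfies $4\nu_1\nu_2=1$, so \eqref{asymrob} is saturated in the limit.

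The delicate point, and the step I expect to be the main obstacle, is turning the heuristic reduction \eqref{approx2} into rigorous control of the $\Order(s^{-1/2})$ corrections, since $\|L_\pm\|$ grows like $s$ and naive error propagation is worthless. The remedy is to exploit that $\psi(\alpha)$ concentrates on low occupation numbers: the distribution \eqref{hpdis} has geometrically decaying tails, so every moment $\langle N^p\rangle_{\psi_s}$ stays bounded uniformly in $s$, the truncation error $\|\psi_s(\alpha)-\psi(\alpha)\|$ decays exponentially and thus beats the polynomial operator growth, and the correction factors $\sqrt{1-N/2s}$ hidden in $L_\pm$ contribute only relative errors of order $N/s$. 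Making these three estimates precise is exactly what upgrades the formal limit to an honest one, and is where the real work lies.

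Finally, to justify the word \emph{fully} I would combine saturation with the geometry already established. As $\alpha$ ranges over $(0,1]$, the states $\psi_s(\alpha)$ trace the hyperbolic arc $(\tfrac1{2\alpha},\tfrac\alpha2,0)$ lying in a coordinate plane; applying Haar rotations to each $\psi_s(\alpha)$ and invoking the Birkhoff / doubly-stochastic argument from the start of Sect.~\ref{sec:prep} fills the permutation hexagon through each such point, and Lemma~\ref{roblem} identifies the convex hull of these arcs with the entire region cut out by \eqref{asymrob}. Hence the lower boundary of the asymptotic uncertainty region coincides with the surface $4\nu_1(\nu_2+\nu_3)=1$ and is attained by the family $\psi(\alpha)$, which is the assertion.
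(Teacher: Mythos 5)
Your proposal follows essentially the same route as the paper's proof: truncate the Holstein--Primakoff oscillator ground states $\psi(\alpha)$ to spin-$s$ states, show that the scaled variance triples converge to $\bigl(\tfrac{1}{2\alpha},\tfrac{\alpha}{2},0\bigr)$, conclude saturation of \eqref{asymrob}, and invoke Lemma~\ref{roblem} (with permutations and convexity) to identify these arcs as the full extremal boundary. The differences are only computational details --- you obtain the limiting variances from parity plus exact sum/difference identities, where the paper uses the operator convergence $L_1/\sqrt{s}\to Q$, $L_2/\sqrt{s}\to P$ after checking finiteness of $\langle a^*a\rangle$ and $\langle (a^*a)^2\rangle$, and your sketch of the truncation and tail estimates is, if anything, more explicit than the paper's.
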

\begin{proof}
	First we will show that the approximation (\ref{approx}) is justified for $\psi(\alpha)$ and evaluate the corresponding asymptotic variances.
	While the generalized Robertson inequality gets stronger for increasing $s$, every extremal point of the corresponding boundary is attained by $\psi(\alpha)$, which will prove the above statement.
	Moreover, by truncating the sequence $\psi_n(\alpha)$ at $n=2s+1$ and renormalizing, we get a sequence of spin-$s$ states well approximating $\psi(\alpha)$ as $s$ goes to infinity. \\
	With this in mind, we will prove the above statement in two steps:\\
	(i)
	On the one hand we have to verify that $\lim\limits_{s\rightarrow\infty} \frac {1}{ \sqrt{s}}L_{+}\ket{\psi(\alpha)}=\sqrt{2}a\ket{\psi(\alpha)}$ and $\lim\limits_{s\rightarrow\infty} \frac {1}{ \sqrt{s}}L_{-}\ket{\psi(\alpha)}=\sqrt{2}a^*\ket{\psi(\alpha)}$
	which is true if $\psi(\alpha)$ is in the domain of $a^*a$. On the other hand we have to show that the term $\frac{w_3}{s} (s -a^*a -\zeta)^2 $ from (\ref{approx2}) will vanish for $\psi(\alpha)$ with $\zeta=\brAAket{\psi(\alpha)}{s-a^*a}{\psi(\alpha)}$ as $s$ goes to infinity.
	Both requirements are fulfilled if the moments $\brAAket{\psi(\alpha)}{a^*a}{\psi(\alpha)}$ and $\brAAket{\psi(\alpha)}{(a^*a)^2}{\psi(\alpha)}$ are finite.\\
	In the Holstein-Primakoff occupation basis $\ket{n}_{HP}$, these moments are given by series of the form
	\begin{equation}
		\sum\limits_{n=0}^\infty n^c p_n=\sum\limits_{k=0}^\infty (2k)^c p_{2k},
	\end{equation}
	and can be computed as derivatives using the generating function
	\begin{equation}
		\frac{c_1}{\sqrt{1-4x}} = c_1 \sum\limits_{n=0}^\infty x^k \binom{2k}{k}
	\end{equation}
	of the probability distribution $p_{2k}$ \eqref{hpdis}. By straightforward calculations we get
	\begin{align}
		\brAAket{\psi(\alpha)}{a^*a}{\psi(\alpha)}&=
		\frac{(1-\alpha)^2}{4\alpha}\\
		\brAAket{\psi(\alpha)}{(a^*a)^2}{\psi(\alpha)}&=
		\frac{3(1-\alpha)^4}{16\alpha^2}+\frac{(1-\alpha)^2}{2\alpha},
	\end{align}
	which is finite for $\alpha > 0$. \\
	(ii) Now the asymptotic variances for $\psi(a)$ can be determined. For $\psi(\alpha)$ the operators $ \frac{1}{\sqrt s}L_1 $, $\frac{1}{\sqrt s}L_2$ and $\frac{1}{\sqrt s} L_3$ converge to $P$, $Q$ and a multiple of the identity, we obtain
	\begin{align}
		\lim\limits_{s\rightarrow\infty}\frac{1}{s}
		\var{\psi(\alpha)}{L_1}&=\var{\psi(\alpha)}{Q} =\frac{1}{2\alpha}\\
		\lim\limits_{s\rightarrow\infty}\frac{1}{s}
		\var{\psi(\alpha)}{L_2}&=\var{\psi(\alpha)}{P}=\frac{\alpha}{2}\\
		\lim\limits_{s\rightarrow\infty}\frac{1}{s}
		\var{\psi(\alpha)}{L_3}&= \lim\limits_{s\rightarrow\infty}\frac{1}{s}    \var{\psi(\alpha)}{s-a^*a}=0.
	\end{align}
	This set of variance triples $(\frac{1}{2\alpha},\frac{\alpha}{2},0)$ saturates the asymptotic generalized Robertson bound (\ref{asymrob}). Moreover they describe the extremal boundary curves, see the proof of theorem \ref{roblem}, of the associated uncertainty region.
\end{proof}

\section{Preparation uncertainty: Special topics}

\subsection{The vector model and moment problems}\label{sec:vecMod}
This may be a good place to comment on the so-called vector model of angular momentum, as it was suggested by Old Quantum Theory. It still seems to be quite popular in teaching, although theoreticians tend to deride it as ridiculously classical and obviously inconsistent. Indeed, its two-particle version gives manifestly false predictions even for spin-1/2, as witnessed by Bell's (CHSH) inequality. Since {\it any} local classical model fails this test, not much can be learned about angular momentum from this observation. Therefore we consider here only the one-particle version, and try to sort out how far it can be trusted.

The basic rationale of the vector model is shown in FIG.~\ref{fig:vmodel}: Angular momentum is thought of as a classical random variable taking values on a sphere of radius $r_s=\sqrt{s(s+1)}$. For an eigenstate $\ket m$ the corresponding classical distribution is supposed to be concentrated at latitude $m$, and uniform with respect to rotations around the 3-axis. The expectation value of this distribution is $(0,0,m)$. Moreover, its matrix of second moments is also diagonal, since the coordinate axes are clearly the inertial axes of a mass uniformly distributed on a circle of fixed latitude. One readily checks that all second moments are the same as for the corresponding quantum state. This can be generalized:
\begin{figure}[ht]
	\includegraphics[width=0.35\textwidth]{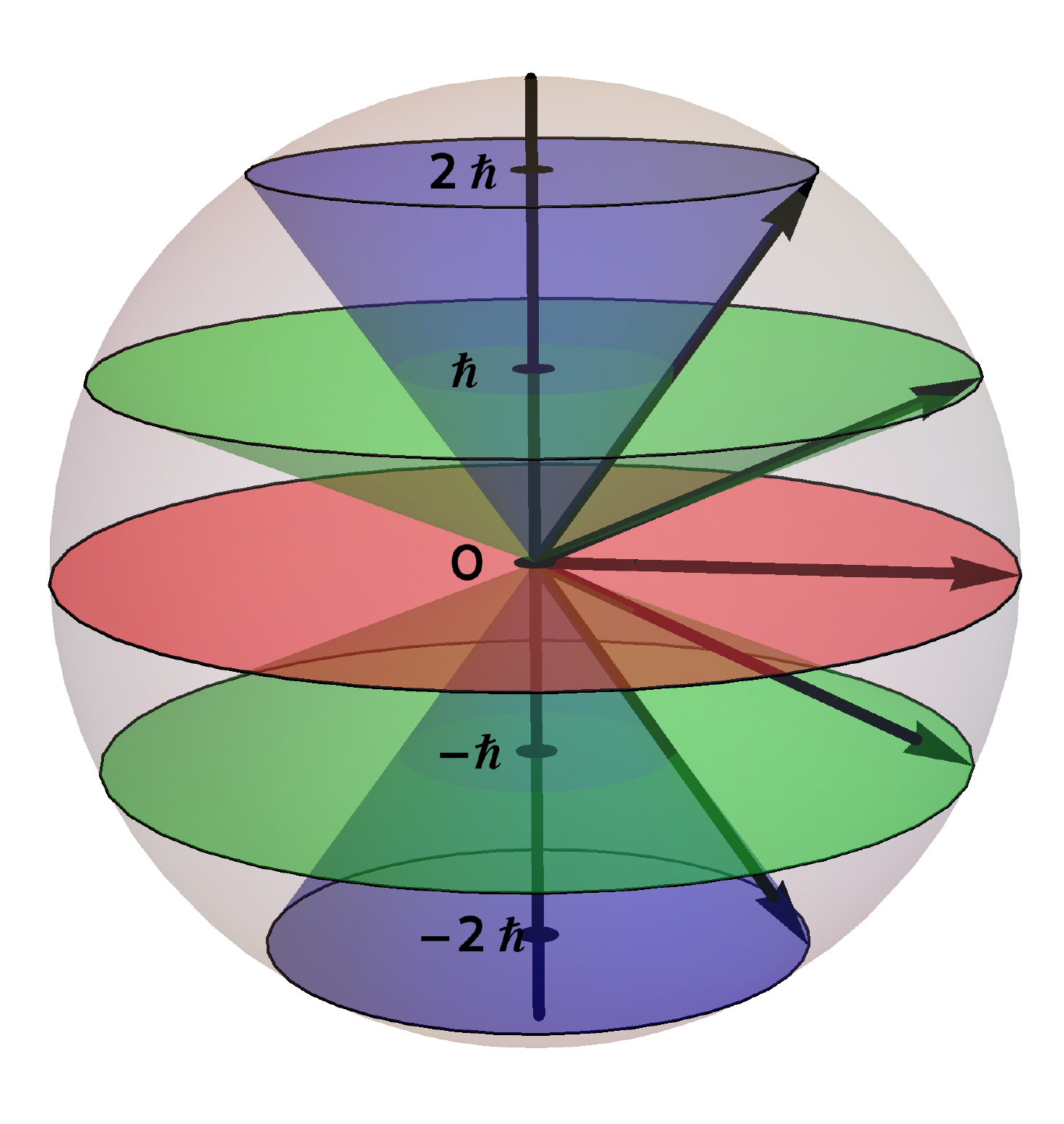}
	\caption{The well known vector model of anguluar momentum for $s=2$}
	\label{fig:vmodel}
\end{figure}

\begin{prop}\label{prop.vecmodel}
	For any quantum state $\rho$ on $\HH$ there is a classical probability distribution $\mu$ on the sphere of radius $\sqrt{s(s+1)}$ which has the same first and second moments as the angular momentum in $\rho$, i.e.,
	\begin{equation}\label{vmmoments}
	m_j=\int\mu(dx)\, x_j=\tr\rho L_j   \qquad\mbox{and}\quad
	M_{jk}=\int\mu(dx)\, x_jx_k= \re\tr\rho L_jL_k .
	\end{equation}
\end{prop}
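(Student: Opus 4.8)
The plan is to read the statement as a (classical) realizability question for a truncated moment sequence on the sphere, and to split it into a necessity part, which is immediate, and a sufficiency part, which carries the content. Write $\lambda_j=\tr\rho L_j$, collect them in $\vlambda\in\Rl^{3}$, let $M$ be the real symmetric matrix with entries $M_{jk}=\re\tr\rho L_jL_k$, and recall the covariance matrix $\Lambda=\Lambda(\rho)$ of \eqref{Lambda}, so that $M=\Lambda+\vlambda\vlambda^{T}$. Any probability measure $\mu$ supported on the sphere of radius $r_s=\sqrt{s(s+1)}$ must have first/second moment data obeying two constraints: (a) $\Lambda\succeq0$, equivalently the bordered moment matrix
\[
 G=\begin{pmatrix}1&\vlambda^{T}\\ \vlambda&M\end{pmatrix}\succeq0
\]
(by the Schur complement, whose value is $M-\vlambda\vlambda^{T}=\Lambda$); and (b) $\tr M=r_s^{2}$, which is just $\int|x|^{2}\,d\mu=r_s^{2}$. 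The proposition follows once I show that (a) and (b) are simultaneously satisfied by the quantum data and are jointly sufficient for a representing measure to exist.

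First I would verify that the quantum moments obey both constraints. Condition (a) is exactly the positivity of all variances: by \eqref{vLambda} one has $\sum_{jk}e_je_k\Lambda_{jk}=v(\ve)=\var\rho{\eL}\ge0$ for every unit vector $\ve$, so the real symmetric matrix $\Lambda$ is positive semidefinite. Condition (b) is the Casimir identity, $\sum_jM_{jj}=\tr\rho\,(L_1^{2}+L_2^{2}+L_3^{2})=\tr\rho\,\vL^{2}=s(s+1)=r_s^{2}$. These are the two normalizations already recorded in \eqref{triSum} and \eqref{vaverage}, so the quantum data lands inside the admissible region cut out by (a) and (b).

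The heart of the argument is the converse: every $(\vlambda,M)$ with $\Lambda\succeq0$ and $\tr M=r_s^{2}$ is the moment data of some probability measure on the sphere of radius $r_s$. I would argue by separation. The set of attainable $(1,\vlambda,M)$ is the closed convex moment body, namely the closed convex hull of the moment curve $x\mapsto(1,x,xx^{T})$ over the sphere; since the sphere is compact this body is closed, so $(1,\vlambda,M)$ belongs to it unless some affine functional separates it, i.e.\ unless there is a quadratic $p(x)=c+b\cdot x+x^{T}Cx$ with $p\ge0$ on the sphere yet $c+b\cdot\vlambda+\langle C,M\rangle<0$, where $\langle C,M\rangle=\sum_{jk}C_{jk}M_{jk}$. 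To exclude this I would use that the sphere is the zero set of the single quadratic $g(x)=|x|^{2}-r_s^{2}$, for which the S-procedure is lossless (the exactness underlying the trust-region subproblem): $p\ge0$ on the sphere iff $\tilde p:=p-\tau g\ge0$ on all of $\Rl^{3}$ for some $\tau\in\Rl$. Condition (b) makes the tested functional annihilate $g$ (its value on $g$ is $\tr M-r_s^{2}=0$), so it takes the same value on $p$ and on $\tilde p$; and a globally non-negative quadratic $\tilde p$ corresponds to a positive semidefinite matrix $P$ with $c+b\cdot\vlambda+\langle C,M\rangle=\langle P,G\rangle$, which is $\ge0$ by (a). Hence no separating functional exists and a representing measure is obtained.

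The step I expect to be the main obstacle is exactly this sufficiency, since the mean and the second moments must be matched \emph{at the same time} on a curved, compact surface: mass placed so as to produce a prescribed mean also contributes rigidly to the second moments, so it is not obvious a priori that $\Lambda\succeq0$ together with the trace condition suffices, and the geometry of the sphere (S-lemma exactness) is what rescues the count. If one prefers an explicit measure to the separation argument, I would, after rotating $\vlambda$ onto an axis $\hat\lambda$, first use a mixture of latitude circles about $\hat\lambda$ (with suitably chosen heights and angular profiles) to realize the mean $|\vlambda|\,\hat\lambda$ together with part of the covariance, and then add antipodal point pairs $\frac{1}{2}\bigl(\delta_{r_s u}+\delta_{-r_s u}\bigr)$, which carry zero mean but an arbitrary rank-one second moment, to correct $M$ to its target; the trace bookkeeping closes automatically because every building block has second-moment trace $r_s^{2}$, and the requirement that the leftover weights stay non-negative is again precisely the inequality $\Lambda\succeq0$, so the two routes meet at the same point.
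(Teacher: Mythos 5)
Your main argument is correct, and it shares its skeleton with the paper's proof: the paper establishes precisely your two-sided lemma (necessity of $\tr M=s(s+1)$ and of $\Lambda\succeq0$ being the easy half), and it, too, proves sufficiency by separation over the compact convex body of sphere moments, reducing everything to the claim that any quadratic $p\geq0$ on the sphere must evaluate nonnegatively on the data $(\vlambda,M)$. Where you differ is in how that claim is certified. The paper does it by hand: it restricts to \emph{extremal} supporting functionals, which must vanish at a minimizer $u$ on the sphere, uses the Lagrange condition $2Au-b=2\lambda u$ together with the sphere identity $2u\cdot(v-u)=-(v-u)^2$ to rewrite the functional as $(v-u)\cdot A(v-u)$ with $A\succeq0$ (the multiplier being absorbed by adding a multiple of the identity, i.e.\ of the sphere equation), and then checks positivity on the data via $\tr VA+(m-u)\cdot A(m-u)\geq0$. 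You instead invoke the S-lemma with equality---exactness of the equality-constrained trust-region relaxation---to write $p$ as a globally nonnegative quadratic plus $\tau\bigl(|x|^2-s(s+1)\bigr)$ with $\tau\in\Rl$, and finish with the bordered moment matrix inequality $\langle P,G\rangle\geq0$. Your route is shorter and cleanly modular, but it leans on a nontrivial external fact that must be quoted in its \emph{equality} form: the standard inequality S-lemma does not suffice, and the equality version fails for general quadratic constraints; it does hold here because $|x|^2-s(s+1)$ takes both signs (Mor\'e's theorem on the equality-constrained trust-region problem). In substance, the paper's extremality computation is a self-contained inline proof of exactly the special case of that lemma you need. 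One minor caveat: your optional ``explicit measure'' sketch at the end is not as automatic as stated---if $\Lambda$ is singular in the direction of $\vlambda$, the mean constraint forces the antipodal pairs to carry zero weight, and one must resort to anisotropic angular profiles on a single latitude circle---but since this is an aside, it does not affect the correctness of your proof.
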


For the proof we only need a characterization of the moments $(m_j,M_{jk})$ of probability measures on a sphere of radius $r$, which turns out to be quite simple. This in turn provides an immediate proof of Proposition~\ref{prop.vecmodel}, since the quantum moments $M_{jk}=\re\tr\rho L_jL_k$, $m_j=\tr\rho L_j$ obviously have the required properties with radius $r^2=s(s+1)$.

\begin{lem}
	Let $(m_{j})\in\Rl$ for $j=1,2,3$, and $(M_{jk})_{j,k=1}^3$ a real symmetric $3\times3$-matrix. Then these numbers are the first and second moments of a probability distribution on the sphere of radius $r$ if and only if $\tr M=r^2$ and the covariance matrix $M_{jk}-m_jm_k$ is positive semi-definite.
\end{lem}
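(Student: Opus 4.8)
The plan is to prove the two implications separately, with essentially all the work in the sufficiency (``if'') direction, which is an explicit moment‑realization problem; the necessity direction is immediate. Throughout write $C_{jk}=M_{jk}-m_jm_k$ for the covariance matrix and view $m=(m_1,m_2,m_3)$ as a vector.

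\emph{Necessity.} If $\mu$ is supported on the sphere of radius $r$, then $\sum_j x_j^2=r^2$ holds $\mu$-almost everywhere, so $\tr M=\int\sum_j x_j^2\,\mu(dx)=r^2$. Moreover, for every $a\in\Rl^3$,
\[
\sum_{jk}a_ja_k\,C_{jk}=\int(a\cdot x)^2\,\mu(dx)-\Bigl(\int a\cdot x\,\mu(dx)\Bigr)^2=\mathrm{Var}(a\cdot x)\geq 0,
\]
so $C$ is positive semidefinite. This is the easy half.

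\emph{Sufficiency.} Set $\ell^2:=r^2-|m|^2$. Since $C\geq0$ we have $\tr C=\tr M-|m|^2=r^2-|m|^2=\ell^2\geq0$, whence $|m|\leq r$ and $m$ lies in the closed ball. If $\ell=0$ then $C=0$ and $m$ sits on the sphere, so the Dirac measure $\delta_m$ works. Assume $\ell>0$. The key building block is a two‑point (secant) measure: for any unit vector $v$, the line $t\mapsto m+tv$ meets the sphere in two points, because the quadratic $t^2+2(m\cdot v)t+(|m|^2-r^2)=0$ has roots of opposite sign (their product is $|m|^2-r^2<0$); call them $t=a>0$ and $t=-b<0$, so the points are $m+av$ and $m-bv$ with $a,b>0$ and $ab=r^2-|m|^2=\ell^2$. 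Put mass $b/(a+b)$ on $m+av$ and $a/(a+b)$ on $m-bv$. The resulting probability measure $\nu_v$ has mean exactly $m$, and its covariance is
\[
\tfrac{b}{a+b}\,a^2\,vv^T+\tfrac{a}{a+b}\,b^2\,vv^T=ab\,vv^T=\ell^2\,vv^T .
\]
The decisive feature is that this covariance equals $\ell^2\,vv^T$ for \emph{every} direction $v$, with the common factor $\ell^2$ independent of $v$ and of how $v$ sits relative to $m$.

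\emph{Assembly.} Diagonalize $C=\ell^2\sum_{i=1}^3\beta_i\,u_iu_i^T$ with $u_i$ orthonormal and $\beta_i\geq0$; since $\tr C=\ell^2$ we have $\sum_i\beta_i=1$. Define $\mu=\sum_i\beta_i\,\nu_{u_i}$. Each $\nu_{u_i}$ has mean $m$, so $\mu$ has mean $m$; and because all components share the \emph{same} mean there is no between‑component term in the total covariance, giving $\mathrm{Cov}(\mu)=\sum_i\beta_i\,\ell^2 u_iu_i^T=C$. Hence $\mu$ is supported on the sphere with first moments $m_j$ and second moments $M_{jk}=C_{jk}+m_jm_k$, as required (and it uses at most six points, so no appeal to Carath\'eodory is needed). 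The hypothesis $\tr M=r^2$ enters precisely here: it forces $\tr C=\ell^2$, which simultaneously guarantees $|m|\leq r$ and that the spectral weights $\beta_i$ sum to one.

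The main obstacle is matching the prescribed mean while retaining enough freedom to realize an arbitrary covariance. A naive attempt to split $C$ into directions orthogonal to $m$ fails, since $C$ may have a component along $m$ (think of two antipodal poles, where the mean and the one nonzero variance direction coincide). The secant observation dissolves this difficulty: every direction $v$ is available at the \emph{same} weight $\ell^2$ once the mean is pinned to $m$, so fitting $C$ collapses to the trivial linear‑algebra fact that any positive semidefinite matrix of trace $\ell^2$ is $\ell^2$ times a convex combination of rank‑one projectors, supplied by its spectral decomposition.
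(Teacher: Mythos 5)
Your proof is correct, and it takes a genuinely different route from the paper. The paper argues by convex duality: it views the set $K$ of admissible moment pairs $(m,M)$ as a compact convex set, invokes the separation theorem to characterize $K$ by affine inequalities $\tr AM - b\cdot m + \gamma \geq 0$, reduces to the extremal such inequalities, shows these have the form $(v-u)\cdot A(v-u)\geq 0$ with $A$ positive semidefinite, and then checks that positive semidefiniteness of the covariance matrix implies all of them. You instead construct the measure explicitly: the observation that any secant through an interior point $m$ yields a two-point measure with mean $m$ and covariance exactly $\ell^2\,vv^T$, where $\ell^2 = r^2 - |m|^2$ is the classical ``power of the point'' and is independent of the direction $v$, reduces the problem to the spectral decomposition of $C$, whose eigenvalues sum to $\ell^2$ precisely because $\tr M = r^2$. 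The law-of-total-variance step is handled correctly (all mixture components share the mean $m$, so no between-component term appears), and the degenerate case $\ell = 0$ is covered by a Dirac measure. What your approach buys is explicitness and economy: a realizing measure supported on at most six points, with no appeal to compactness, separation theorems, or Lagrange multipliers; what the paper's approach buys is that it sits naturally in the general moment-problem framework (characterizing all extremal affine constraints on $K$), which is the theme of the surrounding section. Both proofs are complete; yours is the more elementary and arguably the more informative, since it exhibits the distribution rather than merely certifying its existence.
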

\begin{proof}
	Necessity is obvious because covariance matrices are always positive and the function $v\mapsto\sum_jv_j^2=r^2$ is constant on the sphere.
	For the converse consider the set $K$ of moments $(m,M)$ of probability measures on the sphere. This is a compact convex set, which we can think of
	as embedded into $3+6-1$ dimensional real space, because the real symmetric matrix $M$ is specified by $6$ parameters, and we have an additional linear constraint $\tr M=r^2$. By the separation theorems for compact convex sets the set $K$ is therefore completely characterized by a collection of affine inequalities
	\begin{equation}\label{affuncVM}
		f(m,M)=\tr AM-b\cdot m+\gamma\geq0,
	\end{equation}
	where $A$ is real symmetric, $b\in\Rl^3$ with the dot indicating scalar product, and $\gamma\in\Rl$. The functionals for which these inequalities have to be satisfied are precisely those for which the above inequality holds for all {\it pure} probability measures, i.e., for $M_{jk}=v_jv_k$ $m_j=v_j$ for some $v\in\Rl^3$ on the sphere. In this case we slightly abuse notation and write $f(m,M)=f(v)$.

	Not all inequalities are needed to characterize $K$, but only the extremal ones, which furnish a minimal subset from which all the others follow as linear combinations with positive scalar factors. In particular, we can assume that $f$ is not strictly positive, so has a zero $f(u)=0$, which then also has to be a minimum. The extremality condition gives $2Au-b=2\lambda u$, where $\lambda\in\Rl$ is a Lagrange multiplier. This determines $b$, and from $f(u)=0$ we get $\gamma$, so that we can rewrite
	\begin{equation}\label{avvuncVM}
		f(v)=(v-u)\cdot A(v-u)+2\lambda u\cdot(v-u).
	\end{equation}
	Now since $u,v$ lie on a sphere of radius $r$, we can write $2u\cdot(v-u)=-(v-u)^2+(v^2-u^2)=-(v-u)^2$, so we can combine the two terms, and obtain again the form \eqref{avvuncVM}, with $A$ modified by a multiple of the identity, and $\lambda=0$.

	It remains to determine all real symmetric matrices $A$ such that $(v-u)\cdot A(v-u)\geq0$, whenever $u,v$ lie on a sphere of radius $r$. Equivalently, $\xi\cdot A\xi\geq0$ for all multiples of vectors of the form $v-u$. But this set is dense in $\Rl^3$. Hence the desired condition is just the positive semi-definiteness of $A$. The resulting inequality for $(m,M)$ can be rewritten in terms of the covariance matrix $V_{jk}=M_{jk}-m_jm_k$ as
	\begin{equation}\label{affuMcVM}
	f(m,M)=\tr VA+ (m-u)\cdot A(m-u)\geq0.
	\end{equation}
	Since the second term is anyhow positive, the positive semidefiniteness of $V$ is sufficient for all these inequalities. This shows the sufficiency of the conditions stated in the Lemma.
\end{proof}

Let us make some remarks, which all fit into a fruitful analogy here with the phase space case, i.e., the case of two canonical operators $P,Q$, and {\it moment problems} posed in the respective  contexts.
\begin{enumerate}
\item The phase space analogue of Prop.~\ref{prop.vecmodel} is the statement that for any quantum state the first and second moments can also be realized by a classical probability distribution on phase space.
Of course, not all classically allowed first and second moments can arise in this way: This is just the theme of preparation uncertainty relations.
\item The classical probability measure $\mu$ is not uniquely defined by $\rho$. For example, the density operator $\rho=(2s+1)\inv\idty$ can either be represented by the uniform distribution on the sphere, or by an equal-weight mixture of the distributions with constant latitude $m$ (in any direction). In the phase space case it is well-known that with the given, quantum-realizable moments one can always find a {\it Gaussian state}, which is defined as the distribution with the maximal entropy given those moments. The same idea also works for angular momentum, and it gives probability densities which are the exponential of a quadratic form in the variables. In contrast to the phase space case, when approaching eigenstates (any direction, any $m$) this entropy will go to $-\infty$, since for eigenstates only the singular measures depicted in FIG.~\ref{fig:vmodel} can be used.
\item Prop.~\ref{prop.vecmodel} is certainly false if we include higher than second moments. For example, consider a pure qubit state with $m=+1/2$. Without loss of generality we can choose the measure $\mu$ invariant under rotations around the 3-axis. Since $\mu$ must be concentrated on $m=1/2$, this uniquely fixes the measure $\mu$, and hence the moments to all orders. Now consider a direction $\ve$ which is at an angle strictly between $0$ and $\pi/2$ to $\ve_3$. Then the quantum expectation of $(\eL)^3=\eL/4$ is $(\ve{\cdot}\ve_3)/8$, but the classical expectation of $(x{\cdot}\ve)^3$ is larger, reflecting the non-linearity of the cube function.
\item The quantum analogue of the classical Hamburger moment would be to reconstruct a quantum state from the set of moments, i.e., the expectations of the monomials in the basic operators ($P,Q$, or $L_1,L_2,L_3$). Commutation relations impose some constraints on these moments, so that in the end only monomials like $L_1^{n_1}L_2^{n_2}L_3^{n_3}$ need to be considered. Of course, the expectation values of such operators will generally be complex numbers. Can we do the reconstruction for arbitrary states in the spin-$s$ representation? Indeed, we can, and it is actually much easier than in the phase space case since only finitely many moments suffice. The basic observation is that the moments fix all expectations on the von Neumann algebra $\mathcal A$ generated by the $L_i$. Because the representation is irreducible, the commutant of this algebra consists of the multiples of the identity. Hence $\mathcal A$ must be the full matrix algebra, and the state is uniquely determined. That finitely moments suffice is clear because $\dim\mathcal A<\infty$.
\item Noncommutative moment problems are plagued by ``operator ordering'' issues. But in some sense we have already adopted a standard ``symmetrized'' solution for operator ordering, namely to form moments only of the operators $\eL$ for all fixed $\ve$. This is analogous in the phase space case to considering the moments of linear combinations of $P$ and $Q$. Now, famously, the full distributions of all such combinations are correctly rendered by the Wigner distribution function, which is itself hardly ever positive \cite{Hudson}. The analogy to the angular momentum case is immediate. So what do we get if we accept ``quasi-probability distributions''? Can every state be represented like that? This is answered by the following proposition.
\end{enumerate}

\begin{prop}
  Let $\rho$ be a quantum state in the spin-$s$ representation. Then there is a unique tempered distribution $\wigner_\rho$ on $\Rl^3$, which is formally real, has support in a ball of radius $s$ and satisfies, for all $\ve$ and $n\in\Nl$
\begin{equation}\label{wignerMoments}
  \int d\veta\ \wigner_\rho(\veta)\, (\ve\cdot\veta)^n =\tr\bigl(\rho (\eL)^n\bigr)\ .
\end{equation}
\end{prop}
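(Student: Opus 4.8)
The plan is to build $\wigner_\rho$ as a Fourier transform of the quantum characteristic function and then read off all four claimed properties (support, moments, reality, uniqueness) from elementary features of that function. Concretely, I would set
$$\Phi(\vk)=\tr\bigl(\rho\, e^{i\vk\cdot\vL}\bigr),\qquad \vk\cdot\vL=k_1L_1+k_2L_2+k_3L_3,\quad \vk\in\Rl^3.$$
Since $e^{i\vk\cdot\vL}$ is unitary, $\Phi$ is a bounded continuous function, hence a tempered distribution, and I define $\wigner_\rho:=\mathcal F^{-1}[\Phi]$, so that $\widehat{\wigner_\rho}=\Phi$ with the convention $\widehat u(\vk)=\int u(\veta)\,e^{i\vk\cdot\veta}\,d\veta$; this already exhibits $\wigner_\rho$ as a tempered distribution. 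Because $\vk\mapsto\vk\cdot\vL$ is linear and the matrix exponential and the trace are entire, $\Phi$ extends to an entire function on $\Cx^3$, and this analyticity is the engine behind both the support statement and uniqueness.

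The crux — and the step I expect to be the main obstacle — is the support claim, for which I would invoke the Paley--Wiener--Schwartz theorem: a tempered distribution is supported in the closed ball $\{\abs\veta\le s\}$ iff its Fourier transform is entire with $|\Phi(\vk+i\mathbf w)|\le C(1+\abs{\vk+i\mathbf w})^N e^{s\abs{\mathbf w}}$ for real $\vk,\mathbf w$ (the support function of a ball being $\mathbf w\mapsto s\abs{\mathbf w}$). I therefore need the estimate $\opNorm{e^{i(\vk+i\mathbf w)\cdot\vL}}\le e^{s\abs{\mathbf w}}$, which I would get from the logarithmic-norm bound $\opNorm{e^{A}}\le e^{\mu(A)}$ with $\mu(A)=\lambda_{\max}\!\bigl(\tfrac12(A+A^*)\bigr)$: taking $A=i\vk\cdot\vL-\mathbf w\cdot\vL$ and using that $\vk\cdot\vL$ and $\mathbf w\cdot\vL$ are self-adjoint gives $\tfrac12(A+A^*)=-\mathbf w\cdot\vL$, whose largest eigenvalue is $s\abs{\mathbf w}$ because $\ve\cdot\vL$ has spectrum $\{-s,\dots,s\}$ for every unit $\ve$. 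Hence $|\Phi(\vk+i\mathbf w)|=\bigl|\tr\bigl(\rho\,e^{i(\vk+i\mathbf w)\cdot\vL}\bigr)\bigr|\le\opNorm{e^{i(\vk+i\mathbf w)\cdot\vL}}\le e^{s\abs{\mathbf w}}$, using $\tr\rho=1$; this is exactly the required bound with $N=0$, $C=1$, pinning $\operatorname{supp}\wigner_\rho$ inside the ball of radius $s$ and making $\wigner_\rho$ compactly supported, so that it may legitimately be paired with the polynomials $(\ve\cdot\veta)^n$.

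The remaining three properties are then immediate. For the moments, compact support lets me differentiate the pairing, so the Taylor coefficients in $t$ of $t\mapsto\Phi(t\ve)=\int\wigner_\rho(\veta)\,e^{it\,\ve\cdot\veta}\,d\veta$ are $\tfrac{(it)^n}{n!}\int\wigner_\rho(\veta)(\ve\cdot\veta)^n\,d\veta$; comparing with the direct expansion $\Phi(t\ve)=\sum_n\tfrac{(it)^n}{n!}\tr\bigl(\rho(\eL)^n\bigr)$ yields \eqref{wignerMoments} term by term. For formal reality I would use that $\rho$ and the $L_j$ are self-adjoint, so $\Phi(-\vk)=\tr\bigl(\rho\,e^{-i\vk\cdot\vL}\bigr)=\overline{\tr\bigl(\rho\,e^{i\vk\cdot\vL}\bigr)}=\overline{\Phi(\vk)}$, and this Hermitian symmetry of the Fourier transform is precisely the statement that $\wigner_\rho$ is a real (formally real) distribution. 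Finally, for uniqueness I would note that the difference $D$ of two admissible distributions is supported in the ball and has $\int D(\veta)(\ve\cdot\veta)^n\,d\veta=0$ for all $\ve$ and $n$; since the monomials $(\ve\cdot\veta)^n$, as $\ve$ and $n$ vary, span all polynomials by polarization, every derivative of the entire function $\widehat D$ vanishes at the origin, whence $\widehat D\equiv0$ and $D=0$.
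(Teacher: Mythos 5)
Your proposal is correct and follows essentially the same route as the paper: both identify the Fourier transform $\widehat\wigner_\rho(\vk)$ with the characteristic function $\tr\bigl(\rho\,e^{i\vk\cdot\vL}\bigr)$ and obtain the support claim from the Paley--Wiener--Schwartz theorem via the operator-norm bound $\Vert e^{i(\vk+i\mathbf w)\cdot\vL}\Vert\leq e^{s\abs{\mathbf w}}$. The only minor differences are that the paper proves this bound with the Trotter formula where you use the logarithmic-norm inequality, and that the paper derives $\widehat\wigner_\rho$ directly from the moment conditions (so uniqueness is immediate, at the cost of an informal regularization step), whereas you define the characteristic function first and then verify the moments and uniqueness separately -- both executions are sound.
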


\begin{proof}
We can compute the Fourier transform of $\wigner_\rho$ directly from \eqref{wignerMoments}, by multiplying with $(i k)^n/n!$ and summing over $n$. This turns the left side into the Fourier integral over $\wigner_\rho$ allowing the sum to be evaluated also on the right hand side:
\begin{equation}\label{wigner}
  \widehat\wigner_\rho(\vk)=\int d\veta\ \wigner_\rho(\veta)\, e^{i \vk\cdot\veta}
       =\tr\Bigl(\rho e^{i\vk\cdot\vL}\Bigr),
\end{equation}
where $\vk=k\ve$. Strictly speaking this computation should be regularized by multiplying with an arbitrary test function before summation, but this would lead to the same explicit representation of the Fourier transform $\widehat\wigner_\rho$ as a bounded $\CC^\infty$-function. This shows that the desired tempered distribution $\wigner_\rho$ is essentially unique, and can be defined for every $\rho$. It is formally real, because
$\widehat\wigner_\rho(-\vk)=\overline{\widehat\wigner_\rho(\vk)}$. For the claim about the support we invoke the distributional version of the Paley-Wiener-Schwartz Theorem \cite[Thm.~7.3.1]{Hoerman}. According to that theorem we need to show only that for real vectors $\vk,\vkap$ the estimate
\begin{equation}\label{paley}
  \left|\tr\bigl(\rho e^{i(\vk+i\vkap)\cdot\vL}\bigr)\right| \leq C e^{s|\vkap|}
\end{equation}
holds for some constant $C$. Treating the sum in the exponential by the Trotter formula, which we may, because these are finite dimensional matrices, we get
\begin{equation}\label{paleytrotter}
  \left\Vert e^{i(\vk+i\vkap)\cdot\vL}\right\Vert
    =\lim_{N\to\infty}\left\Vert\left(e^{i(\vk/n)\cdot\vL}\ e^{(-\vkap/n)\cdot\vL}\right)^n\right\Vert
    \leq\lim_{N\to\infty} \left(e^{(\vkap/n)\Vert\vL\Vert}\right)^n=e^{s|\vkap|}.
\end{equation}
Clearly this implies the desired estimate.
\end{proof}

This Proposition is remarkable in comparison to Prop.~\ref{prop.vecmodel}: If we insist on positivity but require only the first two moments to be correct, the vector model requires a sphere of radius $\sqrt{s(s+1)}$. On the other hand, if we waive instead the positivity of the classical distribution, we are forced to use a ball of radius $s$.

One might ask at this point, in continuation of the above analogies to the phase space case, whether there are ``classical'' states, for which the Wigner function $\wigner_\rho$ is a positive function. However, this is easily seen to be impossible. Indeed, when the marginal of a proper probability distribution has support on $\{-s,\ldots,s\}$, the measure itself has to have support on a union of hyperplanes $\{\veta\,|\,\veta\cdot\ve=m\}$. But these families of hyperplanes, drawn for various $\ve$ have empty intersection, contradicting the normalization of the measure.

The main use of Wigner functions on phase space is the visualization of quantum states. Unfortunately, the much more singular nature of $\wigner_\rho$ for angular momentum will prevent these Wigner functions from becoming similarly popular. This irregularity can be tamed by replacing on the right hand side of \eqref{wigner}
\begin{equation}\label{scully}
  e^{\textstyle i\vk\cdot\vL} \mapsto e^{\textstyle ik_1L_1}\ e^{\textstyle ik_2L_2}\ e^{\textstyle ik_3L_3}.
\end{equation}
The corresponding distribution is then a sum of point measures sitting on a finite cubical grid \cite{scully}. This may actually be useful in quantum information, where it relates to a discrete phase space structure over the cyclic group of $d=2s+1$ elements. However, for angular momentum proper we find this breaking of rotational symmetry abhorrent.

\subsection{Entropic uncertainty}\label{sec:entropic}
In this section we will have a look at the entropic uncertainty relations.
Given a measurement of a hermitian operator $A=\sum_ia_iP_i$, with eigenprojectors $P_i$, the probability of obtaining the $i^{\rm th}$ measurement outcome will be denoted by $\pi_i(\rho)=\tr(\rho P_i)$ and the associated probability distribution as $\pi(\rho)= \{\pi_1(\rho),\cdots,\pi_d(\rho)\}$. Then the output entropy of $A$ in the state $\rho$ is defined as the Shannon entropy of $\pi(\rho)$
\begin{align}
H(A,\rho):=H(\pi(\rho))=-\sum\limits_{i=1}^{d} \pi_i(\rho) \log_d (\pi_i(\rho)),
\end{align}
which serves as an uncertainty measure. Note that we normalize the Shannon entropy by its maximal value $\log d$ so that all occurring entropies are bounded by $1$.
In contrast to the variance, the entropy of a probability distribution does not change by permuting or rescaling the measurement outcomes and so only depends on the choice of the $P_i$ (up to permutations) and not on the eigenvalues $a_i$. This implies that an entropic uncertainty relation, which constrains the output entropies of two (for simplicity non-degenerate) observables $A,B$, only depends on the unitary operator $U$ connecting the respective eigenbases. A well-known bound in this setting is the general Maassen-Uffink bound \cite{muff}
\begin{eqnarray}\label{eq:muff}
H(A,\rho)+H(B,\rho)&\geq& -2\log_d(c)\\
c=c(U)&=&\max\limits_{ij}\abs{U_{ij}} .
\end{eqnarray}
\begin{figure}[h]
	\includegraphics[width=0.4\textwidth]{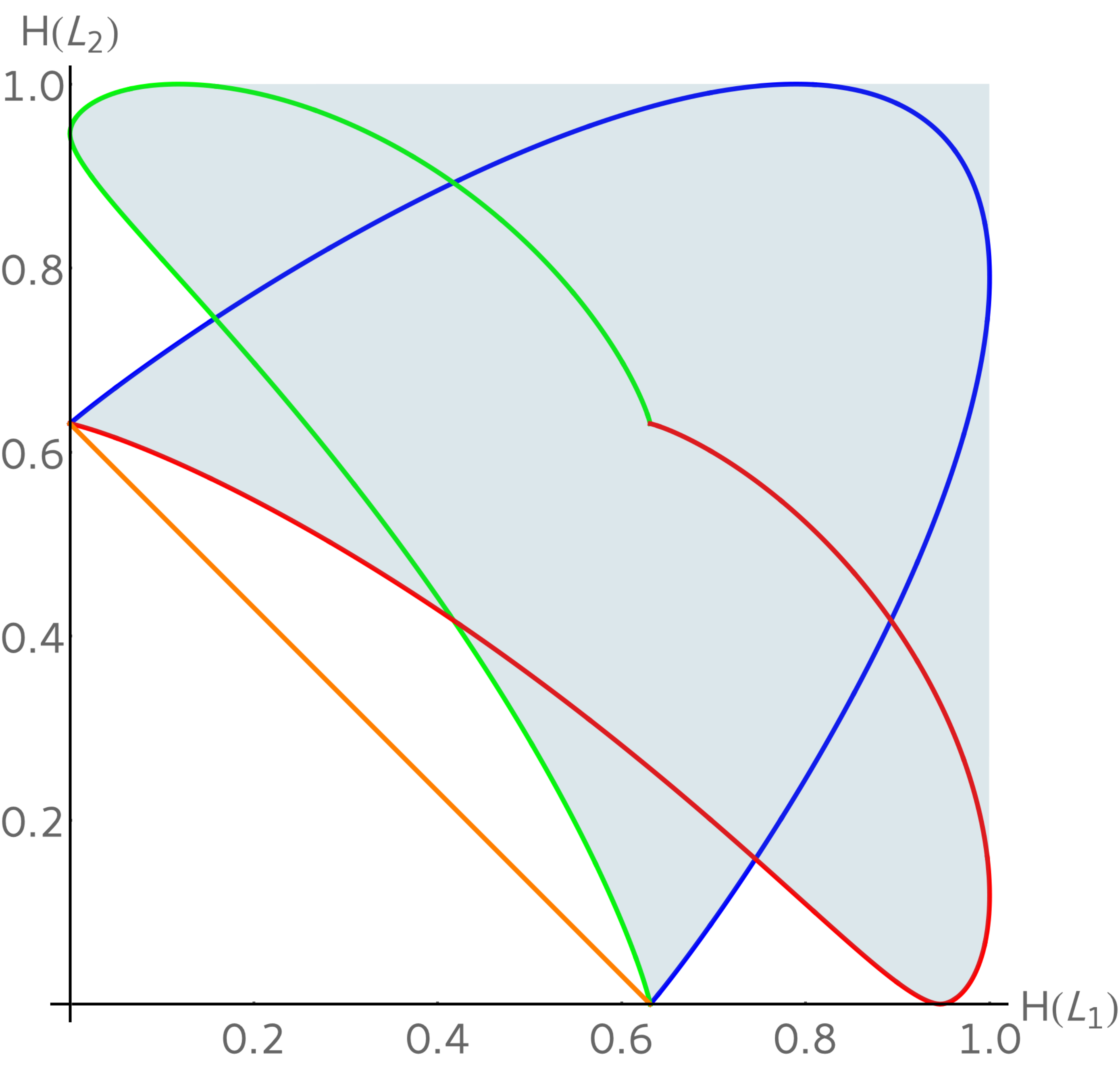} \quad
	\includegraphics[width=0.4\textwidth]{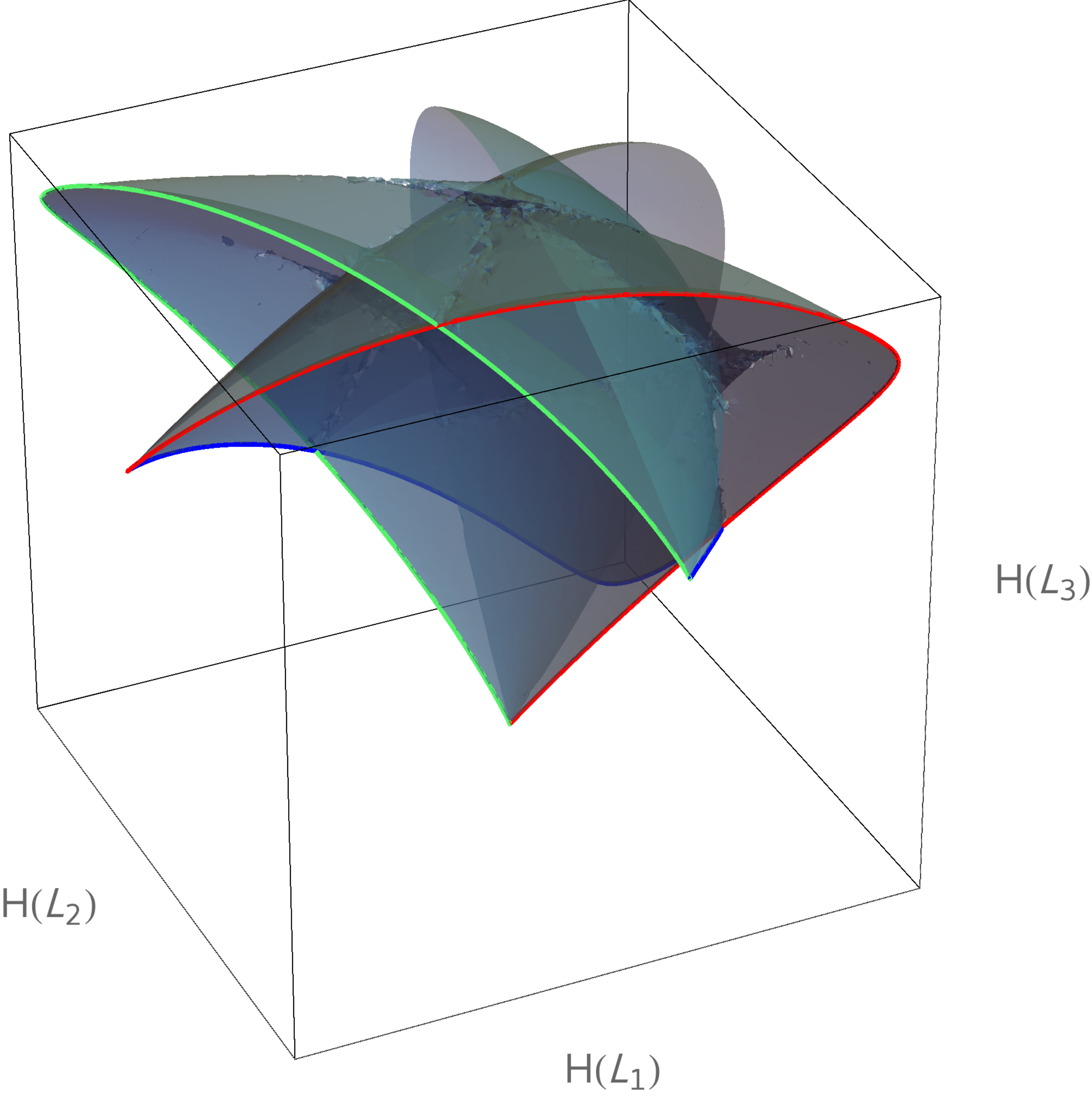}
	\caption{Entropic uncertainty regions for measurements of two and three orthogonal spin components ($s=1$).
		The orange line in the left panel is the Maassen-Uffink bound \eqref{eq:muff}, which is tight only in the case $s=1$.}
	\label{fig:entrobody}
\end{figure}
For angular momentum measurements in two orthogonal directions the connecting unitary operators are rotations, i.e., given as a rotation by $\pi/2$ around the third coordinate axis according to the spin-$s$ representation of $SO(3)$ on $\Cx^{2s+1}$. For arbitrary angles these representations are called Wigner-D matrices \cite{bieden} and will also be used in Sect.~\ref{minmeas}. It turns out that the Maassen-Uffink bound is in general not optimal, but describes precisely the uncertainty region for $s\to\infty$.

For spin $s=1$, the uncertainty region can still be reliably investigated by parameterizing the set of pure states in the $L_3$ eigenbasis. Numerics suggests that real valued states and their permutations characterize the lower bound of the uncertainty region. The resulting uncertainty regions for two and three components are shown in FIG.~\ref{fig:entrobody}, which should be compared directly to FIG.~\ref{fig:spin1mon}(left) and FIG.~\ref{fig:parabolics}(right).

The marked lines in FIG.~\ref{fig:entrobody} correspond to states of a form $\ket{\phi(t)}:=\left(\frac{\cos(t)}{\sqrt{2}},\sin(t),\frac{\cos(t)}{\sqrt{2}}\right)$, written in $L_3$-eigenbasis, and their permutations.  They correspond exactly to the extremal curves found for variance uncertainty as shown in  FIG.~\ref{fig:parabolics}.
Remarkably, however, the ordering of ``uncertainties'' turns out to be different in the two cases. Consider the $L_2$-eigenstates, which are shown in the left panels of both 2D Figures \ref{fig:spin1mon} and \ref{fig:entrobody} as the points on the horizontal axis. The respective $L_1$-probability distribution are $(1/4,1/2,1/4)$ for $m=\pm1$ and
$(1/2,0,1/2)$ for $m=0$. The former distribution has the larger entropy ($(3/2)\log_32 >\log_32$) but the smaller variance ($1/2<1$).

For larger $s$ this inversion no longer holds. FIG.~\ref{fig:muffcompare} shows this effect.
Because we can exchange the roles of $L_1$ and $L_2$ by a unitary rotation, the uncertainty diagrams are symmetric with respect to the diagonal. Therefore the optimal linear bound must be of the form \eqref{eq:muff}, with a suitable $c$. The entropy sums for the eigenstates $\ket m$ with minimal and maximal $\abs m$ are shown in FIG.~\ref{fig:muffcompare}. For all half-integer $s$ and for integer $s>7$ the coherent state $\ket s$ produces not only the lowest variance, but also the lowest entropy. FIG.~\ref{fig:muffcompare} also shows the Maassen-Uffink bound, which has been computed by S{\'a}nchez-Ruiz \cite{ruiz}. It is attained for the overlap of two spin coherent states and is given by
\begin{align}\label{eq:muffconst}
c^2=2^{-2s}\binom{2s}{[s+1/2]}.
\end{align}
$s=1$ seems to be the only case in which the bound is tight.

However, for large $s$, the bound is again optimal, as the following result shows.

\begin{prop}\label{lem:muff}
	In the limit $s\rightarrow\infty$ the optimal lower bound on the entropic uncertainty region of $L_1$ and $L_2$ is given by
	the Maassen-Uffink inequality, which converges to
	\begin{align}
		H(L_1,\rho)+H(L_2,\rho)\geq \frac{1}{2}\label{eq:muffasy}.
	\end{align}
\end{prop}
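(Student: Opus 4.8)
The statement combines two assertions: that the Maassen--Uffink right-hand side $-2\log_d c(s)$ tends to $\tfrac12$ as $s\to\infty$, and that this value is then asymptotically optimal, i.e.\ some family of states drives $H(L_1,\rho)+H(L_2,\rho)$ down to $\tfrac12$. Because \eqref{eq:muff} holds at every $s$, the first assertion already yields the limiting lower bound $H(L_1,\rho)+H(L_2,\rho)\ge\tfrac12+o(1)$, so the substance lies in producing saturating states. My plan is therefore (a) to read off the asymptotics of the explicit constant $c(s)$, and (b) to exhibit the extremal states and pin down their entropies by a two-sided estimate.

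For (a) I start from \eqref{eq:muffconst}, $c^2=2^{-2s}\binom{2s}{[s+1/2]}$. Since $[s+1/2]$ deviates from $s$ by at most $\tfrac12$, this is the central binomial coefficient up to a bounded factor, and Stirling gives $c^2=\Theta(s^{-1/2})$, hence $\ln(c^2)=-\tfrac12\ln s+\Order(1)$. As the entropies are normalized in base $d=2s+1$, one has $-2\log_d c=-\ln(c^2)/\ln(2s+1)=\bigl(\tfrac12\ln s+\Order(1)\bigr)/\bigl(\ln s+\Order(1)\bigr)\to\tfrac12$. This step is routine.

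For (b) the key observation --- and the point where the entropic and variance problems part ways --- is that the saturating states are \emph{not} the variance-optimal ones. The pole-aligned coherent state $\ket s$ has equal variances $s/2$ in the $1$- and $2$-directions by \eqref{LambdaEigen}, so both distributions spread on the $\sqrt s$ scale, both normalized entropies tend to $\tfrac12$, and their sum tends to $1$; the same happens for the Holstein--Primakoff states $\psi(\alpha)$, and more generally whenever $\langle L_3\rangle\approx s$, where Robertson \eqref{Robson} forbids both components from being sharp. Instead I take $\rho=\kettbra\phi$ with $\ket\phi$ the coherent state pointing along the measured $1$-axis, $L_1\ket\phi=s\ket\phi$. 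Then $L_1$ is sharp, so $H(L_1,\rho)=0$, while the rotated form of \eqref{LambdaEigen} gives $\var\rho{L_2}=s/2$ (indeed the $L_2$-distribution is the centred binomial $B(2s,\tfrac12)$). To evaluate $H(L_2,\rho)$ I sandwich it: applying \eqref{eq:muff} to this very state gives $H(L_2,\rho)\ge-2\log_d c\to\tfrac12$, while the elementary bound that an integer-valued variable of variance $\sigma^2$ has entropy at most $\tfrac12\ln\bigl(2\pi e(\sigma^2+\tfrac1{12})\bigr)$ nats, evaluated at $\sigma^2=s/2$ and divided by $\ln d$, gives $H(L_2,\rho)\le\tfrac12+o(1)$. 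Hence $H(L_1,\rho)+H(L_2,\rho)=H(L_2,\rho)\to\tfrac12$, matching the bound.

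The main obstacle is thus conceptual rather than computational: identifying that the extremal states live in the equatorial plane with $\langle L_3\rangle$ small, so that one component can be made exactly sharp while the other spreads only on the $\sqrt s$ scale --- precisely the regime in which Robertson imposes no penalty. Using the $L_1\leftrightarrow L_2$ rotation symmetry (a $\pi/2$ rotation about the $3$-axis), the mirror corner $(\tfrac12,0)$ is realized by the $2$-axis coherent state, so both extremal corners $(0,\tfrac12)$ and $(\tfrac12,0)$ lie on the line $H_1+H_2=\tfrac12$ and show the linear Maassen--Uffink bound cannot be raised. The only genuine estimate is the $\Theta(s^{-1/2})$ behaviour of the central binomial coefficient, immediate from Stirling.
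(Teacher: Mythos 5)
Your part (a) and your corner construction are sound, and your sandwich for $H(L_2,\rho)$ (Maassen--Uffink from below, the maximum-entropy-at-fixed-variance bound for integer-valued variables from above) is a clean alternative to the paper's route, which instead expands the Wigner-D matrix in Jacobi polynomials to identify the $L_2$-distribution of the $1$-axis coherent state as a binomial and then uses its known entropy asymptotics. However, there is a genuine gap: the proposition claims that the Maassen--Uffink inequality is the \emph{optimal lower bound on the entropic uncertainty region}, i.e.\ that asymptotically the lower boundary of the region of attainable pairs $\bigl(H(L_1,\rho),H(L_2,\rho)\bigr)$ is exactly the line $h_1+h_2=\tfrac12$. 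Realizing only the two corners $(0,\tfrac12)$ and $(\tfrac12,0)$ does not establish this. The uncertainty region need not be convex (the paper stresses this for variance regions, and the entropic region is a continuous image of the state space with no convexity guarantee), so a strictly stronger bound --- for instance a curved one passing through both corners but excluding interior points such as $(\tfrac14,\tfrac14)$ --- would remain consistent with everything you have shown. Your corners only prove that no bound of the linear form $h_1+h_2\geq c$ can have $c>\tfrac12$; they do not prove that the linear bound is the true boundary.

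The paper closes exactly this gap by constructing, for each $\alpha$, the superpositions $\ket{\psi_\alpha}=c_\alpha\bigl(\cos(\alpha)\ket{s}+\sin(\alpha)U_3\ket{s}\bigr)$ with $U_3=e^{-i\pi L_3/2}$, where $\ket s$ is your $1$-axis coherent state. Because the $L_1$-distributions of $\ket s$ (a point mass at $m=s$) and of $U_3\ket s$ (spread on the $\sqrt s$ scale) have asymptotically negligible overlap, the output entropies decouple and one gets $H(L_1,\psi_\alpha)\approx\tfrac12\sin^2\alpha$ and, exchanging the roles of the two terms under the $\pi/2$ rotation, $H(L_2,\psi_\alpha)\approx\tfrac12\cos^2\alpha$; sweeping $\alpha$ then realizes every point of the segment between your two corners. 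Your argument would be complete if you appended this (or an equivalent) interpolating family; as it stands it proves a weaker statement than the proposition asserts.
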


\begin{proof}
	As a first step we will compute the asymptotic behavior of the bound $-\log_{2s+1}[c^2]$, with $c$ given in $\eqref{eq:muffconst}$. Expanding the central binomial coefficient in factorials and using the Stirling approximation up to order $\log_{2s+1}(s)$ gives
	\begin{align}
		\lim\limits_{s \to \infty}-\log_{2s+1}[c^2]=
		\lim\limits_{s \to \infty}-\log_{2s+1}\left[2^{-2s}\binom{2s}{[s+1/2]}\right]=1/2,
	\end{align}
	which proves the convergence of the Maassen-Uffink bound to the right hand side of \eqref{eq:muffasy}.\\

	In order to show that this bound describes the asymptotic uncertainty region, we have to exhibit sequences of states saturating for every point on the boundary curve. We first show that the endpoint $(0,1/2)$ is asymptotically attained by the $L_1$-eigenstates $\ket{s}$:
	The output entropy of $\ket{s}$ in the $L_1$ basis is always zero, whereas the output entropy in the $L_2$ basis can be evaluated as $H(L_2,\ket{s})=H(L_1,U_3\ket{s})$, with $U_3=e^{-\frac{i\pi}{2}L_3}$.
	Because $\ket s$ has maximal quantum number $m$, the probability amplitudes of $U_3\ket s$ are given by the last column of the Wigner-D Matrix $U_3=D(0,\pi / 2,0)$. By expanding the Wigner-D matrix in terms of Jacobi polynomials \cite{bieden}, one can verify that $\pi_m(U_3\ket s)=\abs{\bra{m}{U_3}\ket{s}}^2$ is  a binomial distribution in $m$, symmetric on the domain $\{-s,s\}$. The entropy of this distribution is
	$-1/2\log_{2s+1}(2\pi e \frac{2s+1}{4})+\mathcal{O}(\frac{1}{n})$, which converges to $\frac{1}{2}$ as $s$ goes to infinity, hence \eqref{eq:muffasy} can be saturated:
	\begin{align}\label{eq:optientro}
		\lim\limits_{s \to \infty}\Bigl(H(L_1,\ket{s}),H(L_2,\ket{s})\Bigr)=\bigl(0,{\textstyle\frac12}\bigr).
	\end{align}

	Finally we have to construct states saturating every point of this bound.
	For $\ket{s}$, still in the eigenbasis of $L_1$, and arbitrary $s$, we define a family of unit vectors
	$\ket{\psi_\alpha}$ as
	\begin{align}
		\ket{\psi_\alpha}=c_\alpha \left(\cos(\alpha)\ket{s}+\sin(\alpha)U_3\ket{s}\right).
	\end{align}
	The $L_1$-outcome probability distribution associated with this vector is
	\begin{align}
		\pi_m(\psi_\alpha)\approx \cos(\alpha)^2
		\delta_{ms}+\sin(\alpha)^2\abs{\bra{m}{U_3}\ket{s}}^2,
	\end{align}
	because the two probability distributions have practically no overlap for large $s$, and hence also $c_\alpha\approx1$. For the same reason we can evaluate the entropy of $\pi(\Psi_\alpha)$ as a sum,
	obtaining
	\begin{align}
		H(L_1,\ket{\psi_\alpha})&\approx
		\sin^2(\alpha)H(L_1,U_3\ket{s})+\frac{H_2(\cos^2(\alpha),\sin^2(\alpha))}{\log_2(2s+1)}
		\approx \frac12 \sin^2(\alpha),
	\end{align}
	where $H_2$ is the binary entropy function.
	In the $L_2$-basis the roles of the two terms in $\psi_\alpha$ are exchanged, and we get
	\begin{align}
		H(L_2,\ket{\psi_\alpha})\approx \frac12 \cos^2(\alpha).
	\end{align}
	Hence the sequence $\psi_\alpha$ realizes the point $\frac12(\sin^2(\alpha),\cos^2(\alpha))$ on the boundary.
\end{proof}
\begin{figure}[t]
	\includegraphics[width=0.4\textwidth]{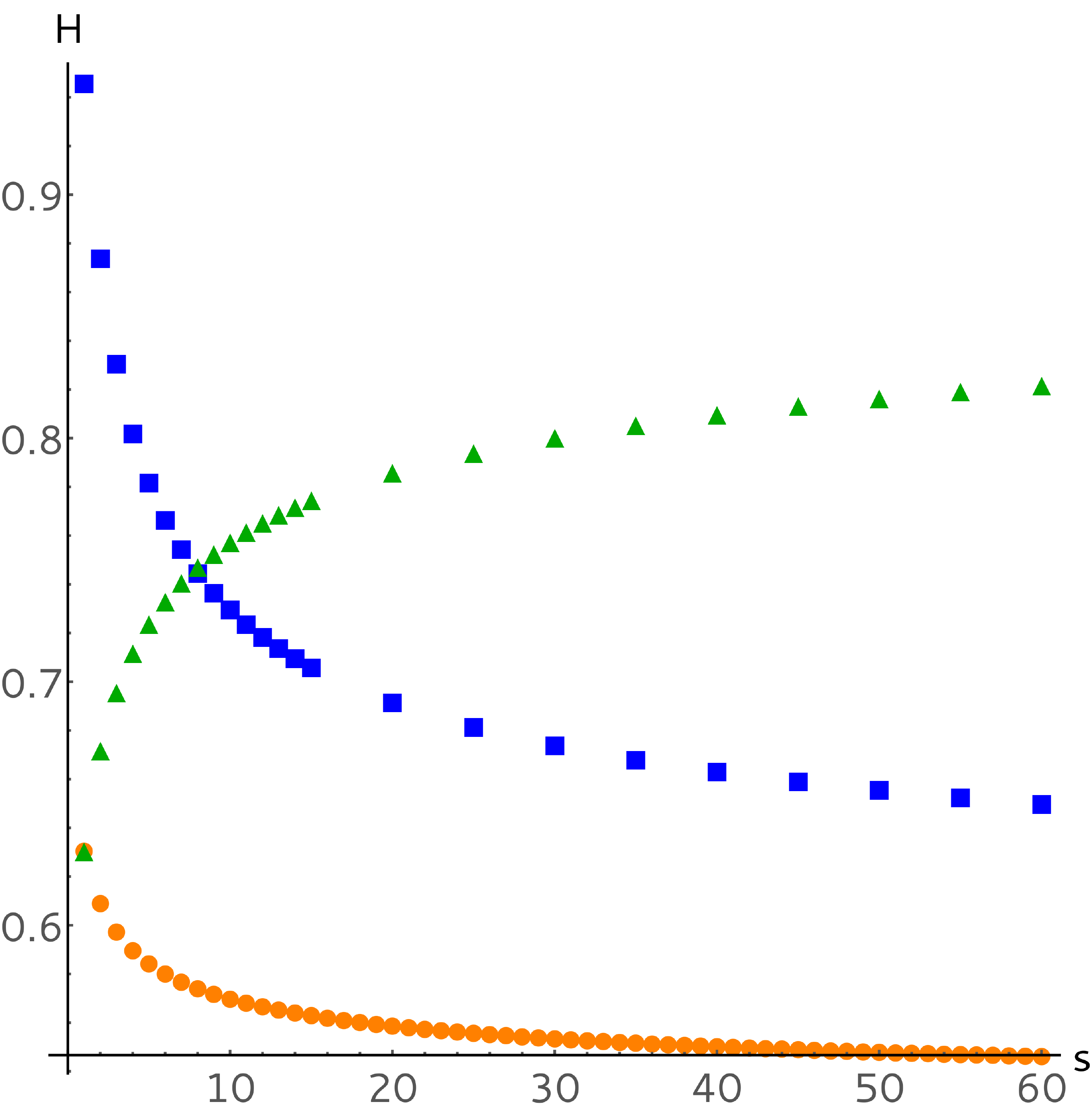}
	\includegraphics[width=0.4\textwidth]{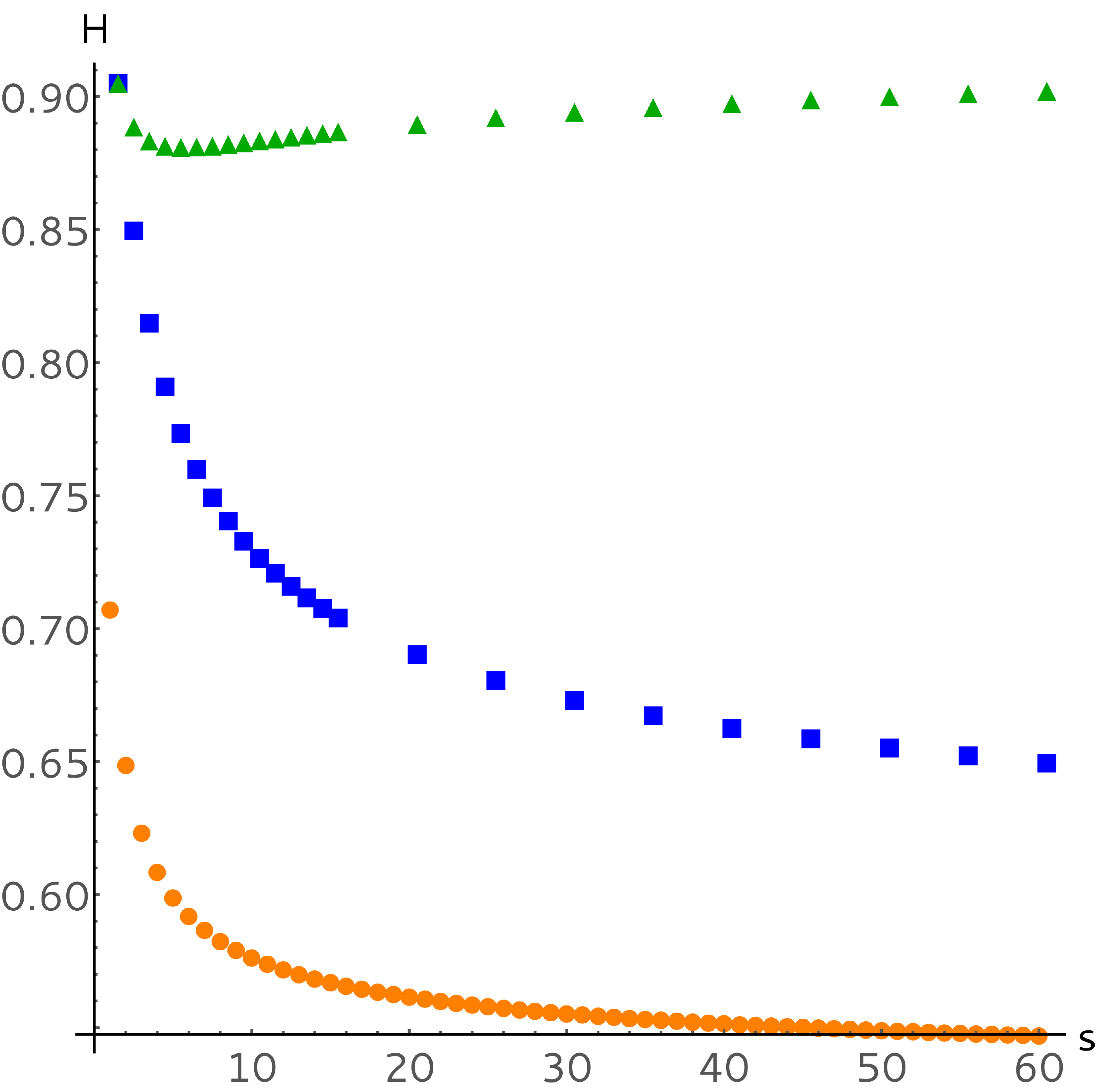}
	\caption{
		Entropic uncertainty sum of the coherent state $\ket{s}$ (blue squares) and the state $\ket{0}$ (green triangles) in comparison to the Maassen-Uffink bound (orange circles), for integer spin (left) and half integer spin (right)
	}
	\label{fig:muffcompare}
\end{figure}


\section{Measurement uncertainty}\label{sec:measure}

\subsection{Introduction}
As mentioned in the introduction, a measurement uncertainty relation is a quantitative bound on the accuracy with which two observables can be measured approximately on the same device. Already in Kennard's 1927 paper \cite{Kennard} it is clearly stated that in quantum mechanics the notion of a ``true value'' loses its meaning, so that we should not think of ``measurement error'' as the deviation of the observed value from a true value. What we can always do, however, is to compare the performance of two measuring devices, one of which is an (perhaps hypothetical) ``ideal'' measurement and the other an approximate one. The only requirement is that these two measurements give outputs which lie in the same space $X$ and whose distance is somehow defined. A good approximate measurement is then one which will give, on every input state, almost the same output distribution as the ideal one.  This operational focus on the output distributions is also in keeping with the way one would detect a disturbance of the system. Consider how we discover that trying to detect through which of two slits the particles pass disturbs them: The interference pattern, i.e., the output distribution of the interferometer is changed and fringes washed out.

Two related ways to build up a quantitative comparison of distributions, and thereby a quantitative approximation measure between observables, were introduced in the papers \cite{BLW1,BLW2} and applied to the standard situation of a position and a momentum operator. These two notions, called calibration error and metric error will be described in the following subsections. Either way we get a natural figure of merit for an observable $F$ jointly measuring two or more components of angular momentum. In fact, we will only treat the case where $F$ jointly measures {\it all} components. By this we simply mean an observable whose output is not a single number but a vector $\veta$. From this, one derives a ``marginal measurement'' $F^\ve$ of the  $\ve$-component by post-processing, i.e., by taking the $\ve$-component $\ve{\cdot}\veta$ of the output vector as the output of $F^\ve$. These marginals can then be compared with the standard projection valued measurement of the angular momentum component $\eL$. When $D(G,E)$ is the quantity chosen to characterize the error of an observable $G$ with respect to the ideal reference $E$ we get, in our special case,
\begin{equation}\label{Dmax2}
	\Dmax(F)=\max_\ve D(F^\ve,\eL)
\end{equation}
is the desired figure of merit. This is the quantity which we will minimize. But first we have to be more explicit about the two choices for the error quantity $D(G,E)$. This will be done in the next two subsections.

\subsection{Calibration Error}
The simpler one assumes that the ``ideal'' observable is projection valued, so that we can produce states which have a very narrow distribution around one of its eigenvalues (or points in the continuous spectrum). In other words, we have some states available which come close to having a ``true value'' in the sense that the ideal distribution is sharp around a known value. A good approximate measurement should then have an output distribution, which is also well peaked around this value. Thus we only have to compare probability distributions to $\delta$-function like distributions, i.e., point measures $\delta_x$ with $x\in X$. This is straightforward, and we set, for any probability measure $\mu$ on the space $X$, $1\leq\alpha<\infty$
\begin{equation}\label{Ddelta}
	D_\alpha(\mu,\delta_x)=\left(\int_X\mu(dy)\ D(y,x)^\alpha\right)^{1/\alpha},
\end{equation}
where $D$ under the integral is the given metric on $X$. This could be called the power-$\alpha$ deviation of $\mu$ from the point $x$. We are mostly interested in {\it quadratic deviations}, i.e., $\alpha=2$. However, in this section we keep $\alpha$ general, which causes no extra difficulty, but makes clear which numbers ``$2$'' arise directly from the role of the averaging power $\alpha$ in \eqref{Ddelta} and similar equations.

We apply this now to $F_\rho$ the output distribution $F_\rho$ obtained by measuring the observable $F$ on the input state $\rho$, and its ideal counterpart $E_\rho$. The $\veps$-deviation or $\veps$-calibration error of the observable $F$ with respect to the ideal observable $E$ is
\begin{equation}\label{eq:caleps}
	\Delta_\alpha^\veps(F,E)=\sup \left\{ D_\alpha(F_\rho,\delta_x) | D_\alpha(E_{\rho},\delta_x) \leq \veps  \right\},
\end{equation}
where the supremum is over all $x\in X$ and ``calibration states'' $\rho$, which are sharply concentrated on $x$ up to quality $\veps$. Note that as a function of $\veps$ this expression is decreasing as $\veps\to0$, because the supremum is taken over smaller and smaller sets. Therefore the limit exists, and we define the {\it calibration error} of $F$ with respect to $E$ by
\begin{equation}\label{eq:cal}
	\Delta_\alpha^c(F,E)=\lim\limits_{\veps \rightarrow 0} \Delta_\alpha^\veps(F,E).
\end{equation}
For observables $E$ with discrete spectrum (like angular momentum components) we can also take $\veps=0$, in \eqref{eq:caleps}, and directly get $\Delta_\alpha^c(F,E)=\Delta_\alpha^0(F,E)$.

\subsection{Metric error}
A possible issue with the calibration error is that it describes the performance of $F$ only on a very special subclass of states. On the one hand this makes it easier to determine it experimentally, but on the other hand we get no guarantee about the performance of the device on general inputs. Classically this problem does not arise, because broad distributions can be represented as mixtures of sharply peaked ones, and this allows us to give an estimate also on the similarity of output distributions for general inputs. The form of this estimate gives a good hint towards how to define the distance of probability distributions both of which are diffuse. Indeed suppose $\rho$ is an input state such that $\rho=\int\mu(dx)\,\rho_x$, where $\rho_x$ is an $\veps$-calibration state at point $x$, and $\mu$ is an arbitrary probability measure. Then we can define measure $\gamma$ on $X\times X$ by $\gamma(dx\,dy)=\mu(dx)F_{\rho_x}(dy)$, which gives the probability of the joint event of having a ``true value'' $x\in dx$ and finding $y\in dy$. If one integrates out the $x$ variable one gets the output distribution for $\rho$, because $F_\rho$ is linear in $\rho$, and if one integrates out $y$ one gets $\mu$, because each $F_{\rho_x}$ is normalized. To within $\veps$ this is the output distribution $E_\rho$, and with known calibration error we get the bound
\begin{equation}\label{getWasser}
	\int\gamma(dx\,dy)\ D(x,y)^\alpha=\int\mu(dx)\int F_{\rho_x}(dy)\ D(x,y)^\alpha= \int\mu(dx) D_\alpha(F_{\rho_x},\delta_x)^\alpha\leq \Delta_\alpha^\veps(F,E)^\alpha.
\end{equation}
This suggests the following definitions. For two probability distributions $\mu$ and $\nu$ on $X$ we define  a {\it coupling} to be a measure $\gamma$ on $X\times X$ whose first marginal is $\mu$ and whose second marginal is $\nu$. The set of couplings will be denoted by $\Gamma(\mu,\nu)$, and is always non-empty because it contains the product measure. We then define the {\it Wasserstein $\alpha$-distance} of $\mu$ and $\nu$ as
\begin{equation}\label{Wasser}
	D_\alpha(\mu,\nu)= \inf_{\gamma \in \Gamma(\mu,\nu)} \left (\int \gamma(dx\,dy)\ D(x,y)^\alpha \right )^{1/ \alpha}.
\end{equation}
This is also called a transport distance, because of the following interpretation, first seen by Gaspar Monge in the 18th century who considered the building of fortifications. We consider $\mu$ and $\nu$ as some distribution of earth, and the task of a builder who wants to transform distribution $\mu$ into distribution $\nu$. The workers are paid by the bucket and the power $\alpha$ of the distance travelled with each bucket (giving a bonus pay on long distances). The builder's plan is precisely the coupling $\gamma$ saying how many units are to be taken from $x$ to $y$, and the integral is the total cost. The infimum is just the price of the optimal transport plan. The theory of such metrics is well developed, and we recommend the book of Villani \cite{transport} on the subject, but in the present context we only need some simple observations.

With a metric between probability distributions we define the distance of two observables as the worst case distance of their output distributions:
\begin{equation}
	D_\alpha(F,E):= \sup_\rho D_\alpha(F_\rho, E_\rho).
\end{equation}
For the connection between this {\it metric error} and the calibration error introduced above, note first that when $\nu$ is the point measure $\delta_x$, and $\mu$ is arbitrary the product is the only coupling, and the two definitions $D_\alpha(\mu,\delta_x)$ from equations \eqref{Ddelta} and \eqref{Wasser} coincide. Therefore, if $D_\alpha(E_\rho,\delta_x)\leq\veps$, we have
\begin{equation}\label{caleps<met}
	D_\alpha(F_\rho,\delta_x)\leq D_\alpha(F_\rho,E_\rho)+D_\alpha(E_\rho,\delta_x)\leq D_\alpha(F,E)+\veps.
\end{equation}
By taking the supremum \eqref{eq:caleps} and letting $\veps\to0$, we hence have
\begin{equation}\label{cal<met}
	\Delta_\alpha^c(F,E)\leq D_\alpha(F,E).
\end{equation}
Intuitively, this merely indicates that for calibration we test deviations only in the small subset of highly concentrated states. Then \eqref{getWasser} is a partial converse: If $\rho$ has a convex decomposition into $\veps$-concentrated states, $D_\alpha(F_\rho,\mu)\leq\Delta_\alpha^\veps(F,E)$, and since $D_\alpha(\mu,E_\rho)\leq\veps$ we get $D_\alpha(F_\rho,E_\rho)\leq\Delta_\alpha^\veps(F,E)+\veps$. In the classical case such a decomposition always exists, so we have equality $D_\alpha(F_\rho,E_\rho)=\Delta_\alpha^c(F,E)$. In the quantum case, however, we not only have convex mixtures of sharply concentrated states but also coherent superpositions. Using these it is easy to build examples in which \eqref{cal<met} is strict.

There is a second ``quasi-classical'' setting, in which calibration and metric error coincide, and this will actually be used below. This is the case when $F$ and $E$ differ only by classical noise generated in the measuring apparatus. More formally this is described by a transition probability kernel $P(x,dy)$, which is for every $x$ the probability measure in $y$ describing the output of $F$, given that $E$ has been given the value $x$. We can think of this as classical probabilistic post-processing or noise. It is, of course, not necessary that $F$ actually operates in two steps, but only that it could be simulated in this way, i.e., the relation $F(dy)=\int E(dx)\ P(x,dy)$ holds. This is enough to conclude $\Delta_\alpha^c(F,E)= D_\alpha(F,E)$, and to give a formula for both in terms of the size of the noise kernel $P$. In the following Lemma the $E$-essential supremum of a measurable function $f$ with respect to a measure $E$ (denoted $\Eessup_{x\in X}f(x)$) is the supremum of all $\lambda$ such that the upper level set $\{x|f(x)>\lambda\}$ has non-zero $E$-measure. In our application
$E$ is the spectral measure of a component $\eL$, so it is concentrated on the finite set $\{-s,\ldots,s\}$. The essential supremum is then simply the maximum of $f$ over this set.

\begin{lem}\label{lem:postproc}%
	Let $E$ be a projection valued observable on a separable metric space $(X,D)$. Let $F$ be an observable arising from $E$ by post-processing with a transition probability kernel $P$. Then, for all $\alpha$,
	\begin{equation}\label{cal=met}
		\Delta_\alpha^c(F,E)= D_\alpha(F,E) = \left(\Eessup_{x\in X}\int P(x,dy)\ D(x,y)^\alpha \right)^{1/\alpha}.
	\end{equation}
\end{lem}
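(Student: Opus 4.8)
The plan is to prove the three quantities equal by a sandwich argument: I would show that the right-hand side
\[
  R^\alpha:=\Eessup_{x\in X}\int P(x,dy)\,D(x,y)^\alpha
\]
is at once an upper bound for the metric error and a lower bound for the calibration error, and then invoke the general inequality \eqref{cal<met}, already established, namely $\Delta_\alpha^c(F,E)\le D_\alpha(F,E)$, to squeeze everything together. Throughout I abbreviate $g(x)=\int P(x,dy)\,D(x,y)^\alpha$, so that $R^\alpha=\Eessup_x g(x)$.

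First I would establish the upper bound $D_\alpha(F,E)\le R$. The post-processing relation $F(dy)=\int E(dx)\,P(x,dy)$ furnishes a natural coupling, namely $\gamma(dx\,dy)=E_\rho(dx)\,P(x,dy)$ on $X\times X$ for any fixed state $\rho$. Since each $P(x,\cdot)$ is a probability measure, the first marginal of $\gamma$ is $E_\rho$, and the post-processing relation makes the second marginal $F_\rho$, so $\gamma\in\Gamma(E_\rho,F_\rho)$. Inserting $\gamma$ into the infimum \eqref{Wasser} gives
\[
  D_\alpha(F_\rho,E_\rho)^\alpha\le\int\gamma(dx\,dy)\,D(x,y)^\alpha=\int E_\rho(dx)\,g(x).
\]
The crucial remark is that $E_\rho(dx)=\tr(\rho\,E(dx))$ is absolutely continuous with respect to $E$, because $E(A)=0$ forces $E_\rho(A)=\tr(\rho\cdot0)=0$; hence $g(x)\le\Eessup g=R^\alpha$ holds $E_\rho$-almost everywhere, and the integral is at most $R^\alpha$. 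Taking the supremum over $\rho$ yields $D_\alpha(F,E)\le R$.

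Next, for the lower bound $\Delta_\alpha^c(F,E)\ge R$, I would exhibit calibration states realizing $g$. In the case at hand $E$ has discrete spectrum on $\{-s,\dots,s\}$, so for every $x$ in the support there is a state $\rho$ with $E_\rho=\delta_x$ exactly, a $0$-calibration state, and then $F_\rho(dy)=\int E_\rho(dx')P(x',dy)=P(x,dy)$. Consequently $D_\alpha(F_\rho,\delta_x)^\alpha=\int P(x,dy)\,D(x,y)^\alpha=g(x)$, and taking the supremum over $x$ in the support (which for discrete $E$ realizes the essential supremum) gives $\Delta_\alpha^c(F,E)^\alpha\ge\sup_x g(x)=R^\alpha$. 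Combining the three estimates,
\[
  R\le\Delta_\alpha^c(F,E)\le D_\alpha(F,E)\le R,
\]
forces equality everywhere, which is the assertion.

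I expect the only delicate point to lie in the calibration step for a general separable metric space $(X,D)$, where one cannot always take $\veps=0$: there one must instead run calibration states at $x$ of quality $\veps\to0$, observe that $F_\rho=\int E_\rho(dx')P(x',\cdot)$ is then a mixture of kernels $P(x',\cdot)$ with $x'$ clustered near $x$, and pass to the limit so that $\int E_\rho(dx')\int P(x',dy)\,D(x,y)^\alpha\to g(x)$, controlled for instance by the triangle-inequality bound $\bigl(\int P(x',dy)\,D(x,y)^\alpha\bigr)^{1/\alpha}\le g(x')^{1/\alpha}+D(x,x')$. Reconciling the topological support with the $E$-essential supremum is part of the same limiting analysis. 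For the angular-momentum application, where $E$ is the finitely supported spectral measure of a component $\eL$, this collapses to the elementary $\veps=0$ computation used above, and no genuine obstacle remains.
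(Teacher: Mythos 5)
Your overall strategy coincides with the paper's: the sandwich $R\le\Delta_\alpha^c(F,E)\le D_\alpha(F,E)\le R$, with the middle inequality quoted from \eqref{cal<met}, the upper bound obtained from the coupling $\gamma(dx\,dy)=E_\rho(dx)P(x,dy)$, and the lower bound obtained from calibration states. Your upper-bound half is complete and correct; the absolute-continuity remark ($E(A)=0$ forces $E_\rho(A)=0$, so $g\le R^\alpha$ holds $E_\rho$-almost everywhere, using countable additivity to pass from the level sets $\{g>R^\alpha+1/n\}$ to $\{g>R^\alpha\}$) is a clean equivalent of the paper's level-set splitting. The lower bound as you give it in full is also correct, but only for discrete $E$.

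The gap is that the lemma is stated for an arbitrary projection valued $E$ on a separable metric space, and for that case your lower bound $\Delta_\alpha^c(F,E)\ge R$ remains a sketch whose missing step is precisely where the hypothesis of separability is used and where the paper does its real work (Eqs.\ \eqref{essup2}--\eqref{essup7}). The instruction ``run calibration states at $x$ of quality $\veps\to0$'' presupposes that you know at which point to calibrate; but the essential supremum is a statement about $E$-mass, not about values of $g$ at points: it need not be attained, and a point where $g$ is large may carry no $E$-mass at all, while the set $\{x\mid g(x)>t^\alpha\}$ (which does have nonzero $E$-measure for $t^\alpha<\Eessup_x g(x)$) must still be localized metrically before any calibration state can be built. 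The paper closes this as follows: cover $X$ by countably many $\veps$-balls $B_\veps(x_i)$ (separability), use countable additivity of $E$ to find a center $z=x_i$ with $E\bigl(B_\veps(z)\cap\{g>t^\alpha\}\bigr)\neq0$, and then use that $E$ is projection valued to pick a state $\rho$ with $E_\rho$ concentrated on this intersection; such a $\rho$ is an $\veps$-calibration state at $z$, and your own Minkowski-type bound $\bigl(\int P(x,dy)\,D(z,y)^\alpha\bigr)^{1/\alpha}\ge g(x)^{1/\alpha}-D(x,z)$ then yields $D_\alpha(F_\rho,\delta_z)\ge t-\veps$, hence $\Delta_\alpha^\veps(F,E)\ge t-\veps$ and, letting $\veps\to0$ and $t^\alpha\to\Eessup g$, the claim. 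So you have the right tools in hand, but without the covering argument the general statement is not proved; what you did establish suffices only for the paper's application to the finitely supported spectral measure of $\eL$.
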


\begin{proof}Let I, II, III be the three terms in this equation. Then I$\leq$II  is given by \eqref{cal<met}. To show II$\leq$III, note that for any state $\rho$ we get a coupling $\gamma$ between $F_\rho$ and $E_\rho$ by
	$\gamma(dx\,dy)=E_\rho(dx)P(x,dy)$. Hence
	\begin{equation}\label{essup}
		D_\alpha(F_\rho,E_\rho)^\alpha
		\leq \int\gamma(dx\,dy)\ D(x,y)^\alpha=\int E_\rho(dx)\int P(x,dy)\ D(x,y)^\alpha \nonumber.
	\end{equation}
	We introduce the function $f(x)=\bigl(\int P(x,dy)\ D(x,y)^\alpha\bigr)^{1/\alpha}$ and split the integral with respect to $E_\rho$ into an integral over  $X_{>}=\{x|f(x)>t\}$ and an integral over its complement $X_{\leq}$,  where and $t>\Eessup_x f(x)$. Then, by definition of the essential supremum,  $E_\rho$ vanishes on $X_{>}$ and on $X_{\leq}$ the integrand is bounded by $t^\alpha$. Hence $ D_\alpha(F_\rho,E_\rho)^\alpha<t^\alpha$.
	Taking the supremum over $\rho$ and the $\alpha^{\rm th}$ root we get $D_\alpha(F,E)<t$ for every $t>$III, proving II$\leq$III.

	It remains to show that III$\leq$I. This time we pick a $t<\Eessup_x f(x)$. By definition, this means that $E\bigl(\{x|f(x)>t\}\bigr)\neq0$. Now let $\veps>0$. Then because we have assumed $X$ to be separable, it is covered by a countable collection of $\veps$-balls $B_\veps(x_i)=\{y\in X|D(x_i,y)\leq\veps\}$, $ i=1,2,\ldots$. Hence due to countable additivity of $E$
	\begin{equation}\label{essup2}
		0\lneqq E\bigl(\{x|f(x)>t\}\bigr) \leq \sum_i E\bigl(B_\veps(x_i)\cap\{x|f(x)>t\}\bigr).
	\end{equation}
	Hence for some $z=x_i$ we have  $E\bigl(B_\veps(z)\cap\{x|f(x)>t\}\bigr)\neq0$. Since $E$ is projection valued, we find a state $\rho$ such that the probability measure $E_\rho(dx)$ is concentrated on this set. In particular, $D_\alpha(E_{\rho},\delta_z) \leq \veps$. Moreover, $m^\alpha=\int E_\rho(dx)f(x)^\alpha\geq t^\alpha$ and
	\begin{equation}\label{essup3}
		D_\alpha(F_{\rho},\delta_z)^\alpha=\int E_\rho(dx)\int P(x,dy)\ D(z,y)^\alpha=\int E_\rho(dx)g(x)^\alpha,
	\end{equation}
	where the second function defines a function $g$. We interpret these quantities as $L^\alpha$-norms with respect to $E_\rho$, i.e.,
	$m=\norm f_\alpha$ and $D_\alpha(F_{\rho},\delta_z)=\norm g_\alpha$. Then by the triangle inequality
	\begin{equation}\label{essup4}
		D_\alpha(F_{\rho},\delta_z)\geq \norm f_\alpha-\norm{f-g}_\alpha \geq t-\norm{f-g}_\alpha.
	\end{equation}
	To get an upper bound on $\norm{f-g}_\alpha$ note that the expression for $\norm{f-g}_\alpha^\alpha$ is the integral over $E_\rho(dx)$ of
	\begin{equation}\label{essup5}
		\left(\Bigl(\int P(x,dy) D(x,y)^\alpha \Bigr)^{1/\alpha}-\Bigl(\int P(x,dy) D(z,y)^\alpha \Bigr)^{1/\alpha}\right)^\alpha.
	\end{equation}
	Again we can read the outer parenthesis as a difference of norms, namely the $L^\alpha$-norms of the functions $h_x(y)=D(x,y)$ and $h_z(y)=D(z,y)$ with respect to integration by $P(x,dy)$ where $x$ is considered a fixed parameter. But by the triangle inequality for the metric $D$ we have $|h_x-h_z|\leq D(x,z)$ independent of $y$. Since the transition kernel $P$ is a probability measure with respect to $dy$, we find that \eqref{essup5} is bounded above by
	$(\norm{h_x}_\alpha-\norm{h_z}_\alpha)^\alpha\leq\norm{h_x-h_z}_\alpha^\alpha\leq D(x,z)^\alpha$. Hence in \eqref{essup4} we have
	\begin{equation}\label{essup6}
		\norm{f-g}_\alpha^\alpha\leq \int E_\rho(dx) D(x,z)^\alpha\leq\veps^\alpha,
	\end{equation}
	because by construction $E_\rho$ has support in $B_\veps(z)$. Combining the estimates we get $D_\alpha(F_{\rho},\delta_z)\geq t-\veps$. The supremum over all calibrating states can only increase the left hand side, and on the right we use that the only condition on $t$ was that  $t<\Eessup_x f(x)$, so that
	\begin{equation}\label{essup7}
		\Delta_\alpha^\veps(F,E)\geq \Eessup_x f(x) -\veps.
	\end{equation}
	Now III$\leq$I follows in the limit $\veps\to0$.
\end{proof}

In \cite{BLW2} a special case of this Lemma was used to show $\Delta^c=D$ for the position and momentum marginals of a covariant phase space measurement. In that case the noise kernel $P$ is even translation invariant, i.e., the output of the marginal observable can be simulated by just adding some state-independent noise to the output of the ideal position or momentum observable. Such translation invariance makes no sense in the case of angular momentum, since the range of the outputs $m\in\{-s,\ldots,s\}$ of the ideal observable is bounded. This is why the above generalization was needed, in which the noise can depend on the ideal output value. The reason for the existence of a post-processing kernel, however, will be the same as in the phase space case: the covariance of the joint measurement. Roughly speaking this makes the marginal corresponding to $\eL$ invariant under rotations around the $\ve$-axis, which in an irreducible representation means that it must be a function of $\eL$. It is therefore crucial to argue that the optimal joint measurement is covariant, which will be done in the next section.

\subsection{Covariant Observables}
Consider a general observable $F$ with outcome space $X$. Suppose some group $G$ acts on $X$, with the action written $(g,x)\mapsto gx$ as usual. Suppose that the group also acts as a symmetry group of the quantum system.  That is, there is a representation $g\mapsto U_g$ of $G$ by operators $U_g$, which are unitary or antiunitary, and satisfy the group law (possibly up to a phase factor). The observable $F$ is then called covariant if
$U_g^*F(S)U_g=F(g\inv S)$ for all $g\in G$ and every measurable set $S$. In other words, shifting the input state by $U_g$ will result in the entire output distribution shifted by $g$. For our purposes it will be convenient to express this in terms of an action $F\mapsto T_gF$ of $G$ on the the set of observables:
\begin{equation}\label{GonFs}
	(T_gF)(S)=U_g F(g\inv S)U_g^*.
\end{equation}
Then the covariant observables are precisely those for which $T_gF=F$ for all $g\in G$.

For angular momentum the group will be the rotation group with its action on the 3-vectors ($X=\Rl^3$). The representation $U$ is then up to a factor $\pm1$. Alternatively, we can take $G$ as the covering group $\SU2$. Since the covariant observables are exactly the same this choice is completely equivalent.

Covariance is certainly a reasonable condition to impose on a ``good'' observable, so it would make sense to study uncertainty relations just for these. However, there is no need for such an ad hoc restriction, because the minimum of uncertainty over all observables is anyway attained on a covariant one. The basic reason for this is that our figure of merit \eqref{Dmax2} does not single out a direction in space, so that it is invariant under the action $T_g$. We therefore only have to show that there is no symmetry breaking, i.e., the symmetric variational problem has a symmetric solution. This will be done in the following Lemma.

\begin{lem}\label{lem:CovSuffices}%
	For any observable $F$ with an outcome set $X=\Rl^3$, and $1\leq\alpha<\infty$ let
	\begin{eqnarray}
		\Dmax(F)&=&\max_\ve D_\alpha(F^\ve,\eL)\label{Dmax}\\
		\Delmax(F)&=&\max_\ve \Delta^c_\alpha(F^\ve,\eL).\label{Delmax}
	\end{eqnarray}
	Then
	\begin{enumerate}
		\item  both these functionals are invariant under the action $T$, and $\Dmax(F)^\alpha$ and $\Delmax(F)^\alpha$ are convex.
		\item
		each of the infima $\inf_F\Dmax(F)$ and $\inf_F\Delmax(F)$ is independent of whether it is taken over all observables or just the covariant ones.
	\end{enumerate}
\end{lem}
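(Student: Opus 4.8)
The plan is to prove the two statements in the order given, since the $T$-invariance and convexity of item (1) are exactly the ingredients the symmetrization argument for item (2) needs. Throughout I take $G=\SU2$ (so all $U_R$ are unitary) acting on $X=\Rl^3$ by rotations, and I use that the ideal family is covariant, $U_R E_\ve(S) U_R^*=E_{R\ve}(S)$, which is just the vector-operator law $U_R\,\eL\,U_R^*=(R\ve)\cdot\vL$ read off from the spectral measures.

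First I would record how a marginal transforms under $T_R$. Writing the $\ve$-marginal as $F^\ve(A)=F(\{\veta\mid \ve\cdot\veta\in A\})$ and using $\ve\cdot R\veta=(R^{-1}\ve)\cdot\veta$, a direct set computation gives $(T_RF)^\ve(A)=U_R\,F^{R^{-1}\ve}(A)\,U_R^*$. Combining this with trace cyclicity $\tr(\rho\,U_R G(A)U_R^*)=\tr(U_R^*\rho U_R\,G(A))$ and the covariance of $E$ shows that, for every state $\rho$, the pair of output distributions $\bigl(((T_RF)^\ve)_\rho,(E_\ve)_\rho\bigr)$ equals $\bigl((F^{R^{-1}\ve})_\sigma,(E_{R^{-1}\ve})_\sigma\bigr)$ with $\sigma=U_R^*\rho U_R$. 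Since $\rho\mapsto\sigma$ is a bijection of the state space and both $D_\alpha$ and $\Delta^c_\alpha$ depend on $F$ only through these output distributions, taking the supremum over states and then the maximum over $\ve$ (noting that $R^{-1}\ve$ again ranges over the whole sphere) yields $\Dmax(T_RF)=\Dmax(F)$ and, identically, $\Delmax(T_RF)=\Delmax(F)$. This is the claimed invariance.

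For convexity I would use that the optimal transport cost is jointly convex in its two marginals: given couplings $\gamma_i\in\Gamma(\mu_i,\nu_i)$, the mixture $(1-t)\gamma_0+t\gamma_1$ couples the mixed marginals and has cost $(1-t)\int D^\alpha d\gamma_0+t\int D^\alpha d\gamma_1$, so $D_\alpha(\mu,\nu)^\alpha$ is jointly convex, hence convex in its first argument for fixed second argument. Because $F\mapsto (F^\ve)_\rho$ is affine while $(E_\ve)_\rho$ is fixed, each function $F\mapsto D_\alpha((F^\ve)_\rho,(E_\ve)_\rho)^\alpha$ is convex; and $\Dmax(F)^\alpha=\sup_{\ve,\rho}D_\alpha((F^\ve)_\rho,(E_\ve)_\rho)^\alpha$ is a pointwise supremum of convex functions of $F$, hence convex. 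The same reasoning applies to $\Delmax(F)^\alpha$, whose defining supremum (for the discrete angular-momentum spectrum one may take $\veps=0$) runs over a set of calibrating pairs $(\rho,x)$ that does not depend on $F$.

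Finally, item (2) follows by symmetrization. For any observable $F$ set $\overline F=\int dR\,(T_RF)$ with normalized Haar measure $dR$; averaging preserves positivity and normalization, so $\overline F$ is again a POVM, and invariance of Haar measure together with $T_gT_R=T_{gR}$ gives $T_g\overline F=\overline F$, i.e. $\overline F$ is covariant. Jensen's inequality for the convex functional $\Dmax(\cdot)^\alpha$ and then the invariance from item (1) give $\Dmax(\overline F)^\alpha\le\int dR\,\Dmax(T_RF)^\alpha=\Dmax(F)^\alpha$, and identically for $\Delmax$. Thus every $F$ admits a covariant $\overline F$ with no larger value of the figure of merit, so the infimum over covariant observables equals the infimum over all observables. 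I expect the only real subtlety to be the convexity input: establishing joint convexity of the $\alpha$-power transport cost and, for the continuous average, upgrading finite convexity to the Jensen estimate $\Dmax(\int dR\,T_RF)^\alpha\le\int dR\,\Dmax(T_RF)^\alpha$ for a supremum of convex functionals. The marginal bookkeeping and the fact that a Haar average of POVMs is a POVM are routine.
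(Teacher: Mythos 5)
Your proposal is correct and follows essentially the same route as the paper: the marginal transformation law $(T_RF)^\ve=U_RF^{R^{-1}\ve}U_R^*$ plus joint rotation of both observables for invariance, convexity of the $\alpha$-power transport cost via mixing couplings followed by a pointwise supremum, and the Haar average $\overline F=\int dR\,T_RF$ with Jensen's inequality for the symmetrization step. The only cosmetic difference is that you keep $\veps$ general and track the bijection of calibration pairs under conjugation, whereas the paper immediately sets $\veps=0$ using the finiteness of the spectrum of $\eL$; both handle the calibration error equally well.
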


\begin{proof}
	By definition of $F^\ve$ we have $(T_RF)^\ve=U_RF^{R\inv\ve}U_R^*$. When $E^\ve$ denotes the spectral measure of $\eL$, the relation $U_R\vL U_R^*=R\inv\vL$ similarly implies $U_RE^{R\inv\ve}U_R^*=E^\ve$. Moreover, due to the supremum over all states, $D_\alpha(F,E)$ does not change, if both observables are rotated with the same unitary. Hence
	\begin{equation}\label{hhh}
		D_\alpha((T_RF)^\ve),E^\ve)=D_\alpha(U_RF^{R\inv\ve}U_R^*,U_RE^{R\inv\ve}U_R^*)=D_\alpha(F^{R\inv\ve},E^{R\inv\ve}).
	\end{equation}
	Hence the supremum over $\ve$ is unchanged and $\Dmax(T_RF)=\Dmax(F)$. For $\Delta$ note that we can carry out the limit $\veps\to0$ directly, because $\eL$ has finite spectrum and states with $D(E^\ve,\delta_x)$ small are norm-close to eigenstates with $x=m$. Hence
	\begin{equation}\label{calibm}
		\Delta_\alpha^c(F^\ve,\eL)^\alpha=\max \Bigl\{D(F^\ve_{\kettbra \psi},\delta_m)^\alpha\Bigm| m,\psi:\ \eL\psi=m\psi\Bigr\}
		=\max_{m,\psi}\int\brAAket\psi{F(dx)}\psi\ (\ve{\cdot}{\vx}-m)^\alpha.
	\end{equation}
	If we now insert the definition of $T_RF$, rewrite the maximum over $\psi$ in terms of $\psi'=U_R^*\psi$, and substitute $\vx'=R\inv\vx$ in the integral, we find $\Delta_\alpha^c((T_RF)^\ve,\eL)=\Delta_\alpha^c(F^{R\inv\ve},E^{R\inv\ve})$, and again $\Delmax(T_RF)=\Delmax(F)$.

	Convexity for $D_\alpha$ follows from the corresponding property of transport distances. Indeed let $\mu=\sum_k\lambda_k\mu_k$ and $\nu=\sum_k\lambda_k\nu_k$ be convex combinations of measures with the same weights $\lambda_k$, and let $\gamma_k$ be a coupling between $\mu_k$ and $\nu_k$. Then $\gamma=\sum_k\lambda_k\gamma_k$ is a coupling of the convex combinations. Moreover,
	\begin{equation}\label{transportConvex}
		D_\alpha(\mu,\nu)^\alpha\leq\int\gamma(dx\,dy)\ D(x,y)^\alpha=\sum_k\lambda_k \int\gamma_k(dx\,dy)\ D(x,y)^\alpha.
	\end{equation}
	Here the first inequality holds because $D_\alpha$ is defined as the infimum over couplings. Then if we take the infimum over each of the $\gamma_k$, we get
	$D_\alpha(\mu,\nu)^\alpha\leq\sum_k\lambda_kD_\alpha(\mu_k,\nu_k)^\alpha$. In particular,  $F\mapsto D_\alpha(F^\ve_\rho,E^\ve_\rho)$ is convex. Since the pointwise supremum of convex functions is convex, $\Dmax$, as the supremum with respect to $\rho$ and $\ve$, is convex. The same observation, applied to \eqref{calibm} with an additional supremum over $\ve$, shows that $\Delmax$ is likewise convex.

	Given any observable $F$ we now form its average $\overline F=\int dR\ T_RF$ with respect to the normalized Haar measure $dR$. Then $\overline F$ is covariant and $\Dmax(\overline F)\leq\int dR\ \Dmax(T_RF)=\Dmax(F)$. Taking the infimum here, and using that $\overline F$ is covariant, and all covariant observables are such averages ($\overline F=F$) gives the second inequality in
	\begin{equation}\label{covmin2}
		\inf_F \Dmax(F)\leq   \inf_{F\ {\rm covariant}} \Dmax(F)\leq \inf_F \Dmax(F),
	\end{equation}
	while the first follows trivially because the covariant observables are a subset. Hence the two infima are the same, and the same argument also applies to $\Delmax$, proving the second claim.
\end{proof}

We could have included also a statement that the infima in this Lemma are all attained. The argument for that is the compactness of the set of observables (in a suitable topology) and the lower semi-continuity of $\Dmax$ and $\Delmax$ which follows, like convexity, from the representation of these functionals as the pointwise supremum of continuous functionals. However, since we will later anyhow explicitly exhibit minimizers, we will skip the abstract arguments. We also remark that one of the main difficulties in the position/momentum case \cite{BLW2} does not arise here: In contrast to the group of phase space translations, the rotation group is compact, so the average is an integral and not an ``invariant mean'', which has the potential of producing singular measures with some support on infinitely far away points.

The main importance of this Lemma is to make the variational problem much more tractable. For covariant observables we have a fairly explicit parameterization,  which allows us to explicitly compute the minimizers. In contrast, for the seemingly easier case of joint measurement of just two components covariance gives only a very weak constraint, and we were not able to complete the minimization.

To develop the form of covariant observables, let us first consider the case when the output vectors have a fixed length $r$. A plausible value would be $r=s$, but we will leave this open. In this case $X$ reduces to a sphere of radius $r$, which is a homogeneous space for the rotation group. We could thus apply the covariant version\cite{Scutaru,screen} of the Naimark \cite{Naimark} dilation theorem to obtain a complete classification. But we do not need this machinery in this elementary case. Note first, that $\tr F(dx)$ is an invariant measure on the sphere, and all probability densities $\tr\rho F(dx)$ are bounded by this measure (since $\rho$ is bounded). Hence, for every $\rho$ there is a bounded probability density with respective to the uniform measure. Since this depends linearly on $\rho$, it is given by a bounded operator depending on $\vx$. The $\vx$-dependence is then completely resolved by covariance, and it is sufficient to know this density at one point, say the north pole $r\north$. Moreover, by covariance this density must commute with the rotations around $\north$, and is hence a linear combination of the eigenstates $\kettbra n$ with $n\in\{-s,\ldots,s\}$. The only choices to be made are hence the coefficients $F_n$ of this liner combination. We write the resulting observable in terms of its integrals with an arbitrary function $h$ on the sphere:
\begin{eqnarray}\label{Fcovr}
	\int F(d\vx) h(\vx)&=& (2s+1)\int dR\ U_R \Bigl( \sum_n F_n\kettbra n\Bigr) U_R^*\ h(r\,R\north) \\
	&=& (2s+1)\int\frac{\sin\theta\,d\theta\,d\phi}{4\pi}\sum_nF_n U_{\theta,\phi}\kettbra nU_{\theta,\phi}^*
	\ h(r\,\ve_{\theta,\phi})
	\nonumber
\end{eqnarray}
Here the first integral is over Haar measure on the rotation group (or $\SU2$), whereas the second is expressed in polar coordinates $(\theta,\phi)$ on the sphere, with $\ve_{\theta,\phi}$ the corresponding unit vector, and $U_{\theta,\phi}$ some rotation rotating the north pole $\north$ to $\ve_{\theta,\phi}$. It does not matter which rotation we choose, because $\kettbra n$ is invariant with respect to rotations around the 3-axis. The two expressions are related by introducing Euler angles on the rotation group and integrating out the initial rotation around the 3-axis. The normalization factor $(2s+1)$ is chosen so that
the constraints on $F_n$ are exactly $F_n\geq0$ and $\sum_nF_n=1$, i.e., the observable is represented as a convex combination of observables using only one fixed $\kettbra n$ as the density. What changes when $r$ is not fixed is simply that we get an additional integration over $r$, where the $F_n$ may also depend on $r$. Effectively, we get a probability measure $F_n(dr)$ on $\{-s,\ldots,s\}\times \Rl_+$ and the second version of \eqref{Fcovr} just becomes
\begin{equation}\label{Fcov}
\int F(d\vx) h(\vx)= \sum_n\int F_n(dr)\
(2s+1)\int\frac{\sin\theta\,d\theta\,d\phi}{4\pi} U_{\theta,\phi}\kettbra nU_{\theta,\phi}^*\ h(r\,\ve_{\theta,\phi}).
\end{equation}

The criterion for joint measurability does not depend on the full observable, but only on the marginals along the various directions $\ve$. It is one of the direct consequences of covariance, evident from the proof of Lemma~\ref{lem:CovSuffices}, that $D_\alpha(F^\ve,E^\ve)$ and $\Delta^c_\alpha(F^\ve,E^\ve)$ do not depend on $\ve$. We will therefore only consider the case $\ve=\north$ in the following. In \eqref{Fcov} this just means that we specialize to functions of the form $h(\vx)=h_1(\vx{\cdot}\north)$ with $h_1:\Rl\to\Rl$. Thus in the integrand we get $h_1(r \vx_{\theta,\phi}{\cdot}\north)=h_1(r\cos\theta)$, which no longer depends on $\phi$. We can therefore carry out the $\phi$-integration. The resulting operator will commute with rotations around the 3-axis, so we can express it as a linear combination of operators $\kettbra m$:
\begin{eqnarray}\label{Fecov}
	\int F^\north(dx)h_1(x)&=&\sum_m\kettbra m\ \int P(m,dx)h_1(x)\quad\mbox{with}\nonumber\\
	\int P(m,dx)h_1(x)&=&\sum_n\int F_n(dr)\ (2s+1)\int\frac{\sin\theta\,d\theta}2\ \abs{\brAAket m{U_\theta}n}^2\ h_1(r\cos\theta).
\end{eqnarray}
The first line establishes the connection with the premise of Lemma~\ref{lem:postproc}: For covariant observables the marginals can be simulated by an exact measurement of $m$, with a post-processing kernel $P$. Therefore, for covariant $F$ we have $\Dmax(F)=\Delmax(F)$. Since by Lemma~\ref{lem:CovSuffices} the infimum of this quantity, say $\Delta_{\rm min}(s)$, is the same as the minimization over all observables. Therefore we can state the measurement uncertainty in the forms
\begin{equation}\label{measuncert}
	\Dmax(F)\geq \Delta_{\rm min}(s)\quad\text{and }\quad \Delmax(F)\geq \Delta_{\rm min}(s)
\end{equation}
for {\it all} observables $F$, whether covariant or not. We will now compute $\Delta_{\rm min}(s)$, and show that both minima are attained for a unique covariant observable.

\subsection{Minimal Uncertainty}\label{minmeas}
While the above holds for arbitrary exponent $\alpha$, we will now restrict to the standard variance case, i.e. $\alpha=2$. \\
So far, we have derived that the optimal observable $\mathbf{F}$ is covariant, leading to the parametrization \eqref{Fecov}. In particular, $F^\ve$ arises from $\eL$ by a transition probability kernel, so that metric and calibration error coincide. In the sequel, we will therefore only consider the calibration error, which is easier to evaluate. By covariance the calibration error $\Delta_2^c(F^\ve,\eL)$ is independent of $\ve$, so we can choose $\ve=\ve_3$. Observing that for discrete valued observables we can take $\veps=0$ in \eqref{eq:caleps}, and we get from \eqref{Fecov} the basic figure of merit
\begin{eqnarray}
	\Delta_2^c(F^\ve,\eL)^2 &=& \Delta_2^c(F^3,L_3)^2=
	\max_m \int \tr \left(\kettbra{m} F(d\veta)\right)  (\ve_3 \cdot \veta - m )^2\nonumber\\
	&=&\max_m  \;  \sum_{n=-s}^s \int\limits_0^\infty F_n(dr) \; (2s+1) \int_0^\pi\frac{\sin\theta\,d\theta}2 \; |\brAAket n {U_{\theta}} m |^2 (r \cos \theta -m)^2.
	\label{eq:mesp}
\end{eqnarray}
Before calculating the optimal case, we introduce the following Lemma, which provides a more manageable expression $I(s,r,n,m)$ for the integral over $\theta$, such that \eqref{eq:mesp} reads
\begin{equation}
	\Delta_2^c(F^3,L_3)^2=\max_m  \sum_{n=-s}^s \int\limits_0^\infty F_n(dr) \; I\left(s,r,n,m\right).\label{eq:mesp2}
\end{equation}

\begin{lem}\label{lem:MeasI}
	For $s\geq 1$ the integral $I(s,r,n,m)$ can be written as\\
	\begin{equation}
		I(s,r,n,m)=A_s(r,n)+m^2 B_s(r,n),
	\end{equation}
	where
	\begin{equation}
		A_s(r,n)=\frac{r^2 s (s+1) \left(-2 n^2+2 s (s+1)-1\right)}{s (s+1) (2 s-1)  (2 s+3)}
	\end{equation}
	and
	\begin{equation}
		B_s(r,n)=\frac{6 n^2 r^2-2 n r (4 s (s+1)-3)+s (s+1) \left(-2 r^2+4 s (s+1)-3\right)}{s (s+1) (2 s-1)  (2 s+3)}.
	\end{equation}
	For $s=\frac 1 2$ we have
	\begin{equation}
		I \left (\frac 1 2,r,\pm\frac 1 2,m \right )=\frac{r^2}{3}\mp\frac{r}{3}+\frac{1}{4}.
	\end{equation}
\end{lem}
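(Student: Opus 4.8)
The goal is to evaluate the angular integral
\[
I(s,r,n,m)=(2s+1)\int_0^\pi\frac{\sin\theta\,d\theta}{2}\;\bigl|\brAAket{n}{U_\theta}{m}\bigr|^2\,(r\cos\theta-m)^2,
\]
where $U_\theta=e^{-i\theta L_2}$ is the rotation carrying the north pole to polar angle $\theta$. Expanding the square gives three pieces proportional to $r^2$, $r$, and $1$, so the task reduces to computing the three moments
\[
J_c=(2s+1)\int_0^\pi\frac{\sin\theta\,d\theta}{2}\;\bigl|d^s_{nm}(\theta)\bigr|^2\,(\cos\theta)^c,\qquad c=0,1,2,
\]
where $d^s_{nm}(\theta)=\brAAket{n}{U_\theta}{m}$ is the (real) Wigner small-$d$ matrix element. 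The plan is to reduce each $J_c$ to a closed-form rational function of $s,n,m$ and then assemble $I=r^2J_2-2mr J_1+m^2 J_0$, collecting the result into the stated $A_s+m^2B_s$ form.

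The cleanest route avoids grinding through Jacobi-polynomial integrals. The key observation is that $\cos\theta$ acts as $L_3/s$ conjugated by $U_\theta$: more precisely, $\cos\theta\,\ketbra{n}{n}$ integrated against $|d^s_{nm}|^2$ is an expectation of $U_\theta^* L_3 U_\theta$ powers in the state $\ket m$. First I would write $J_c$ as a group-theoretic average: since $U_R\kettbra{n}U_R^*$ integrated over Haar measure is a multiple of the identity, and integrating only over $\theta$ (after the $\phi$-integration already carried out in \eqref{Fecov}) projects onto the rotation-invariant content, the moments $J_c$ are matrix elements of the rotationally averaged operators $\overline{(\ve\cdot\vL)^c/s^c}$. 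Concretely, $\int dR\,U_R\kettbra nU_R^*\,(R\north)_3^{\,c}$ is a polynomial in $L_3$ of degree $c$ by the Wigner–Eckart theorem, so it must be a linear combination of $\idty$, $L_3$, $L_3^2$; its coefficients are fixed by taking traces against $\idty$, $L_3$, $L_3^2$ and using the standard Casimir sums $\tr L_3^2=\sum_m m^2=\tfrac13 s(s+1)(2s+1)$ and $\tr L_3^4=\sum_m m^4$. This reduces everything to elementary sums over $m$, and the dependence on $n$ enters only through $n^2$ (degree $c=2$) and $n$ (the cross term $c=1$), which matches the structure of $A_s$ and $B_s$.

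I would then evaluate the three trace conditions explicitly. Writing the averaged operator as $a_c\idty+b_c L_3+d_c L_3^2$ and equating traces gives a $3\times3$ linear system whose right-hand sides are the moments $\sum_m m^k$ for $k=0,\dots,4$; the denominators $s(s+1)(2s-1)(2s+3)$ appearing in the Lemma are exactly the normalizations that arise from inverting this system (they are the products that occur in $\tr L_3^4$ after subtracting off the $\tr L_3^2$ contribution, i.e. the spin-2 reduced-matrix-element normalization). Plugging $\ket m$ into $a_c+b_c m+d_c m^2$ yields $J_c$ as a quadratic in $m$ with $n$-dependent coefficients, and substituting into $I=r^2J_2-2mrJ_1+m^2J_0$ and sorting by powers of $m$ produces $A_s(r,n)+m^2B_s(r,n)$; one checks that the linear-in-$m$ terms cancel, which is the nontrivial consistency check and explains why no $m^1$ term survives. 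Finally the $s=\tfrac12$ case is separate only because the factor $(2s-1)$ vanishes there, so I would treat it by direct computation: for $s=\tfrac12$ one has $|d^{1/2}_{nm}(\theta)|^2=\cos^2(\theta/2)$ or $\sin^2(\theta/2)$, the integrals are immediate, and one recovers $\tfrac{r^2}{3}\mp\tfrac r3+\tfrac14$.

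The main obstacle is bookkeeping rather than conceptual: carrying the $n$-dependence through the trace conditions correctly and verifying that the degenerate denominator $(2s-1)$ is the only reason $s=\tfrac12$ must be excluded. The Wigner–Eckart/Casimir argument guarantees the polynomial form and the vanishing of spurious terms a priori, so the only real work is computing $\sum_m m^4$ and inverting the small linear system; I expect the hardest line to be confirming that the assembled $B_s$ has exactly the quoted numerator $6n^2r^2-2nr(4s(s+1)-3)+s(s+1)(-2r^2+4s(s+1)-3)$, which I would check against the low-$s$ values already available from Sect.~\ref{sec:lows}.
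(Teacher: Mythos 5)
Your proposal is correct, and it takes a genuinely different route from the paper. The paper's proof is a special-functions computation: it expands $d^{(s)}_{nm}(\theta)$ in Jacobi polynomials, normalizes the indices $(s,m,n)$ via a symmetry substitution table, and integrates $(rx-m)^2$ term by term using the Jacobi orthogonality relation together with the three-term recurrence; the case $s=\frac12$ is split off exactly because that recurrence fails there. You instead argue representation-theoretically: writing $I=r^2J_2-2mrJ_1+m^2J_0$ and introducing $\tilde M_c:=(2s+1)\int dR\, U_R\kettbra{n}U_R^*\,(R\north\cdot\north)^c$, you expand $(\cos\theta)^c$ in Legendre polynomials $P_l=D^{(l)}_{00}$ and use Schur orthogonality of the $D$-functions (your Wigner--Eckart step) to conclude that $\tilde M_c$ is a polynomial in $L_3$ of degree $\le c$. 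This yields the form $A+m^2B$ \emph{a priori}: $J_0=1$, $J_1\propto m$ (so the cross term $-2mrJ_1$ is quadratic in $m$), and $J_2=a_2+d_2m^2$ has no linear part since $\cos^2\theta$ contains no $P_1$ component. The coefficients then come from a small linear system whose even-sector Gram determinant $(2s+1)\tr L_3^4-(\tr L_3^2)^2=\frac{(2s+1)^2}{45}\,s(s+1)(2s-1)(2s+3)$ is proportional to precisely the Lemma's denominator and is nonzero exactly for $s\geq1$ --- so your route also explains structurally why $s=\frac12$ is the only exceptional case, a fact that in the paper merely falls out of the failure of a recurrence. I checked that your scheme reproduces the stated formulas: for $s=n=1$ the trace conditions give $J_2=\frac15(1+m^2)$ and $J_1=\frac m2$, hence $I=\frac{r^2}{5}+m^2\bigl(\frac{r^2}{5}-r+1\bigr)$, in agreement with $A_1(r,1)$ and $B_1(r,1)$.

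One imprecision you should fix: the right-hand sides of your $3\times3$ system are not ``the moments $\sum_m m^k$'' --- those moments constitute the Gram (coefficient) matrix of $\{\idty,L_3,L_3^2\}$. The data of the system are the traces $\tr(\tilde M_cL_3^k)$, which still have to be evaluated as $\theta$-integrals of $\brAAket{n}{(U_\theta^*L_3U_\theta)^k}{n}$. This is elementary, since $U_\theta^*L_3U_\theta=\cos\theta\,L_3+\sin\theta\,(\text{transverse part})$ together with $\brAAket{n}{L_1}{n}=\brAAket{n}{L_2}{n}=\brAAket{n}{\{L_3,L_1\}}{n}=0$ and $\brAAket{n}{L_1^2}{n}=\brAAket{n}{L_2^2}{n}=\frac12\bigl(s(s+1)-n^2\bigr)$ reduces everything to integrals of powers of $\cos\theta$, and it is here --- not in the $m$-sums --- that the $n$-dependence enters. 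This is a bookkeeping correction, not a gap; with it, your plan goes through.
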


\begin{proof}
	Here we have to solve the integral
	\begin{equation} I\left(s,r,n,m\right)=(2s+1)
		\int\limits_{0}^{\pi}
		\frac{\sin{\theta} d\theta }{2} \; \big|d_{nm}^{(s)}(\theta) \big|^2 (r \cos \theta - m)^2 ,
	\end{equation}
	where $d_{nm}^{(s)}(\theta)=\brAAket n {U_{\theta}} m$ is the small Wigner $d$-matrix \cite{bieden}.
	First we expand $d_{nm}^{(s)}(\theta)$ in terms of the Jacobi polynomials:
	\begin{equation}
		d_{nm}^{(s)}(\theta)	=  \left [ \frac{(s+m)!(s-m)!}{(s+n)!(s-n)!} \right]^{\frac 1 2} \left (\sin \frac{\theta}{2} \right )^{m-n} \left (\cos \frac{\theta}{2} \right )^{n+m} P^{(m-n,m+n)}_{s-m}(\cos \theta).
	\end{equation}
	In the following we use a recurrence relation for the Jacobi polynomials. This three-term relation does not hold for $s=1/2$, so that we have to treat this case separately: The integrals $I\left(\frac 1 2 , r, n , \pm \frac 1 2\right)$ can indeed be  calculated directly from the above expression, and the results are given in the statement of the Lemma.  From now on we assume $s\geq 1$.

	We can simplify some case distinctions by introducing $k=\min (s+m,s-m,s+n,s-n)$ and substitute the arising positive integers $s-m,m-n,m+n$ according to Tab.~\ref{tab:wignerd}, which is possible due to the symmetries of the Wigner $d$-matrix \cite{bieden}.  Our expression then depends implicitly on $(s, m,n)$ through $(\mu,\nu,k)$:
	\begin{table}[b]
		\begin{tabular}{cccc}
			& $k$ &\; $\mu$ \;&\; $\nu$ \\
			\hline I) & \;$s+m$ \;&\;$ n-m$ &\;$ -n-m$ \\
			\hline II) & \;$s-m $\;&\;$ m-n$ &\; $n+m$ \\
			\hline III) &\;$ s+n$ \;&\;$ m-n$ &\;$ -n-m$ \\
			\hline IV) & \;$s-n $\;&\;$ n-m $&\;$ n+m$ \\
			\hline
		\end{tabular}
		\caption{Wigner-d matrix substituion. }
		\label{tab:wignerd}
	\end{table}
	\begin{equation}
	d_{nm}^{(s)}(\theta)	=\left [ \frac{k!(2s-k)!}{(k+\mu)!(k+\nu)!} \right]^{\frac 1 2} \left (\sin \frac{\theta}{2} \right )^\mu \left (\cos \frac{\theta}{2} \right )^\nu P^{(\mu,\nu)}_k(\cos \theta).
	\end{equation}
	Substituting $x=\cos \theta$ yields
	\begin{equation} I\left(s,r,n,m\right)=
		\kappa(j,k,\mu,\nu) \int\limits_{-1}^{1}dx
	\left (1-x \right )^\mu \left (1+x \right )^\nu P^{(\mu,\nu)}_k(x) P^{(\mu,\nu)}_k(x) \big (rx-m \big)^2
	\end{equation}
	where
	\begin{equation}
		\kappa(s,k,\mu,\nu)=\left [\frac{(2s+1) k!(2s-k)!}{2^{\mu+\nu+1}  (k+\mu)!(k+\nu)! }\right].
	\end{equation}
	This integral can be solved by expanding the factor $(rx-m)^2$, using the Jacobi polynomial orthogonality relation \cite{szego}
	\begin{equation}
		\int_{-1}^1 (1-z)^a (1+z)^b P_m^{(a,b)}(z) P_n^{(a,b)}(z) \, dz =\underbrace{\frac{ \left(2^{a+b+1} \Gamma (a+n+1) \Gamma (b+n+1)\right)}{n! (a+b+2 n+1) \Gamma (a+b+n+1)}}_{\omega(n,a,b)}\delta _{m,n}
	\end{equation}
	and the recurrence relation \cite{szego}
	\begin{equation}
		\begin{aligned}
			z P_n^{(a,b)}(z)=\underbrace{\frac{\left(b^2-a^2\right) }{(a+b+2 n) (a+b+2 n+2)}}_{\alpha(n,a,b)} P_n^{(a,b)}(z)
			+\underbrace{\frac{(2 (a+n) (b+n)) }{(a+b+2 n) (a+b+2 n+1)}}_{\beta(n,a,b)}P_{n-1}^{(a,b)}(z) \\
			+\underbrace{\frac{(2 (n+1) (a+b+n+1)) }{(a+b+2 n+1) (a+b+2 n+2)}}_{\gamma(n,a,b)} P_{n+1}^{(a,b)}(z).
		\end{aligned}
	\end{equation}
	Then the expressions in the Lemma arise by simplifying the corresponding polynomials:
	\begin{equation}
		\begin{aligned}
			\label{eq:integral}
			I\left(s,r,n,m\right)=&   \; \kappa(s,k,\mu,\nu)\bigg [r^2 \big (\alpha(k,\mu,\nu)^2\;\omega(k,\mu,\nu)+ \beta(k,\mu,\nu)^2 \omega(k-1,\mu,\nu)+\gamma(k,\mu,\nu)^2\;\omega(k+1,\mu,\nu) \big )  \\
			& -2rm \; \alpha(k,\mu,\nu)\;\omega(k,\mu,\nu)+    m^2 \;\omega(k,\mu,\nu) \bigg ]  =   A_s(r,n)+m^2 B_s(r,n).
		\end{aligned}
	\end{equation}
\end{proof}

We will use this Lemma to simplify the minimization over $m$. Moreover, the integral over $r$ and the sum over $n$ can be seen as taking a convex combination over two-dimensional vectors $\bigl(A_s(r,n), B_s(r,n)\bigr)$. Hence optimizing $F$ can be analyzed geometrically in terms of the set of such pairs (see Figs.~\ref{fig:messcurves}).
This is solved in the next theorem, whose results are visualized in FIG.~\ref{fig:meas}.
\begin{figure}[h]
	\includegraphics[width=0.49\textwidth]{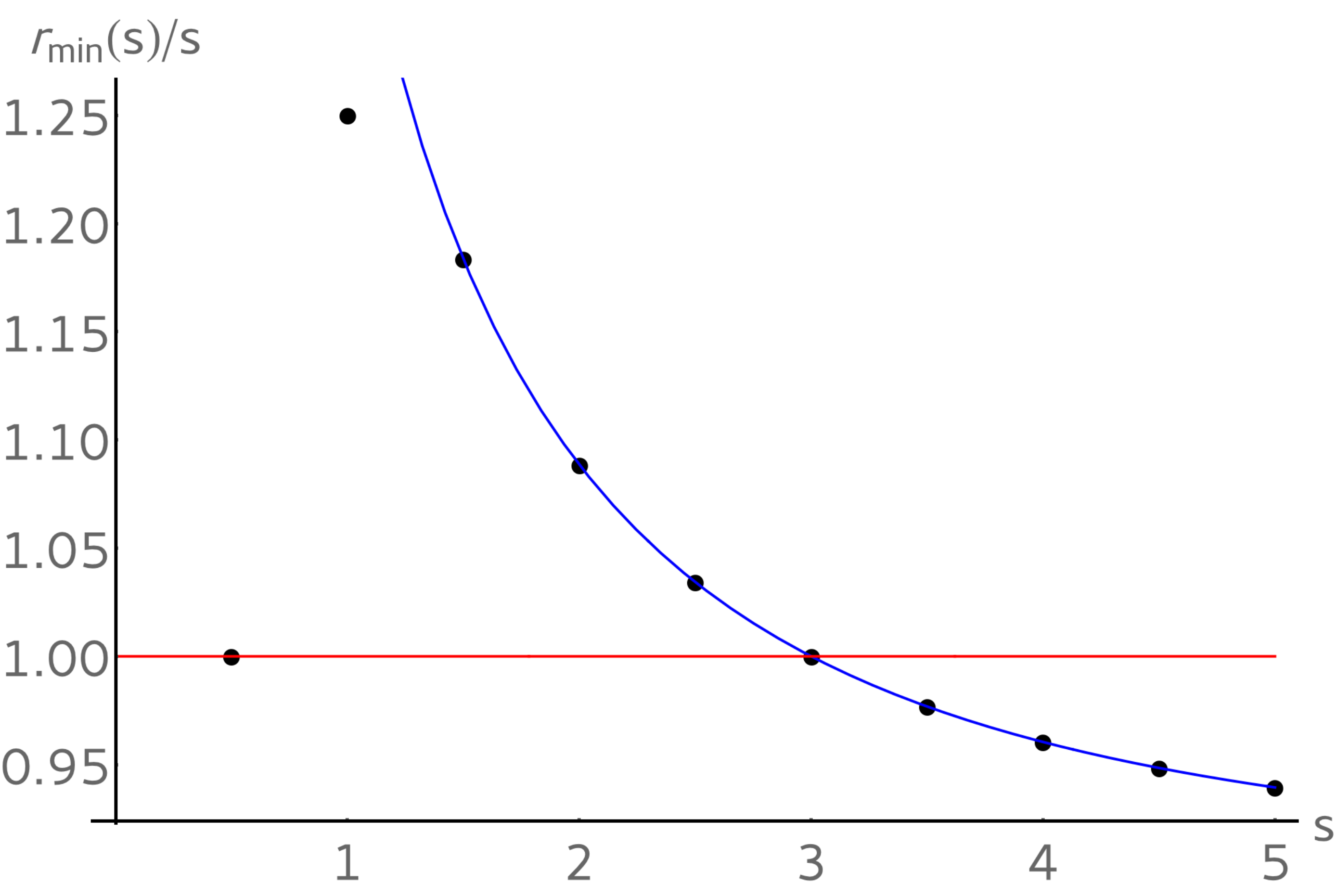} \
	\includegraphics[width=0.49\textwidth]{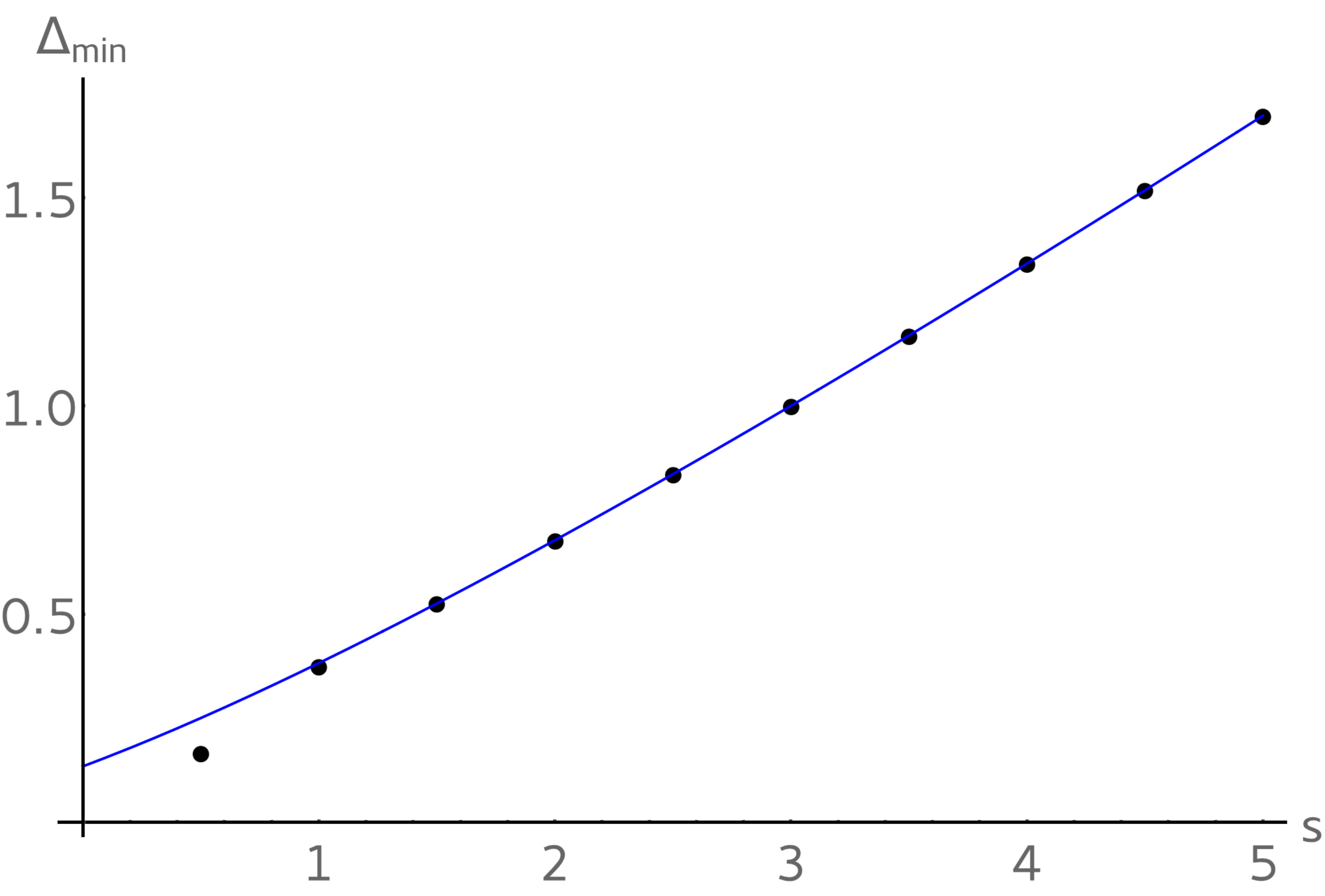}
	\caption{Optimal radii $r_{\rm min}(s)$ and measurement uncertainties $\Delta_{\rm min}$ according to Eq.~\eqref{eq:mur}. The radii are scaled by~$s$, showing that for $1\leq s\leq 5/2$ the outputs of the optimal observable are vectors of modulus $>s$. For larger $s$, the output vectors are shorter than $s$, and, after a minimum around $s=27/2$ we have $r_{\rm min}(s)/s\to1$.  In both panels, the functional expression valid for $s>1$ is plotted in blue. }
	\label{fig:meas}
\end{figure}

\begin{thm}\label{thm:mur}
	The minimal measurement uncertainty $\Delta_{\rm min}(s)$ in the sense of \eqref{measuncert} is attained at a unique covariant observable, for which $F_n(dr)$ is a point measure at $n=s$ and $r=r_{\rm min}$,
	with
	\begin{equation}\label{eq:mur}
		r_{\rm min}(s)=\Cases{1/2&\\ 5/4&\\ \left(2 s-\sqrt{2 s+3}+3\right)/2&} \quad\text{and}\quad
		\Delta_{\rm min}(s)=\Cases{1/6&\strut\qquad s=1/2\\
		3/8&\qquad s=1\\ \left(s-\sqrt{2 s+3}+2\right)/2&\strut\qquad s>1.}
	\end{equation}
	Except for $s=1$, the maximum over $m$ in \eqref{eq:mesp2} is trivial for the optimal observable, i.e., the calibration error is the same for all calibration inputs $m$.
\end{thm}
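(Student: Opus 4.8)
The plan is to turn the theorem into a finite, essentially two-dimensional optimization and then solve it by elementary calculus, with the one genuinely delicate point being the selection of the weight $n=s$. By Lemma~\ref{lem:CovSuffices} it suffices to minimize over covariant observables, which by \eqref{Fecov} are parametrized by a probability measure $F_n(dr)$ on $\{-s,\dots,s\}\times\Rl_+$, and by Lemma~\ref{lem:postproc} the metric and calibration errors then coincide, so the object to minimize is the figure of merit \eqref{eq:mesp2}. Inserting Lemma~\ref{lem:MeasI} and writing $\bar A=\sum_n\int F_n(dr)\,A_s(r,n)$, $\bar B=\sum_n\int F_n(dr)\,B_s(r,n)$, the quantity becomes $\max_m(\bar A+m^2\bar B)=\bar A+\phi(\bar B)$, where $\phi(\beta)=s^2\beta$ for $\beta\ge0$ and $\phi(\beta)=m_0^2\beta$ for $\beta<0$, with $m_0=0$ for integer $s$ and $m_0=\tfrac12$ for half-integer $s$. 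Since $s^2>m_0^2$ for $s>\tfrac12$, the slope of $\phi$ jumps upward at $0$, so $\phi$, and hence $\Phi(\bar A,\bar B):=\bar A+\phi(\bar B)$, is convex.

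First I would reformulate this geometrically, as announced before the theorem: as $F_n(dr)$ ranges over all probability measures, the pair $(\bar A,\bar B)$ fills the convex hull $\mathcal C$ of the planar set $\{(A_s(r,n),B_s(r,n)):n,r\}$, and we must minimize the convex function $\Phi$ over $\mathcal C$. Because $\Phi$ is separable and strictly increasing in $\bar A$, its minimum lies on the lower boundary $a(\beta)=\min\{\bar A:(\bar A,\beta)\in\mathcal C\}$, reducing the task to the one-dimensional convex minimization $\min_\beta\bigl(a(\beta)+\phi(\beta)\bigr)$. The key structural claim is that this boundary is traced by the single curve $n=s$: since $A_s(r,n)=r^2\,(2s(s+1)-1-2n^2)/((2s-1)(2s+3))$ is strictly decreasing in $n^2$, the value $|n|=s$ minimizes $\bar A$ at fixed $r$, and the sign of the $n$-linear term of $B_s$ singles out $n=s$ over $n=-s$. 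Rather than describe all of $\partial\mathcal C$, I would prove optimality by the certificate already implicit in \eqref{OmVee} and Sect.~\ref{sec:method}: exhibit a weight $w_2$ in the subdifferential slope-interval $[m_0^2,s^2]$ of $\phi$ at the optimum such that $(n,r)\mapsto A_s(r,n)+w_2B_s(r,n)$ is globally minimized at the candidate point $(s,r_{\rm min})$; this one supporting functional bounds every convex combination from below and rules out all other $(n,r)$.

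Granting the reduction to $n=s$, the remaining optimization is routine. Using $4s^2+4s-3=(2s-1)(2s+3)$ one simplifies $A_s(r,s)=r^2/(2s+3)$ and $B_s(r,s)=\bigl(2r^2-2(2s+3)r+(s+1)(2s+3)\bigr)/\bigl((s+1)(2s+3)\bigr)$, both quadratic in $r$. The function $B_s(\cdot,s)$ is an upward parabola vanishing at $r_\pm=\bigl(2s+3\pm\sqrt{2s+3}\bigr)/2$ and negative between the roots, while on the branch $\bar B\ge0$ the objective $A_s(r,s)+s^2B_s(r,s)$ has unconstrained minimizer $r^\ast=s^2(2s+3)/(2s^2+s+1)$. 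For $s>1$ one checks $r^\ast\ge r_-$, so the constrained minimum sits at the kink $r_{\rm min}=r_-=\bigl(2s+3-\sqrt{2s+3}\bigr)/2$, where $B_s(r_{\rm min},s)=0$; this yields $\Delta_{\rm min}(s)=A_s(r_{\rm min},s)=(\sqrt{2s+3}-1)^2/4=(s+2-\sqrt{2s+3})/2$ and, since $\bar B=0$, the stated $m$-independence. For $s=1$ instead $r^\ast=5/4<r_-$, so the minimizer is the interior point $r_{\rm min}=5/4$ with $B_s(5/4,1)=1/16>0$, whence $\Delta_{\rm min}(1)=A_s(5/4,1)+B_s(5/4,1)=3/8$ and the maximum over $m$ is attained at $m=\pm1$. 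For $s=\tfrac12$ the three-term recurrence fails and Lemma~\ref{lem:MeasI} supplies $I(\tfrac12,r,\tfrac12,m)=r^2/3-r/3+\tfrac14$, already $m$-independent; minimizing this quadratic gives $r_{\rm min}=1/2$ and $\Delta_{\rm min}=1/6$.

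Finally, uniqueness follows because $a(\beta)+\phi(\beta)$ is strictly convex at its minimizer (the parabolas $A_s(\cdot,s),B_s(\cdot,s)$ are nondegenerate), because for $|n|<s$ the strict inequality $A_s(r,n)>A_s(r,s)$ rules out those weights, and because the $n$-linear term of $B_s$ strictly separates $n=s$ from $n=-s$ at the optimum; hence the optimal $F_n(dr)$ must be the point mass at $(s,r_{\rm min})$. I expect the main obstacle to be the structural claim of the second paragraph — that the curve $n=s$ (equivalently the coherent-state density $\kettbra s$) realizes the relevant part of $\partial\mathcal C$ and cannot be beaten by any other $n$ or by any mixture. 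Everything after that is nondegenerate quadratic calculus, but producing the single weight $w_2\in[m_0^2,s^2]$ that certifies global optimality over the whole two-parameter family $(n,r)$, uniformly in $s$, is where the real work lies.
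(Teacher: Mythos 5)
Your setup and every explicit computation you do carry out are correct, and your route is the same as the paper's: reduction to covariant observables via Lemma~\ref{lem:CovSuffices}, identification of metric and calibration error via Lemma~\ref{lem:postproc}, the convex set $\Omega=\conv\{(A_s(r,n),B_s(r,n))\}$, the piecewise-linear convex functional (the paper's $K$, your $\Phi$), the simplifications $A_s(r,s)=r^2/(2s+3)$ and $B_s(r,s)=\bigl(2r^2-2(2s+3)r+(s+1)(2s+3)\bigr)/\bigl((s+1)(2s+3)\bigr)$, the roots $r_\pm$, the comparison of $r^\ast=s^2(2s+3)/(2s^2+s+1)$ with $r_-$ that splits $s>1$ from $s=1$, and the direct $s=\tfrac12$ case. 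But what this establishes is only that $(s,r_{\rm min})$ is optimal \emph{within the one-parameter family of point measures on the curve $n=s$}. The theorem asserts optimality and uniqueness over all of $\Omega$, and the step that delivers this --- producing a weight $w_2$ in the subdifferential interval and proving that $(n,r)\mapsto A_s(r,n)+w_2B_s(r,n)$ is globally minimized at $(s,r_{\rm min})$, with equality nowhere else --- is exactly what you defer as ``where the real work lies.'' That verification is not an afterthought; it is the bulk of the paper's proof: the tangent line at $\mathbf v$ with explicit normal $\mathbf u$, the function $g_s(n,r)$ of \eqref{subset}, the proof that $\partial_r^2 g_s(n,r)\geq0$ (which needs one estimate for $s\geq3$ and a separate check for $1\leq s<3$, cf.\ \eqref{posi}), the positivity $g_s(n,r_n)\geq0$ at the stationary points \eqref{eq:label1}, and finally the slope conditions ensuring $K_{\mathbf v}\cap\Phi=\{\mathbf v\}$ (your condition $w_2\in(m_0^2,s^2)$). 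Without this, the theorem is unproved.

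Moreover, the heuristic you offer for the structural claim cannot be upgraded to fill that hole. It is true that $A_s(r,n)$ decreases in $n^2$ at fixed $r$, and that the $n$-linear term of $B_s$ prefers $n=s$ to $n=-s$; but the $n^2$-coefficient of $B_s$ is $+6n^2r^2$, so $|n|=s$ makes $B$ \emph{larger}, not smaller. There is no pointwise dominance of the $n=s$ curve over the curves with $|n|<s$ (they genuinely interleave in the $(a,b)$-plane, as in the left panel of FIG.~\ref{fig:messcurves}), so the claim that $n=s$ traces the relevant part of the lower boundary $a(\beta)$ is exactly as hard as the two-dimensional separation argument you skipped --- it cannot follow from monotonicity in $n$ alone. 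The same gap propagates to your uniqueness claim (which needs the strict inequality $g_s(n,r)>0$ off the optimum) and to the $s=1$ case, where after relocating the tangent one must still verify that the new level set touches $\Omega$ only at $\mathbf v=(5/16,1/16)$. In short: right strategy, correct candidate, correct arithmetic, but the certificate whose existence is the actual content of the theorem is asserted rather than constructed.
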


\begin{proof}
	We consider first the case $s\geq1$. For arbitrary $s\geq1$ we reformulate the problem using that $\Delta(F^3,L_3)^2$ is a convex combination of the functions $A_s(r,n)$ and $B_s(r,n)$. Here we must find the best $n$ as well as the probability distribution $F_n(dr)$ for the \textit{worst} $m$. We denote the convex set of all possible combinations by $\Omega \in \Rl^2$, i.e.
		\begin{equation}
			\begin{aligned}
			\Omega:&= \left\{ (a,b) \;\bigg \vert \;    a= \int F_n(dr) \; A_s(r,n) \text{ and } b= \int F_n(dr) \; B_s(r,n) \right\} \\
			&= \conv\Bigl\{\bigl( A_s(r,n),B_s(r,n)\bigr)\Bigm| r>0,n=-s,\ldots,s\Bigr\}.
			\end{aligned}
		\end{equation}
	All information about a possible observable is now contained in $(a,b)\in \Omega$.
	Furthermore a maximum over $m$ is part of the definition of $\Delta^2_c(F^3,L_3)^2$, so we can rewrite it as a functional on $\Omega$:
	\begin{equation}
		K: \Omega \to \Rl \quad K(a,b)=
		\begin{cases}
			a + s^2 b & \text{if }  b > 0 \\
			a \;         & \text{if } b \leq 0 \text{ and } s \in \Nl \\
			a \;  + \frac 1 4 b       & \text{if } b \leq 0 \text{ and } s + \frac 1 2 \in \Nl
		\end{cases}.
	\end{equation}
	The problem is now to minimize the functional $K(a,b)$. Since, for general $n$ and $s$, $\Omega$ is hard to describe, we choose the following strategy, which is illustrated in the left panel of FIG.~\ref{fig:messcurves}.

	We will show that, for $s>1$,  $K$ takes its minimum at
	\begin{equation}\label{vmin}
		\mathbf{v}= (A_s(r_{\rm min}(s),s),0)
		= \left( \frac{1}{2} \left(s-\sqrt{2 s+3}+2\right),0\right )
	\end{equation}
	by constructing a line $\phi$ which separates the set $\Omega$ from the convex level set
	$	K_\mathbf{v}:=\lbrace (a,b) \in \mathbb{R}^2 \; | \; K(a,b) \leq  K(\mathbf{v})\rbrace$.\\
	For this we will take the line $\phi$ to be the tangent of the curve $r\mapsto(A(r,s),B(r,s))$ at the point $\mathbf{v}$.
	The normal $\mathbf{u}$ of $\phi$ is
	\begin{equation}
		\mathbf{u} =\left (\frac{2 \sqrt{2 s+3}}{2 s^2+5 s+3},\frac{2 s-\sqrt{2 s+3}+3}{2 s+3}\right )
	\end{equation}
	and
	$\Phi:=\{\mathbf{x} \in\Rl^2 \; |\; \mathbf{x}\cdot\mathbf{u}\geq \mathbf{v} \cdot \mathbf{u} \}$
	is the half plane above $\phi$.

	Now we show that $\Omega \subset \Phi$, i.e.,
	\begin{equation}
		g_s(n,r):=
		(A_s(r,n),B_s(r,n)) \cdot \mathbf{u} - (A_s(r_{\rm min}(s),s)),0) \cdot \mathbf{u} \geq 0,
		\label{subset}
	\end{equation}
	with equality iff $n=s$ and $r=r_{\rm min}(s)$.
	Note that the function $g_s(n,r)$ is quadratic in $r$, so for verifying \eqref{subset} it is sufficient to show that $\partial_r^2 \;g_s(n,r)\geq0$ and $g_s(s,r_n)\geq0$, where $r_n$ is the stationary point determined by $\partial_r\;g_s(n,r_n)=0$.\\
	Indeed we have
	\begin{align}
		\partial_r^2 \;  g_s(n,r)
		&= 4 s^2 \sqrt{2 s+3} -4 s^2 + 12 n^2 - 4 n^2 \sqrt{2 s+3} - 4 s
		\nonumber\\
		&=4\bigl( (s^2-n^2) (\sqrt{2 s+3} - 3) + 2 s^2 - s \bigr)
		\label{posi}.
	\end{align}
	Now for $s\geq3$ the factor with the square root is positive, and since $\abs n<s$ we can estimate the expression \eqref{posi} as $\geq2s^2-s\geq0$. For $1\leq s\geq3$, the minimum with respect to $n$ is assumed at $n=0$, for which \eqref{posi} can be evaluated explicitly, and shown to be positive.

	The stationary points $r_n$ can be straightforwardly computed as
	\begin{align}
		r_n=\frac{n (2 s-1) \left(-2 s+\sqrt{2 s+3}-3\right)}{2 \left(n^2 \left(\sqrt{2 s+3}-3\right)+s \left(-\sqrt{2 s+3} s+s+1\right)\right)}
	\end{align}
	with
	\begin{align}
		g_s(n,r_n) =\frac{s^2-n^2} {s \left(s\sqrt{2 s+3}-s-1\right) - n^2 \left(\sqrt{2 s+3}-3\right) }.\label{eq:label1}
	\end{align}
	The denominator of $g_s(n,r_n)$ in \eqref{eq:label1} is again the quadratic coefficient of the expression \eqref{subset}, i.e.,  \eqref{posi} which was already shown to be positive. Hence $\Omega\subset\Phi$.

	Finally we have to certify that
	$K_\mathbf{v}\cap \Phi=\lbrace\mathbf{v}\rbrace$.
	Using the gradient of $\phi$ and comparing it to the linear boundaries of $K_{\mathbf{v}}$ we get the conditions
	\begin{align}
		-4 < - \frac{1+\sqrt{2s+3}}{(1+s)^2} < - \frac 1 {s^2}\quad & \text{for }s+\frac 1 2 \in \Nl
		\nonumber\\
		- \infty < - \frac{1+\sqrt{2s+3}}{(1+s)^2} < - \frac 1 {s^2} \quad &\text{for } s \in \Nl
	\end{align}
	which are true for $s>1$. This concludes the proof for $s>1$.
	\begin{figure}[t] \centering
		\includegraphics[width=0.35\textwidth]{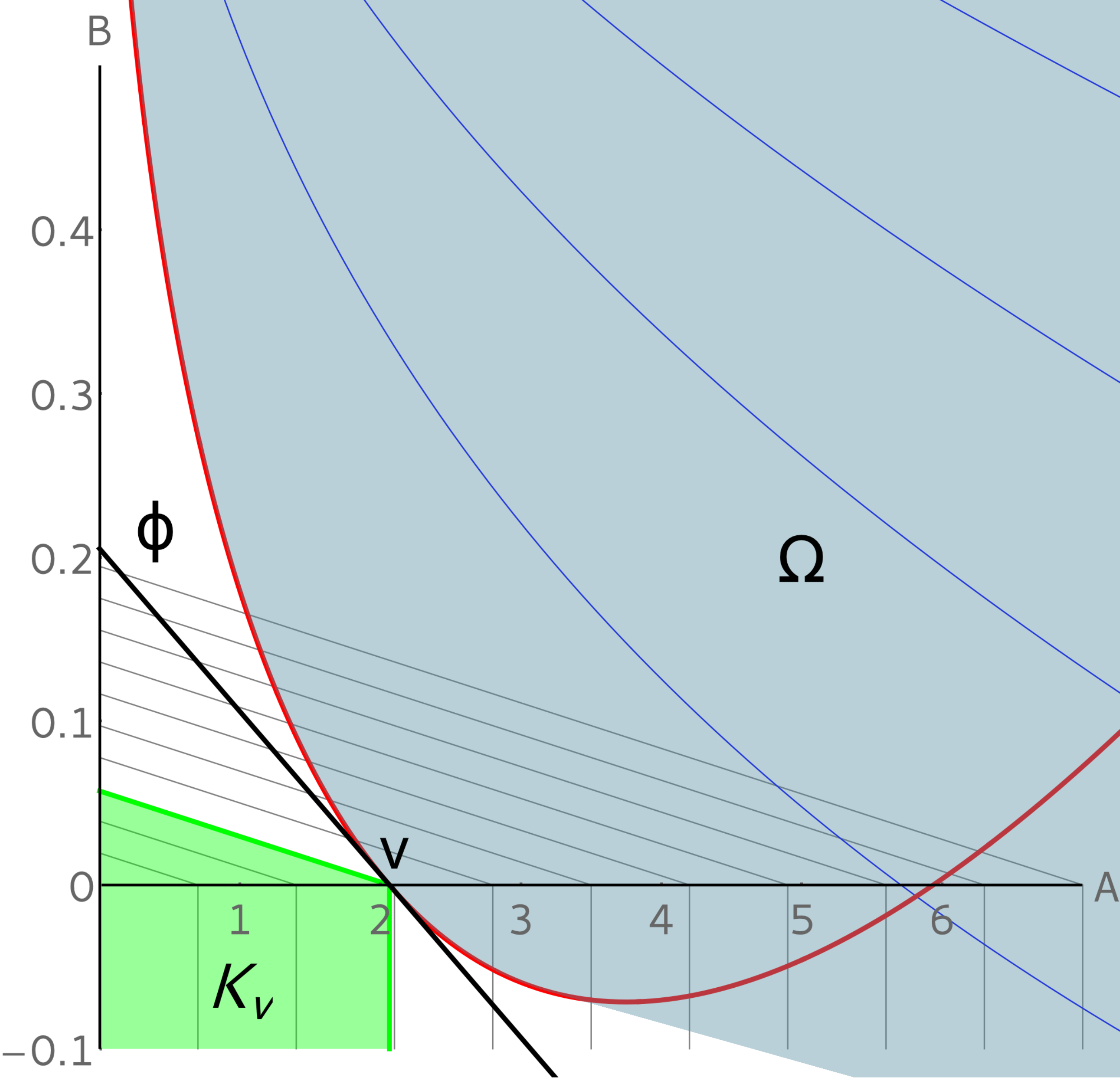} \; \;
		\includegraphics[width=0.336\textwidth]{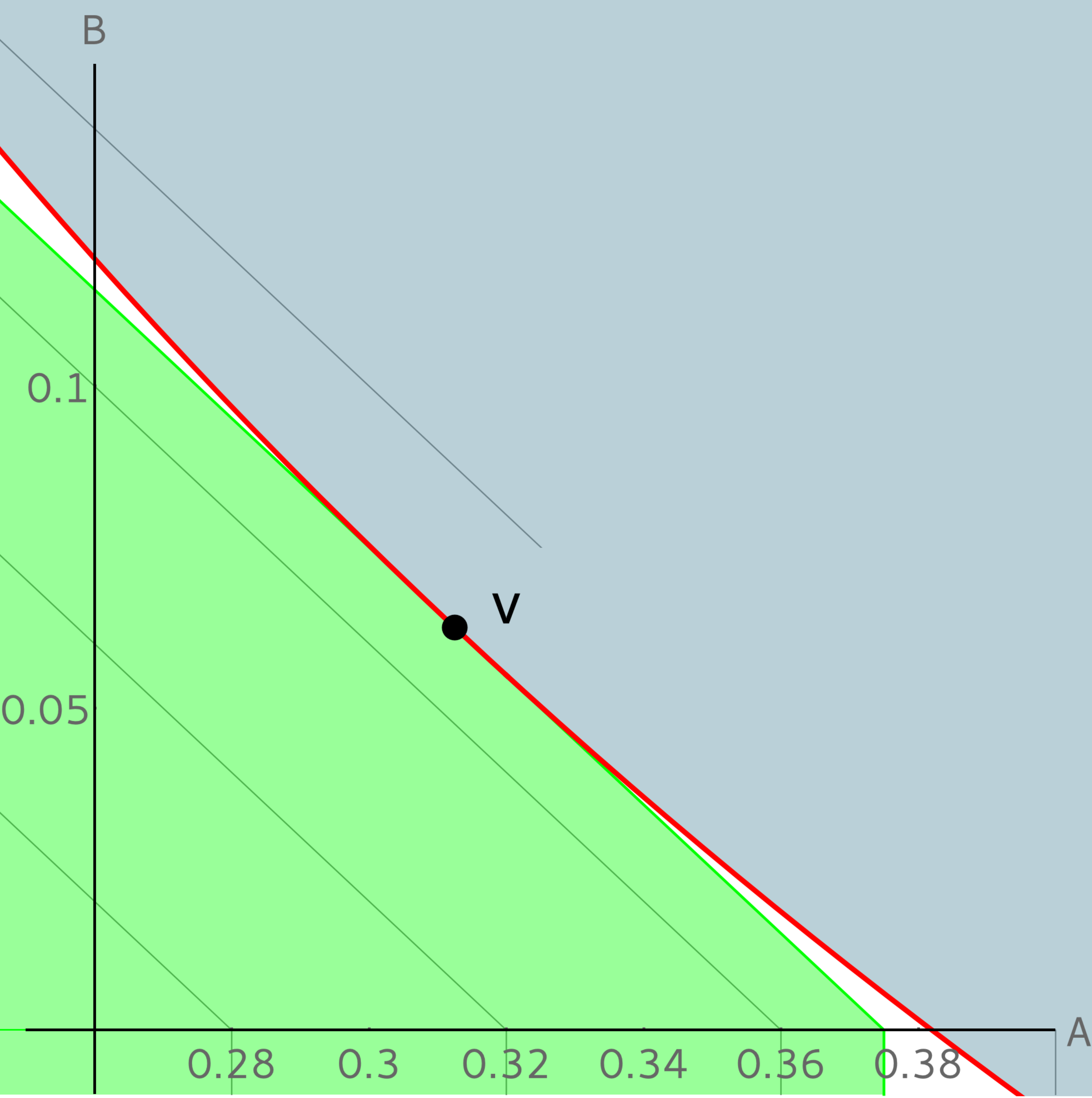}
		\caption{Left: Example for $s=6$ and the $n=s$ curve marked in red. Right: Intersection of $K_{\mathbf{v}}$ with $\phi$ in the $s=1$ case. }
		\label{fig:messcurves}
	\end{figure}

	For $s=1$, this last step of the above proof fails. Indeed, the level set $K_\mathbf{v}$ which was determined by taking that point $v$ on the horizontal axis which is also on the boundary of $\Omega$ does intersect $\Omega$, as can be seen in FIG.~\ref{fig:messcurves}. We therefore have to take a level set of $K$ for a slightly smaller value. Since the tangents of the level sets are all the same for $b>0$, we can readily find the level set which is tangent to $\Omega$. This gives the optimal radius $r_{\rm min}(s=1)=5/4$, and the
	\begin{equation}\label{vmin1}
		\mathbf{v}= (A_s(r_{\rm min}(s),s),B_s(r_{\rm min}(s),s))=(5/16,1/16).
	\end{equation}
	Analogous to the above arguments, one verifies easily that $\{\mathbf{v}\}=K(\mathbf{v})\cap\Omega$.

	Finally, for $s=1/2$ one can draw the conclusion directly from the form of $I(1/2,r,n,m)$ given in Lemma~\ref{lem:MeasI}: This expression does not depend on $m$, and has a unique global minimum at
	$r_{\rm min}(1/2)=1/2$ and $n=+1/2$, where the optimal probability measure $F$ must therefore be concentrated.

	In all cases, the optimal value $\Delta_{\rm min}(s)$ is computed by substituting the obtained optimal $r_{\rm min}(s)$ and $n=s$ in \eqref{eq:mesp2}.
\end{proof}


\section{Conclusions and Outlook}
Uncertainty relations can be built for any collection of observables. In this paper we provided some methods, which work in a general setting, but chiefly looked at angular momentum as one of the paradigmatic cases of non-commutativity in quantum mechanics.

The basic mathematical methods are well-developed for the case of preparation uncertainty, so that even in a general case the optimal tradeoff curves can be generated efficiently. We resorted to numerics quite often, since it turns out that the salient optimization problems can rarely be solved analytically for general $s$. One of the features one might hope to settle analytically in the future is the asymptotic estimate $c_2(s)\propto s^{2/3}$ which comes out with a precision that suggests an exact result.

Much is left to be done for entropic uncertainty. Here we gave only some basic comparisons to the variance case. It would be interesting to see whether the entropic relations can be refined to the point that they can be used to derive sharp variance inequalities as Hirschman did in the phase space case \cite{Hirschman}.

For measurement uncertainty the general situation is not so favourable, perhaps due to the much more recent introduction of the subject. At this point we know of no efficient way to derive sharp bounds for generic pairs of observables. Nevertheless, we were able to treat the case of a joint measurement of all components in arbitrary directions, because in this case rotational symmetry is not broken and leads to considerable simplification. One of these simplifications is the observation that the two basic error criteria, namely metric uncertainty and calibration error lead to the same results. This was already familiar from the phase space case. However, a further simplification one might have expected from this analogy definitely does not hold: There seems to be no quantitative link between preparation and measurement uncertainty for angular momentum. Further research will show whether useful general connections between the two faces of the uncertainty coin can be established.

The limit large $s\to\infty$ can be understood as a mean field limit \cite{RW89}, when the spin-$s$ representation is  considered as $2s$ copies of a spin-$1/2$ system in a symmetric state. We can also see this as a classical limit $\hbar\to0$ \cite{WWolff}, in the sense that the angular momentum in physical units, i.e., $\hbar$ is fixed, and hence the dimensionless half-integral representation parameter $s$ has to diverge. This offers a way to treat not just the uncertainty aspects of this limit, but also the limit of the whole theory of angular momentum.


\section*{Acknowledgements}
The authors thank Ashley Milsted, Ciara Morgan and David Reeb for critically reading our manuscript.
We also acknowledge the financial support from the ERC grant DQSIM, RTG1991 funded by the DFG and the collaborative research project Q.com-Q funded by the BMBF.

We acknowledge support by Deutsche Forschungsgemeinschaft and the Open Access Publishing Fund of Leibniz Universität Hannover.

\bibliography{Uram}

\newpage
\quad
\thispagestyle{empty}
\newpage
\thispagestyle{empty}
\includegraphics[width=\textwidth]{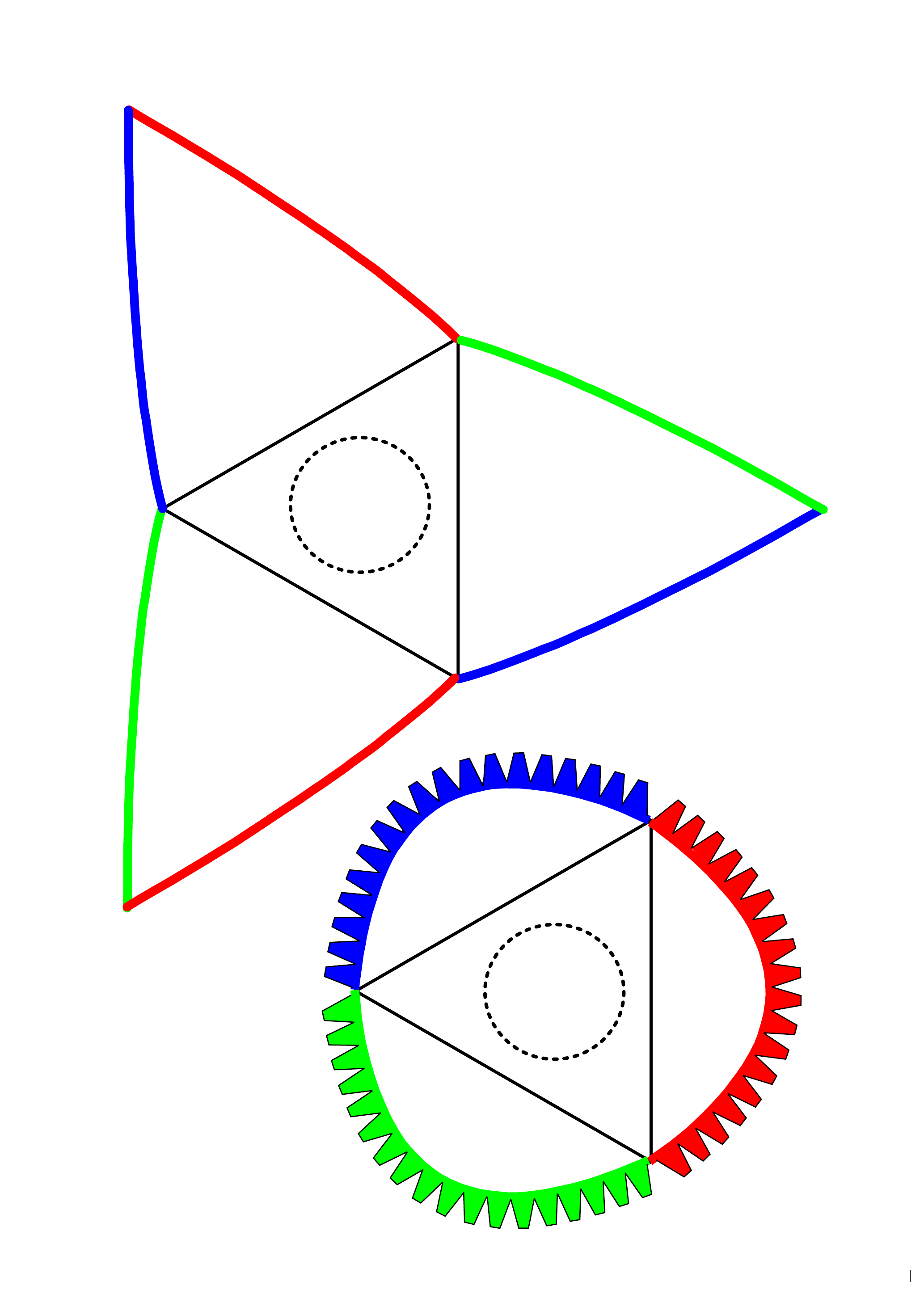}
\newpage
\thispagestyle{empty}
\quad
\newpage
\thispagestyle{empty}
\includegraphics[width=\textwidth]{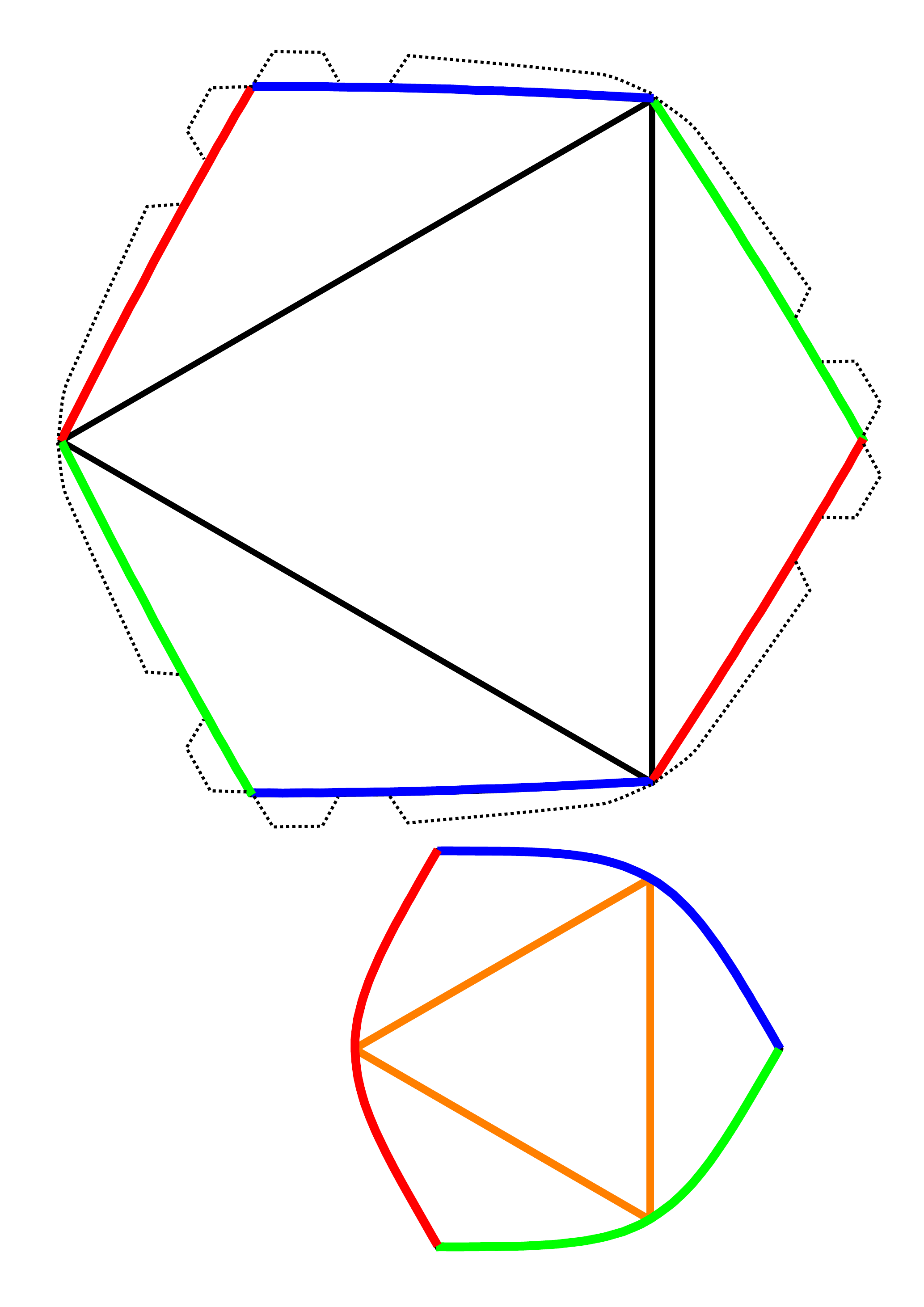}
\thispagestyle{empty}

\end{document}